\newcommand{\rnote}[1]{\textcolor{red}{\footnotesize{\bf (Rotem:} {#1}{\bf ) }}}
\newcommand{\enote}[1]{\textcolor{cyan}{\footnotesize{\bf (Efrat:} {#1}{\bf ) }}}
\newcommand{\inote}[1]{\textcolor{magenta}{\footnotesize{\bf (Ilya:} {#1}{\bf ) }}}
\newcommand{\rnote}[1]{}
\newcommand{\enote}[1]{}
\newcommand{\inote}[1]{}
\newmdenv[
  skipabove=10pt,
  skipbelow=10pt,
  linewidth=0.8pt,
  linecolor=black,
  roundcorner=4pt,
  backgroundcolor=gray!10,
  innertopmargin=6pt,
  innerbottommargin=6pt,
  innerleftmargin=10pt,
  innerrightmargin=10pt 
]{program}
\newmdenv[
  skipabove=10pt,
  skipbelow=10pt,
  linewidth=0.8pt,
  linecolor=black,
  roundcorner=4pt,
  backgroundcolor=gray!10,
  innertopmargin=6pt,
  innerbottommargin=6pt,
  innerleftmargin=10pt,
  innerrightmargin=10pt
]{protocol}
\numberwithin{equation}{section}
\newtheorem{theorem}{Theorem}[section]
\newtheorem{lemma}[theorem]{Lemma}
\newtheorem{claim}[theorem]{Claim}
\theoremstyle{definition}
\newtheorem{definition}[theorem]{Definition}
\theoremstyle{definition}
\theoremstyle{remark}
\newtheorem*{remark}{Remark}
\newcommand{\N}{\mathbb{N}}
\newcommand{\E}{\mathbb{E}}
\newcommand{\A}{\mathcal{A}}
\newcommand{\I}{\mathcal{I}}
\newcommand{\V}{\mathcal{V}}
\newcommand{\indicator}[1]{\mathbbm{1}_{\left(#1\right)}}
\newcommand{\id}{\mathbbm{I}}
\newcommand{\Ia}{\mathcal{X}}
\newcommand{\Ib}{\mathcal{Y}}
\newcommand{\Oa}{\mathcal{A}}
\newcommand{\Ob}{\mathcal{B}}
\newcommand{\ia}{{x}}
\newcommand{\ib}{{y}}
\newcommand{\oa}{{a}}
\newcommand{\ob}{{b}}
\newcommand{\imap}{\xi}
\newcommand{\omap}{\alpha}
\newcommand{\bellmap}{(\imap,\omap)}
\newcommand{\flexibility}{\vartheta}
\newcommand{\state}{\psi}
\newcommand{\noneg}{\delta}
\newcommand{\Trans}{\mathcal{T}} 
\newcommand{\trans}{\tau} 
\newcommand{\hp}{\gamma}
\newcommand{\Hp}{\Gamma}
\newcommand{\hpDensity}{g}
\newcommand{\preimage}{w}
\newcommand{\image}{z}
\newcommand{\Preimage}{\MakeUppercase{\preimage}}
\newcommand{\Image}{\MakeUppercase{\image}}
\newcommand{\Gen}{\mathsf{Gen}}
\newcommand{\Enc}{\mathsf{Enc}}
\newcommand{\Eval}{\mathsf{Eval}}
\newcommand{\Dec}{\mathsf{Dec}}
\newcommand{\bellscenario}{\Ia,\Ib,\Oa,\Ob}
\newcommand{\bellscenarioInteger}{|\Ia|,|\Ib|,|\Oa|,|\Ob|}
\newcommand{\bellscenarioDistSet}{\Oa\times\Ob\times\Ia\times\Ib}
\newcommand{\playerA}{A}
\newcommand{\playerB}{B}
\newcommand{\compLeakage}{{\kappa}}
\renewcommand{\P}{\mathcal{P}}
\renewcommand\Pr{\mathrm{Pr}}
\newcommand{\bellscenarioShort}{\mathfrak{B}}
\newcommand{\bellInequality}{\mathcal{I}}
\newcommand{\negl}{\mathrm{negl}}
\newcommand{\security}{\lambda}
\newcommand{\mdl}[1]{\mathscr{L}^{\text{M}}_{#1}}
\newcommand{\amdl}[1]{\mathscr{L}^{\text{A}}_{#1}}
\newcommand{\local}{\mathscr{L}}
\newcommand{\quantum}{\mathscr{Q}}
\newcommand{\uc}{\mathscr{P}}
\newcommand{\ns}{\mathscr{N}}
\newcommand{\localcomp}[1]{\local_{#1}^{\mathrm{comp}}}
\newcommand{\quantumcomp}[1]{\quantum_{#1}^{\mathrm{comp}}}
\newcommand{\mdlparamsExplicit}
{    
    l_\flexibility \;=\; \frac{1}{|\mathcal{Y}|}\!\left(1-(|\mathcal{X}|-1)\!\left(\frac{1}{|\mathcal{X}|}+\kappa+\vartheta\right)\!\right) \; ,
    \qquad
    h_\flexibility \;=\; \frac{1}{|\mathcal{Y}|}\!\left(\frac{1}{|\mathcal{X}|}+\kappa+\vartheta\right)  
}
\newcommand{\mdlparams}{l_\flexibility,h_\flexibility}
\newcommand{\qpt}{\mathrm{QPT}}
\newcommand{\cnpal}{\mathrm{CNPA}_\ell}
\newcommand{\accept}{\mathrm{acc}}
\newcommand{\reject}{\mathrm{rej}}
\newcommand{\continue}{\mathrm{cont}}
\newcommand{\flag}{\mathrm{flag}}
\newcommand{\QHE}{\mathtt{QHE}}
\newcommand{\QHEGen}{\mathtt{Gen}}
\newcommand{\QHEEnc}{\mathtt{Enc}}
\newcommand{\QHEEval}{\mathtt{Eval}}
\newcommand{\QHEDec}{\mathtt{Dec}}
\newcommand{\provermeasurement}{B}
\newcommand{\meas}[2]{\provermeasurement_{#1}^{(#2)}}
\newcommand{\measby}{\meas{\ib}{\ob}}
\newcommand{\povm}[1]{\{ \meas{#1}{\ob} : \ob \in \Ob \}}
\newcommand{\povms}{\{ \povm{\ib} : \ib \in \Ib \}}
\newcommand{\polynomialB}{\pi}
\newcommand{\compSignaling}{\kappa_{\mathtt{S}}}
\newcommand{\plrc}{{P_\mathrm{LRC,\security}}}  
\newcommand{\pcomp}{{P_\mathrm{Bell}}}  
\renewcommand{\tr}{\mathrm{tr}}
\definecolor{magenta}{RGB}{255, 0, 255}
\renewcommand{\O}{\mathcal{O}}
\begin{document}

{
    \title{Computational Bell Inequalities}
    \author{
        Ilya Merkulov\thanks{Email: \texttt{ilya.merkulov@weizmann.ac.il}} \quad and \quad
        Rotem Arnon\thanks{Email: \texttt{rotem.arn@weizmann.ac.il}} \\
        \normalsize The Center for Quantum Science and Technology, \\
        \normalsize Weizmann Institute of Science, Rehovot, Israel
    }
    \date{\today}
    \maketitle
}

\begin{abstract}

    We introduce a systematic approach for analyzing device-independent single-prover interactive protocols under computational assumptions. This is done by establishing an explicit correspondence with Bell inequalities and nonlocal games and constructing a \emph{computational space of correlations}.
    We show how computational assumptions are converted to \emph{computational Bell inequalities}, in their rigorous mathematical sense---a hyperplane that separates the sets of classical and quantum verifier-prover interactions.
    We reveal precisely how the nonsignaling assumption in standard device-independent setups interchanges with the computational challenge of learning a hidden input (that we define). 

    We further utilize our fundamental results to study explicit protocols using the new perspective.
    We take advantage of modular tools for studying nonlocality, deriving tighter Tsirelson bounds for single-prover protocols and bounding the entropy generated in the interaction, improving on previous results.
    Our work thus establishes a modular approach to analyzing single-prover quantum certification protocols based on computational assumptions through the fundamental lens of Bell inequalities, removing many layers of technical overhead.
    
    The link that we draw between single-prover protocols and Bell inequalities goes far beyond the spread intuitive understanding or known results about ``compiled nonlocal games''; Notably, it captures the exact way in which the correspondence between computational assumptions and locality should be understood also in protocols based on, e.g., trapdoor claw-free functions (in which there is no clear underlying nonlocal game). 

\end{abstract}

\newpage
\tableofcontents

\newpage
\section{Introduction}

    Quantum technology paves the way for stronger forms of computation, communication and cryptography.
    With this development, a central question arises~-- how can we verify that the quantum devices used in the above mentioned tasks are actually doing what we want them to do? 
    A simple setup to have in mind is the following: a client with a classical computer (e.g., our current laptops) connects to a server to execute a computation on a quantum computer.
    The server returns the result of the quantum computation and now the client, who has only classical means, wishes to verify that the result of the computation is correct.
    Another example in the realm of cryptography is a setup in which a client interacts classically with a quantum server in order to produce a secret random string~-- a key for further cryptographic applications.
    After producing the alleged key, the client wants to be sure that the key is safe to use and that no one else knows the key.
    
    Such processes exemplify what is known as ``classical verification of quantum device'' in the literature.
    The aim is to verify a property of an uncharacterized quantum device or its result without needing knowledge, nor trust, in its internal workings.

    \subsection*{Bell inequalities}

    The field of verification of quantum devices using classical means dates back to the 60's with the transformative work of Bell~\cite{bell64}.
    The observation made was that when taking two spatially separated devices and using them to ``play a game'', quantum devices that share entanglement can win the game with a probability strictly larger than that of \emph{any} two classical devices.
    Thus, observing a winning probability above the optimal classical one acts as a certificate for the ``quantumness'' of the behavior exhibited by the devices.\footnote{More precisely, the certificate rules out local behavior.}
    A prominent class of tests that certify non-classical correlations are \emph{nonlocal games}~\cite{brunner2014bell}.
    These are interactive protocols in which devices that succeed with higher-than-classical probability are said to violate a Bell inequality.
    Following the original works, it was further shown that the violation of a Bell inequality can be used not only to show that the devices must be quantum, but also that they are generating a secret key in the strongest form of cryptographic standards, a notion known as device-independent security~\cite{pironio2009collective, acin2016certified, acin2016optimal,ekert2014ultimate}.
    Furthermore, such violations can even characterize the quantum state and measurement used by the devices, a process called self-testing~\cite{supic2020self,metger2021self,arnon2018noise,bamps2015sum}.

    \begin{wrapfigure}{r}{0.38\textwidth}
        \centering
        \vspace{-2.5em}  
        \includegraphics[width=0.36\textwidth]{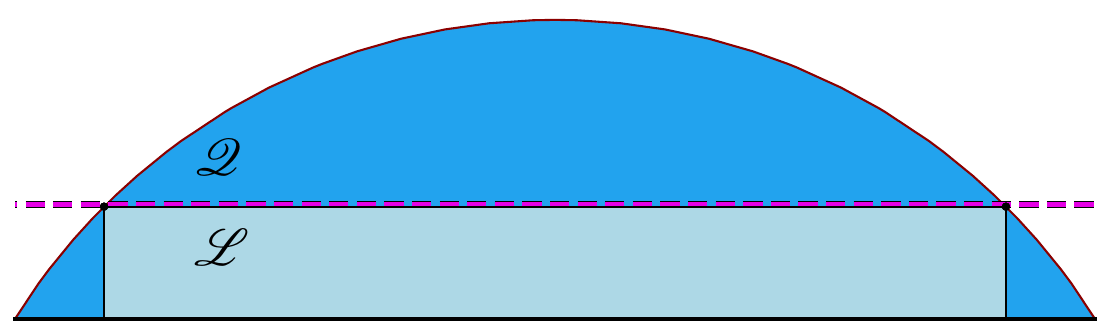}
        \caption{
            \footnotesize
            A 2D slice of the SoC of conditional probability distributions for some nonlocal game.
            The local set~$\local$ in lighter blue and the quantum set~$\quantum$ in darker blue.
            The Bell inequality is represented by the magenta dashed line.
        }
        \label{fig:sets}
        \vspace{-0.8\baselineskip} 
    \end{wrapfigure}

    To better understand what a Bell inequality is, one needs to mathematically define what is meant by the word ``behavior'' used above.
    To this end, consider the set of \emph{correlations}, or conditional probability distributions, that describe quantum devices in a nonlocal game.
    We denote by $P(\oa,\ob \mid \ia,\ib)$ the distribution that describes the probability of the two separated (but potentially entangled) devices, outputting $\oa,\ob$ when given the inputs $\ia,\ib$. 
    By fixing a distribution over the inputs\footnote{The distribution over the inputs is defined by the nonlocal game; in most cases it is simply independent and uniform.} $P(\ia,\ib)$, one can discuss the distribution $P(\oa,\ob,\ia,\ib) = P(\oa,\ob \mid \ia,\ib) P(\ia,\ib)$.
    The set of all distributions $P(\oa,\ob\mid\ia,\ib)$ that can arise using classical devices is called the local set $\local$.
    Similarly, the set of quantum distributions is denoted by $\quantum$.


    A generalization~\cite{scarani2019formalizing} of a Theorem due to Fine~\cite{fine1982hidden}, shows that the set of local (classical) correlations~$\local$ forms a convex polytope.
    The set of quantum correlations~$\quantum$, in contrast, is convex but not a polytope.
    Any classical model can be simulated by a quantum system, and thus~$\local \subset \quantum$.
    A Bell inequality defines a hyperplane that bounds the local set of classical correlations; see Figure~\ref{fig:sets}.
    The ability to violate a Bell inequality, thus certifying quantumness, is based on the existence of a distribution $P^{\star}=P^{\star}(\oa,\ob|\ia,\ib)$ such that $P^{\star} \in \quantum$ and $P^{\star}\notin \local$. The violation was also experimentally verified and recognized with a Nobel Prize in 2022.\footnote{See the \href{https://www.nobelprize.org/prizes/physics/2022/summary/}{Nobel Prize official website}.}

    Investigating the \emph{space of correlations} (SoC) modeled by the probability distributions $P(\oa,\ob,\ia,\ib)$ is both insightful and fruitful.
    Numerous studies considered the form of the SoC in high dimensions, different tools for approximating the quantum set and optimizing different objective functions over the SoC, the relation to other related sets such as the nonsignaling set and the almost-quantum set, foundational aspects of certifiable entanglement, and much more.
    Notably, using the SoC and the tools developed for studying it significantly advanced device-independent cryptography, with the key insight being that it is simpler to consider the SoC broadly instead of certifying a specific apparatus behavior.

    \subsection*{Certification of a single device}

    In recent years a new question arose: is it possible to certify a quantum behavior using only a single device?
    Nonlocal games are games played using two spatially separated, nonsignaling, devices.
    With just a single device, any quantum correlation can be simulated classically.
    Indeed, Bell inequalities, as discussed above, are facets in the space of bipartite (or more) correlations.
    A different approach was then needed.

    A breakthrough came with the results of Mahadev~\cite{mahadev2018classical} and Brakerski et al.~\cite{brakerski2021cryptographic}, who proposed a method to certify the quantumness of a single uncharacterized device using cryptographic techniques.
    This initiated a growing line of research, building on computational assumptions to enable single-device certification across a variety of tasks.\footnote{See for example~\cite{gheorghiu2019computationally,brakerski2020simpler,metger2021self,metger2021device,vidick2021classical,kahanamoku2022classically,liu2022depth,gheorghiu2022quantum,brakerski2023simple,natarajan2023bounding,aaronson2023certified} and references therein.}  
    The main idea of the new approach was to add computational assumptions into the picture.
    Instead of having two computationally unlimited devices, the interaction is now with a single but computationally limited device; The device can now only apply efficient operations and run for a polynomial amount of time.
    While the internal structure of the device remains uncharacterized, it is modeled as a~$\qpt$ system (i.e., a quantum polynomial-time machine).

    While quantum computers are expected to outperform classical ones, many computational problems are still believed to be intractable even for quantum algorithms.
    These conjectured hardness assumptions underpin post-quantum cryptography.
    A prominent example is the Learning with Errors (LWE) problem, which underlies constructions of cryptographic primitives such as trapdoor claw-free functions.
    An efficient classical client, or verifier, can now use a post-quantum cryptographic task to test the quantum device, or server, and see that it behaves as expected.\footnote
    {
        For readers not familiar with these lines of work, in Appendix~\ref{ap_sec:protocol} we present for completeness two concrete examples of such protocols, presented in~\cite{kahanamoku2022classically,kalai2022compiled}.
    }
    The core idea of using a computational assumption was used in the literature to certify quantumness, randomness and key generation, self-testing and verification of computation. 

    Another avenue used to switch from two quantum devices to a single $\qpt$ device is via a concept called compiled nonlocal games~\cite{kalai2022compiled,compiledtrapdoor2024}.
    The main idea is to start with a nonlocal game, for which we know how to certify a quantum behavior, and then mask it using a computationally hard task.
    Then, with the nonlocal game being ``computationally hidden'' we can ask a single device to play the role of the original two devices one after the other.
    In this line of works it is pretty clear that the properties of the nonlocal game are what allow for certification also in the single-device setup.
    Nevertheless, making precise and quantitative statements still requires a lot of work.
   
    \subsection*{Motivation}

    \begin{figure}[t]
        \centering

        \newcommand{\boxheight}{2}
        \newcommand{\boxwidth}{2}
        \newcommand{\nodedistance}{1.5}
        \newcommand{\inputheight}{0.7}   
        \newcommand{\outputheight}{0.7}  
        \newcommand{\arrowgap}{2pt}      
        \newcommand{\arrowspacing}{0.45} 
        \newcommand{\textSize}{\Large}

        \begin{subfigure}[t]{0.47\textwidth}
        \centering
        \begin{tikzpicture}[>=Latex, node distance=\nodedistance cm]

        \node[draw, rounded corners=8pt, minimum width=\boxwidth cm, minimum height=\boxheight cm, align=center] (A) at (0,0) {\textSize$\P$};
        \node[draw, rounded corners=8pt, minimum width=\boxwidth cm, minimum height=\boxheight cm, align=center, right=of A] (B) {\Large$\V$};

        \node[opacity=0] at ([yshift=\inputheight cm + 0.2cm]A.north) {$x$};
        \node[opacity=0] at ([yshift=\inputheight cm + 0.2cm]B.north) {$y$};
        \node[opacity=0] at ([yshift=-\outputheight cm - 0.2cm]A.south) {$a$};
        \node[opacity=0] at ([yshift=-\outputheight cm - 0.2cm]B.south) {$b$};

        \draw[->, opacity=0, shorten >=\arrowgap, shorten <=\arrowgap] ([yshift=\inputheight cm]A.north) -- (A);
        \draw[->, opacity=0, shorten >=\arrowgap, shorten <=\arrowgap] ([yshift=\inputheight cm]B.north) -- (B);
        \draw[->, opacity=0, shorten >=\arrowgap, shorten <=\arrowgap] (A) -- ([yshift=-\outputheight cm]A.south);
        \draw[->, opacity=0, shorten >=\arrowgap, shorten <=\arrowgap] (B) -- ([yshift=-\outputheight cm]B.south);

        \foreach \i in {0,1,2,3} {
        \pgfmathsetmacro{\offset}{0.7 - \i * \arrowspacing}
        \ifodd\i
            \draw[->, shorten >=\arrowgap, shorten <=\arrowgap] 
            ([yshift=\offset cm]B.west) 
            -- ([yshift=\offset cm]A.east);
        \else
            \draw[->, shorten >=\arrowgap, shorten <=\arrowgap] 
            ([yshift=\offset cm]A.east) 
            -- ([yshift=\offset cm]B.west);
        \fi
        }

    \end{tikzpicture}
        \vspace{-0.6em}
        \caption{\footnotesize Two abstract systems with alternating communication}
        \end{subfigure}
        \hfill
        \begin{subfigure}[t]{0.47\textwidth}
        \centering
        \begin{tikzpicture}[>=Latex, node distance=\nodedistance cm]

        \node[draw, rounded corners=8pt, minimum width=\boxwidth cm, minimum height=\boxheight cm, align=center] (A) at (0,0) {\textSize$\playerA$};
        \node[draw, rounded corners=8pt, minimum width=\boxwidth cm, minimum height=\boxheight cm, align=center, right=of A] (B) {\textSize$\playerB$};

        \node at ([yshift=\inputheight cm + 0.2cm]A.north) {$x$};
        \node at ([yshift=\inputheight cm + 0.2cm]B.north) {$y$};

        \node at ([yshift=-\outputheight cm - 0.2cm]A.south) {$a$};
        \node at ([yshift=-\outputheight cm - 0.2cm]B.south) {$b$};

        \draw[->, shorten >=\arrowgap, shorten <=\arrowgap] ([yshift=\inputheight cm]A.north) -- (A);
        \draw[->, shorten >=\arrowgap, shorten <=\arrowgap] ([yshift=\inputheight cm]B.north) -- (B);
        \draw[->, shorten >=\arrowgap, shorten <=\arrowgap] (A) -- ([yshift=-\outputheight cm]A.south);
        \draw[->, shorten >=\arrowgap, shorten <=\arrowgap] (B) -- ([yshift=-\outputheight cm]B.south);

        \end{tikzpicture}
        \vspace{-0.6em}
        \caption{\footnotesize Nonlocal game with inputs $(x, y)$ and outputs $(a, b)$}
        \end{subfigure}
        \vspace{-0.6em}
        \caption{\footnotesize Comparison between abstract systems and a nonlocal game interaction.}
    \end{figure}
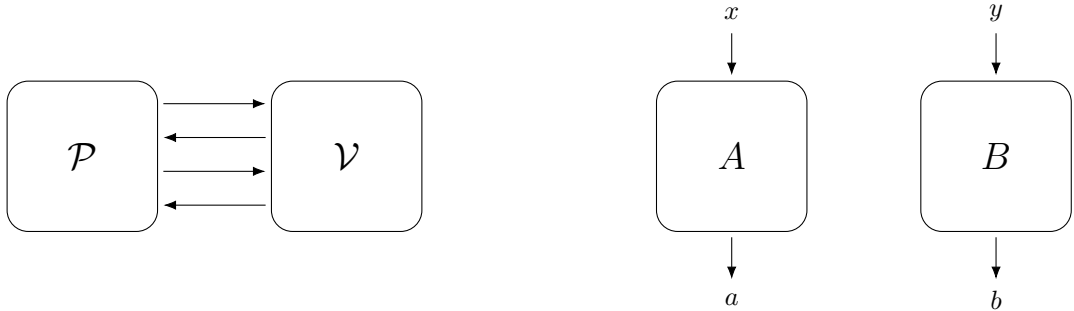
    
    It is quite clear that the two setups described so far---the standard nonlocal setup and the computational single-device setup---should be somehow connected.
    In the case of compiled nonlocal games, the connection is trivial:  one  can take a nonlocal game for the two-device setup and transform it to a single-$\qpt$-device setup using a computational assumption.
    But what is the fundamental link in the other direction (when we are not starting with a nonlocal game)?
    The computational assumptions regarding, e.g., claw-free functions are used as part of an interactive protocol between the classical verifier and the quantum prover, making it unclear how a computational assumption should be understood as some sort of a Bell inequality. 
    The novel work~\cite{kahanamoku2022classically} hinted at a connection by using the phrase ``computational Bell inequality'' and exploiting non-commuting measurements as in the CHSH Inequality, but a more fundamental and mathematically rigorous correspondence was not given.
    Thus, in order to gain a better fundamental understanding, taking protocols such as~\cite{kahanamoku2022classically,brakerski2023simple} for example, we ask:
    \begin{itemize}
        \item \emph{Can we pin down a ``computational Bell inequality''?}
        \item \emph{Can we formalize the link between computational assumptions and the nonsignaling assumption?}
    \end{itemize} 
    And, if so, 
    \begin{itemize}
        \item \emph{Can we use the well developed toolkit of nonlocal games to analyze the single-$\qpt$-device setup?}
    \end{itemize}
    
    We answer all these questions in the affirmative. We now briefly discuss our results.

\subsection*{Main results and ideas}
    
     We present a coherent and systematic approach for analyzing device-independent single-prover interactive protocols based on computational assumptions, through a clear link with Bell inequalities and nonlocal games.
    We expand upon the set of protocols addressed in prior studies~\cite{kahanamoku2022classically,brakerski2023simple,kalai2022compiled,compiledtrapdoor2024} by employing a canonical protocol and examine the interaction between a verifier and a prover:
            \begin{protocol}
            \vspace{-0.2em}
            \footnotesize
                \textbf{Canonical form protocols\footnote{For the complete form see Definition~\ref{def:canonical_form}.}}

                \vspace{-0.2em}


                \begin{enumerate}[label=Phase \Alph*., leftmargin=50pt]
                    \item
                    \begin{enumerate}[label=\arabic*.]
                        \item The verifier and prover interact classically and produce an interaction transcript~$\trans \in \Trans$.
                    \end{enumerate}
                    \item (Conditioned on passing Phase A)
                    \vspace{-0.6em}
                    \begin{enumerate}[label=\arabic*.]
                        \item The verifier samples a challenge~$\ib \leftarrow \Ib$ uniformly at random and sends it to the prover.
                        \item The prover responds with an output~$\ob \in \Ob$, which the verifier receives.
                    \end{enumerate}
                \end{enumerate}
                \vspace{-1.3em}
            \end{protocol}
            \vspace{-1.5em}

    \paragraph{Computational space of correlations.}
       
        In the standard Bell setting, analyzing the space of conditional probability distributions is highly insightful and fruitful.
        In particular, in the nonlocal space of correlations (SoC), the local set forms a convex polytope, with its facets corresponding to Bell inequalities.
        We define a computational analogue that we term the computational SoC (CSoC)-- this is done in Section~\ref{sec:com_soc}. 
        The CSoC that we construct by considering the correlations induced by the interaction in the canonical protocol underpins our approach.
        
        To derive the CSoC, we define a Bell mapping from the verifier-prover interaction to distributions over quadruples~$(\ia, \ib, \oa, \ob)$.
        A Bell mapping is a pair of functions~$({\imap:\Trans\rightarrow\Ia},{\omap:\Trans\rightarrow\Oa})$, with~$\Trans$ the transcript of Phase A of the protocol.
        Then, $(\ia=\imap(\trans), \oa=\omap(\trans))$ are what we call virtual variables (discussed more below) and~$(\ib, \ob)$ come from the real challenge and response in Phase B. 
        The set of the correlations over $(\ia, \ib, \oa, \ob)$ constructed this way gives the CSoC. 
        
        Note, however, that in the CSoC the input~$\ia=\imap(\trans)$ is \emph{not independent} from the two provers as in the usual nonlocal case. This is due to the shared dependency on the transcript, which is integral when starting from interactive protocol.
        We overcome this difficulty by working with the concept of \emph{measurement-dependent locality} (MDL), studied within the field of the foundations of quantum information~\cite{putz2014,putz2016mdl,scarani2019mdl}.

        Not every Bell mapping will induce a useful CSoC. As we discuss below, we require that the virtual input~$\ia=\imap(\trans)$ will be computationally hidden from the prover. This is where the computational assumptions of the protocol enter the picture. We expand on this in more detail below. 
        
        The combination of the above ideas allows us to define and study a computational local set~$\localcomp{\compLeakage}$, its MDL-extension~$\amdl{\kappa}$ and quantum set~$\quantumcomp{\compLeakage}$, with a respective leakage/signaling parameter~$\compLeakage$, which depends on the protocol and computational assumption. 

    \paragraph{Computational Bell inequalities.}
        Working with CSoC and MDL inequalities allows us to define new explicit Bell inequalities over the CSoC in their complete mathematical sense, hyperplanes that separate~$\localcomp{\compLeakage}$  and~$\amdl{\kappa}$ from~$\quantumcomp{\compLeakage}$.
        We prove that interactions with classical provers cannot violate our computational inequalities (this proof requires some effort; see Sections~\ref{sec:comp_local} and~\ref{sec:comp_bell}), while quantum ones can.
        Furthermore, as the leakage parameter~$\compLeakage$ increases, the ability to violate MDL-based inequalities---such as ours---is stronger compared to standard Bell inequalities, making such approach favorable.



    \paragraph{Hidden virtual inputs---where computational assumptions and locality meet.} 
        As mentioned above, the Bell map~$\imap$ and the transcript~$\trans$ define a virtual input~$\ia=\imap(\trans)$. 
        In the nonlocal setting, the two inputs to the two provers should be independent (or partially independent); This is the sense of locality. 
        Here as well one should enforce some structure or condition. 
        We say that the virtual input~$\ia$ is \emph{hidden} if for any QPT algorithm~$\A$ with polynomial advice, conditioned on passing the test of Phase A of the protocol, the following holds:
        \begin{equation}\label{eq:hidden}
            \underset{\trans}{\mathbbm{E}}
            \left|
            \Pr \big( \A(\state^\trans) = \imap(\trans) \big)
            - \frac{1}{|\Ia|}
            \right|
            \leq \compLeakage + \negl(\security) \; ,
        \end{equation}
        with~$\compLeakage\in[0,1]$ a leakage parameter, a security parameter~$\security$ and~$\state^\trans$ the prover's (unknown) quantum state resulting from the execution of the protocol.

        Equation~\eqref{eq:hidden} formalizes that, although the verifier can compute the virtual input~$\imap(\trans)$, a QPT prover can predict~$\imap(\trans)$ with only a limited advantage.
        In order for the CSoC and the computational Bell inequality to have the above discussed meaning, i.e., for them to be relevant for quantumness certification, one must \emph{prove} that Equation~\eqref{eq:hidden} holds for a given protocol.
        That is, the virtual input must be hidden.  
        To prove this, the computational assumption is employed (e.g., the trapdoor claw-free function or homomorphic encryption). 
        We thus clearly see how the computational assumption swaps with locality, by means of the virtual input.
        
        Remarkably, in our approach, this is the \emph{sole} place in which the computational assumption enters the analysis of the protocol.
        The rest of the analysis is completely oblivious to the assumption.
        This fact leads to modularity, removing many layers of technical overhead.  Furthermore, it highlights how one may go about constructing new protocols---as long as we can have a hidden virtual input, we are good to go. 

    \paragraph{Computational NPA-hierarchy.}
        
        
        Naturally, once we switch to working with the CSoC, we can start employing other tools from the study of non-locality.
        A leading example is the famous NPA-hierarchy~\cite{navascues2008npa,pironio2010npa2}.
        In Section~\ref{sec:comp_soc_hier}, we introduce a hierarchy of relaxations called \emph{computational-SoC hierarchy}, based on the  NPA-hierarchy, which approximates the correlations in~$\quantumcomp{\compLeakage}$, i.e., those achievable by efficient quantum provers. 
        Each level constrains signaling via measurements that reflect physically realizable strategies. 

        Although previous studies on compiled nonlocal games employed NPA-style hierarchies~\cite{klep2025seqnpa,cui2025seqnpa}, our method offers broader applicability (suitable for any protocol, not just compiled games, and allowing for $\compLeakage>0$) and greater simplicity (prior studies involved expectations of general noncommutative monomials in measurement operators~\cite{kulpe2024boundquantumvaluecompiled}, which might not represent physically feasible operations, thus requiring more technical steps).

       \paragraph{Analyzing single-prover protocols.} 

        Our observations are not only of fundamental nature but also have significant impact in terms of the ability to analyze mathematically the various certification protocols involving a classical verifier and a single prover.
        We use the protocol of~\cite{kahanamoku2022classically}, based on trapdoor claw-free function, and the one of~\cite{kalai2022compiled}, for compiled nonlocal games, as showcases for our method in Section~\ref{sec:showcases}.
        Clearly, the protocols are a priori very different.
        Nevertheless, we show how our techniques allow to analyze both of them rather easily and insightfully.

         We demonstrate the effectiveness of our approach by using the computational-SoC hierarchy to derive (a)~Tsirelson bounds for single-prover protocols---tighter than previous results and (b)~bound the entropy generated in the verifier-prover interaction---supplying a new tool and result, which can be further combined with our previous work on entropy accumulation in the single-prover setup to complete a randomness certification analysis~\cite{merkulov2023entropy}.
        The quantitative results are given in Section~\ref{sec:comp_soc_hier} (the interested reader may jump ahead to the plots in Figures~\ref{fig:kcvy_tsirelson} and~\ref{fig:entropy}).

    \subsection*{Previous and related works}
        
        Earlier proposals for quantum advantage rested on non-cryptographic, complexity-theoretic hardness of specific sampling tasks---most prominently boson sampling~\cite{aaronson2010boson,boixo2018nisqsupremacy,aaronson2023certified,bassirian2024fouriersmapling}. 
        By contrast, the breakthrough line on \emph{efficient} classical verification of quantum advantage with a single device~\cite{mahadev2018classical,brakerski2021cryptographic} leverages explicit post-quantum cryptographic primitives (notably LWE-based trapdoor claw-free functions and related tools) to achieve computational soundness with a tunable security parameter.

        Building on this perspective, additional classically verifiable quantum advantage tests were designed~\cite{kahanamoku2022classically,brakerski2023simple}.
        A complementary direction compiles any nonlocal game into a single-prover protocol while preserving quantum/nonlocal structure~\cite{kalai2022compiled,compiledtrapdoor2024}, with subsequent works initiating quantitative bounds on the compiled setting and studying convergence via sequential constraints~\cite{kulpe2024boundquantumvaluecompiled,cui2025seqnpa,klep2025seqnpa}.

         We highlight several recent works that are most closely related to our methodology and explain the main differences.

        \subsubsection*{Cryptographic single-device protocols (non-compiled)}
        
            \begin{itemize}
                \item \textbf{TCF-based test}~\cite{kahanamoku2022classically}.
                The protocol of~\cite{kahanamoku2022classically} falls within the family of protocols that our work considers.
                We instantiate our framework for this trapdoor claw-free-based test (see Section~\ref{sec:showcases}). 
                In addition, in Section~\ref{sec:ent_cert}, we prove that the protocol generates certified randomness against an unbounded adversary even when exposed to the transcript.
            
                \item \textbf{Simple tests of quantumness}~\cite{brakerski2023simple}. 
                This work studies tests are minimal representatives of protocols built directly from post-quantum primitives. 
                Our canonical-form protocol (Definition~\ref{def:canonical_form}) both subsumes \emph{and strictly generalizes} the protocol template of~\cite{brakerski2023simple}. 
                We believe that the analysis in our work is more insightful due to our ideas regarding the virtual hidden input and the computational SoC. 
                In terms of quantitative contributions, optimizing over our computational-SoC hierarchy yields leakage-dependent bounds; in the CHSH Bell scenario, our level-2 SDP gives a strictly tighter \emph{quantum} upper bound on the CHSH value than the analytic bound reported in~\cite[Theorem~5.2]{brakerski2023simple} (see Section~\ref{sec:cTs} and Fig.~\ref{fig:kcvy_tsirelson}).
                
            \end{itemize}

        \subsubsection*{Compiled nonlocal games}
        
        \begin{itemize}
            \item \textbf{Compiled nonlocal games}~\cite{kalai2022compiled,compiledtrapdoor2024}.
            The compiled-games paradigm starts from a Bell nonlocal game and compiles it into a single-prover protocol while preserving the game’s quantum/nonlocal structure, enabling a class of efficient verification of quantum advantage protocols. 
            We take the opposite, complementary, direction: from a general single-prover protocol in canonical form $\rightarrow$ a Bell inequality via a Bell mapping (see Section~\ref{sec:bell_map}).
            This reverse mapping exposes a virtual input hidden under a computational assumption and places all PPT strategies within a convex polytope, enabling \emph{computational} Bell inequalities that bound the computational–classical set.
            
            \item \textbf{Sequential games and sequential NPA hierarchy}\footnote{The term ``sequential'' in ``sequential games'' and in the ``sequential NPA hierarchy'' refers to distinct notions.}~\cite{kulpe2024boundquantumvaluecompiled,klep2025seqnpa,cui2025seqnpa}.
            Sequential games are introduced in~\cite{kulpe2024boundquantumvaluecompiled} to model step-by-step challenges in compiled nonlocal games and show that the optimal quantum value in the compiled game converges to that of the original game.
            \cite{klep2025seqnpa,cui2025seqnpa} develop a layered NPA hierarchy enforcing exact nonsignaling (up to negligible terms) on monomials of increasing length, yielding quantitative convergence rates.
            
            In contrast to these previous works, we enforce only approximate nonsignaling on physically realizable measurements, which suffices to capture practical prover strategies under leakage (see Section~\ref{sec:comp_soc_hier}), thus removing layers of technical overhead.
            Moreover, our canonical-form protocol generalizes sequential games~\cite{kulpe2024boundquantumvaluecompiled} and provides a unified framework that applies to arbitrary single-prover certification protocols.
            
            
        \end{itemize}

\section{Preliminaries}

    \subsection{Notation}
    
    We denote the indicator function as~$\indicator{\cdot}$.

    \begin{definition}[Total Variation Distance]
        Consider a measurable space~$(\Omega,\mathcal{F})$ and probability measures~$P$ and~$Q$, defined on~$(\Omega,\mathcal{F})$.
        The total variation distance between~$P$ and~$Q$ is defined as
        \begin{equation}
            \delta(P,Q)=\sup_{A\in\mathcal{F}}\qty|P(A)-Q(A)|\;.
        \end{equation}
    \end{definition}

    \subsection{Nonlocal games and Bell inequalities}
     
    Nonlocal games are mathematical constructs used to study quantum entanglement and nonlocality.
    They typically involve two players (or more) who are not allowed to communicate during the game.
    Each player receives an input, performs a local operation, and outputs a response.
    The distribution of inputs and outputs gives rise to observable correlations.

    In this work, we focus on two-player games, often specified by their input and output sets.
    A \emph{Bell scenario} is a tuple~$\bellscenarioShort = (\Ia, \Ib, \Oa, \Ob)$ describing:
    \begin{itemize}
        \item input sets~$\Ia$ and~$\Ib$ for players~$\playerA$ and~$\playerB$, respectively, and
        \item output sets~$\Oa$ and~$\Ob$ for~$\playerA$ and~$\playerB$, respectively.
    \end{itemize}
    As the structure of the relevant distribution sets only depends on the cardinalities~$(\bellscenarioInteger)$, we may refer to a Bell scenario either by its explicit sets or by their sizes~$(\bellscenarioInteger)$.
    
    We will typically assume that the inputs~$(\ia, \ib)$ are sampled from a fixed, known distribution---in most cases, the uniform distribution over~$\Ia \times \Ib$.
    However, we do not require these inputs to be independent of any hidden variables~$\hp$ used by the devices.
    This distinction is important: although standard Bell tests assume measurement independence, our framework accommodates input distributions that may be weakly correlated with the prover's internal state.
    This generalization is formalized later through the notion of measurement-dependent locality (MDL), and it plays a central role in our computational setting.

    We denote by~$\uc$ the set of all conditional distributions~$P(\oa, \ob \mid \ia, \ib)$ over a Bell scenario~$\bellscenarioShort$.
    The specific Bell scenario~$\bellscenarioShort$ will often be implicit from context.

    \begin{definition}[Nonsignaling set~$\ns$]\label{def:nonsignaling_set}
        Let~$\uc$ denote the set of all conditional probability distributions~$P(\oa, \ob \mid \ia, \ib)$ over~$\bellscenarioDistSet$.

        We say that a distribution~$P \in \uc$ belongs to the \emph{nonsignaling set}~$\ns$ if there exist:
        \begin{itemize}
            \item a hidden variable space~$\Gamma$ with a probability distribution~$\hpDensity$ over~$\Gamma$, and
            \item a family of conditional distributions~$\{P_\hp(\oa, \ob \mid \ia, \ib)\}_{\hp \in \Gamma}$,
        \end{itemize}
        such that:
        \begin{enumerate}[label=(\roman*)]
            \item For all~$(\ia, \ib, \oa, \ob)$, we have:
            \begin{equation}
                P(\oa, \ob \mid \ia, \ib) = \int \hpDensity(\hp) \cdot P_\hp(\oa, \ob \mid \ia, \ib) \, d\hp \;.
            \end{equation}

            \item For each~$\hp \in \Gamma$, the conditional distribution~$P_\hp$ satisfies the nonsignaling conditions:
            \begin{align}
                \sum_{\ob \in \Ob} P_\hp(\oa, \ob \mid \ia, \ib)
                &= \sum_{\ob \in \Ob} P_\hp(\oa, \ob \mid \ia, \ib') \quad \text{for all } \ib, \ib' \in \Ib \;, \label{eq:ns-a} \\
                \sum_{\oa \in \Oa} P_\hp(\oa, \ob \mid \ia, \ib)
                &= \sum_{\oa \in \Oa} P_\hp(\oa, \ob \mid \ia', \ib) \quad \text{for all } \ia, \ia' \in \Ia \;. \label{eq:ns-b}
            \end{align}
        \end{enumerate}
    \end{definition}

    \begin{definition}[Local set~$\local$]\label{def:local_set}
        A conditional distribution~$P \in \uc$ is said to belong to the \emph{local set}~$\local$ if there exist:
        \begin{itemize}
            \item a hidden variable space~$\Gamma$ with a probability distribution~$\hpDensity$ over~$\Gamma$, and
            \item a family of local response distributions~$\{ P_\hp(\oa | \ia),\; P_\hp(\ob | \ib) \}_{\hp \in \Gamma}$,
        \end{itemize}
        such that for all~$\ia, \ib, \oa, \ob$,
        \begin{equation}
            P(\oa, \ob | \ia, \ib)
            = \int d\hp~ \hpDensity(\hp) \cdot P_\hp(\oa | \ia) \cdot P_\hp(\ob | \ib) \;.
        \end{equation}
    \end{definition}

    \begin{definition}[Quantum set~$\quantum$]\label{def:quantum_set}
        A distribution~$P \in \uc$ is said to belong to the \emph{quantum set}~$\quantum$ if there exist:
        \begin{itemize}
            \item a finite-dimensional Hilbert space~$\mathcal{H}$,
            \item a normalized quantum state~$\rho$ on~$\mathcal{H}$,
            \item POVMs~$\{ M_\oa^\ia \}_{\oa \in \Oa}$ for each~$\ia \in \Ia$, acting on~$\mathcal{H}$,
            \item POVMs~$\{ N_\ob^\ib \}_{\ob \in \Ob}$ for each~$\ib \in \Ib$, acting on~$\mathcal{H}$,
        \end{itemize}
        such that for all~$\ia, \ib, \oa, \ob$,
        \begin{equation}
            P(\oa, \ob \mid \ia, \ib)
            = \tr\left( (M_\oa^\ia \otimes N_\ob^\ib) \rho \right) \;.
        \end{equation}
    \end{definition}

    \begin{definition}[Bell inequality]\label{def:bell_inequality}
        Let~$\uc$ denote the set of conditional distributions~$P(\oa, \ob \mid \ia, \ib)$ over a Bell scenario.
    
        We say that a function~$\bellInequality : \uc \to \mathbb{R}$ is a \emph{Bell inequality} if
        \begin{equation}\label{eq:bell_inequality_def}
            \bellInequality(P) \leq 0 \quad \text{for all } P \in \local \;.
        \end{equation}
    \end{definition}
    
    \newcommand{\bellinequalitySum}{\sum_{\oa\in\Oa}\sum_{\ob\in\Ob}\sum_{\ia\in\Ia}\sum_{\ib\in\Ib} }
    \begin{remark}
        In this work, we restrict attention to \emph{affine} Bell inequalities, meaning functionals of the form
        \begin{equation}
            \bellInequality(P) = \bellInequality_0 + \bellinequalitySum v_{\oa,\ob,\ia,\ib} P(\oa, \ob, \ia, \ib) \;,
        \end{equation}
        where~$v_{\oa,\ob,\ia,\ib} \in \mathbb{R}$ and~$\bellInequality_0 \in \mathbb{R}$ are fixed coefficients.
        This includes both tight and non-tight inequalities; the former correspond to facets of the local polytope~$\local$.
    \end{remark}

    \begin{remark}    
        In the literature, the term ``Bell inequality'' is sometimes reserved only for \emph{nontrivial} inequalities---those that are violated by at least one quantum or nonsignaling distribution~$P \notin \local$.
        For example, this viewpoint is adopted in~\cite{scarani2019formalizing}, where certain facets of the local polytope are explicitly excluded from the definition of Bell inequalities because they admit no quantum violation.
        By contrast, geometric and polyhedral approaches often refer to all valid affine constraints for the local set as Bell inequalities, whether or not they are violated by quantum distributions~\cite{pironio2014all,escola2021tight}.
        Our usage is inclusive: we refer to any such affine constraint satisfied by all~$P \in \local$ as a Bell inequality.
    \end{remark}

    Two examples:
    \begin{enumerate}
        \item \textbf{CHSH inequality.}
        In the Bell scenario~$\bellscenarioShort=(2,2,2,2)$, the CHSH inequality takes the form
        \begin{equation}
            \bellInequality(P) = -3/4 + \bellinequalitySum
            P(\oa,\ob,\ia,\ib)\cdot\indicator{\ia\cdot\ib = \oa\oplus\ob}\;.
        \end{equation}
        This is a nontrivial Bell inequality that is violated by some quantum correlations.
        
        \item \textbf{Trivial Bell inequality.}
        As a degenerate example, the constant functional
        \begin{equation}
            \bellInequality(P) = -\bellinequalitySum P(\oa,\ob,\ia,\ib) = -1
        \end{equation}
        satisfies the Bell inequality condition for all $P \in \uc$, but provides no nonlocality detection.
    \end{enumerate}

    \subsection{Measurement dependent locality (MDL)}\label{sec:mdl_pre}

    Standard Bell scenarios assume the principle of \emph{measurement independence}---also known as \emph{freedom of choice}---which posits that the inputs~$(\ia, \ib)$ are chosen independently of any hidden variables~$\hp$ used by the device.
    The \emph{measurement-dependent locality} (MDL) framework relaxes this assumption by allowing limited correlations between inputs and hidden variables.

    In MDL, the input distribution~$P(\ia, \ib \mid \hp)$ is constrained to lie within specified bounds, typically quantified by parameters~$(l,h)$.
    This defines the \emph{MDL set}, a relaxation of the local set that permits bounded measurement dependence.
    MDL models are useful in scenarios where input choices may be partially predictable or correlated with the device.
    They provide a structured way to analyze the robustness of Bell inequality violations under such constraints.

    \begin{definition}[MDL set~$\mdl{(l,h)}$]\label{def:mdl_set}
        Let~$\uc$ denote the set of joint distributions~$P(\oa, \ob, \ia, \ib)$ over a Bell scenario~$\bellscenarioShort = (\Ia, \Ib, \Oa, \Ob)$.

        Fix parameters~$0 \leq l \leq h \leq 1$.
        A distribution~$P \in \uc$ is said to belong to the \emph{measurement-dependent local set}~$\mdl{(l,h)}$ if there exist:
        \begin{itemize}
            \item a hidden variable space~$\Gamma$ with a probability distribution~$\hpDensity$ over~$\Gamma$,
            \item a family of local response distributions~$\{ P_\hp(\oa \mid \ia),\; P_\hp(\ob \mid \ib) \}_{\hp \in \Gamma}$, and
            \item a conditional input distribution~$P(\ia, \ib \mid \hp)$ satisfying the bounds
            \begin{equation}
                l \leq P(\ia, \ib \mid \hp) \leq h \quad \text{for all } \ia \in \Ia,\; \ib \in \Ib,\; \hp \in \Gamma \;,
            \end{equation}
        \end{itemize}
        such that the joint distribution factors as
        \begin{equation}
            P(\oa, \ob, \ia, \ib)
            = \int d\hp~ \hpDensity(\hp) \cdot P(\ia, \ib \mid \hp) \cdot P_\hp(\oa \mid \ia) \cdot P_\hp(\ob \mid \ib) \;.
        \end{equation}
    \end{definition}

    \begin{remark}
        An \emph{MDL inequality} is a Bell inequality (Definition~\ref{def:bell_inequality}) that holds for all distributions in the measurement-dependent local set~$\mdl{(l, h)}$.
        These inequalities generalize standard Bell inequalities to scenarios where the input distribution~$P(\ia, \ib)$ may be weakly correlated with a hidden variable~$\hp$.
        In this setting, MDL inequalities serve as linear constraints that distinguish classical strategies under bounded measurement dependence from more general behaviors, such as those achievable in quantum or general probabilistic theories.

        Just as tight Bell inequalities correspond to facets of the local polytope~$\local$, tight MDL inequalities define facets of the MDL polytope~$\mdl{(l,h)}$.\footnote{The MDL set~$\mdl{(l,h)}$ forms a convex polytope in the space of correlations~\cite[Theorem~2]{putz2014}.}
        They provide a natural extension of the Bell inequality framework to cases where full measurement independence does not hold.
    \end{remark}

    \begin{claim}[CHSH MDL inequality~{\cite[Equation (5)]{putz2014}}]
        \label{claim:putz2014} 
        Let $l,h\in[0,1]$ be MDL parameters for the Bell scenario~$\bellscenarioShort=(2,2,2,2)$.
        The following functional, is a nontrivial MDL inequality for any $l>0$.
        I.e., for any~$P\in\mdl{(l,h)}$,
        \newcommand{\dist}{P_{ABXY}}
        \begin{equation}\label{eq:gisin_inequality}
            \bellInequality(\dist) =
            l\dist(0000)-h\qty(\dist(0101)+\dist(1010)+\dist(0011)) \leq 0\;.
        \end{equation}
    \end{claim}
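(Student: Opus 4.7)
The plan is to exploit the convexity of the MDL set: since the functional $\bellInequality$ is affine in $P$, and the deterministic local strategies (together with extremal input distributions) are the extreme points of $\mdl{(l,h)}$, it suffices to verify the inequality on such extreme points. Concretely, for any $P \in \mdl{(l,h)}$ I would decompose each local response distribution $P_\hp(\oa\mid\ia)$ and $P_\hp(\ob\mid\ib)$ into convex combinations of deterministic responses $\delta_{\oa,f(\ia)}$ and $\delta_{\ob,g(\ib)}$ using standard arguments for local polytopes. After absorbing these mixture weights into $\hp$, it remains to prove $\bellInequality(P_{f,g,q}) \leq 0$ for every pair of functions $f,g\colon\{0,1\}\to\{0,1\}$ and every input distribution $q(\ia,\ib)$ with $l \leq q(\ia,\ib) \leq h$, where
\[
P_{f,g,q}(\oa,\ob,\ia,\ib) \;=\; q(\ia,\ib)\,\indicator{\oa = f(\ia)}\,\indicator{\ob = g(\ib)} \; .
\]

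To carry out the casework, write $a_0 = f(0)$, $a_1 = f(1)$, $b_0 = g(0)$, $b_1 = g(1)$. If the indicator in the positive term vanishes (i.e., $a_0 \neq 0$ or $b_0 \neq 0$), then $\bellInequality(P_{f,g,q}) \leq 0$ is immediate because the remaining three terms are subtracted. Otherwise $a_0 = b_0 = 0$, and I would split on $b_1$. If $b_1 = 1$, then the second indicator $\indicator{a_0 = 0, b_1 = 1}$ is active, so using $q(0,0) \leq h$ and $q(0,1) \geq l$ gives $\bellInequality \leq l\cdot q(0,0) - h\cdot q(0,1) \leq lh - hl = 0$. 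If $b_1 = 0$, then either $a_1 = 1$, in which case the third indicator fires and the same $lh - hl$ bound closes it, or $a_1 = 0$, in which case the fourth indicator $\indicator{a_1=0, b_1=0}$ fires and $\bellInequality \leq l\cdot q(0,0) - h\cdot q(1,1) \leq 0$. In every branch exactly one of the three subtracted terms is forced to be active and to have coefficient at least $l$, which together with $q(0,0) \leq h$ suffices to dominate the single positive contribution.

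Finally, integrating the verified inequality $\bellInequality(P_{f,g,q}) \leq 0$ against the density $\hpDensity$ and the deterministic-decomposition weights recovers $\bellInequality(P) \leq 0$ for the original $P \in \mdl{(l,h)}$ by linearity. The main (and essentially only) obstacle is organizing the casework so that the $[l,h]$ bounds on the input distribution are invoked symmetrically in each branch; the condition $l > 0$ is needed only for the inequality to be nontrivial (separating from the zero functional), not for the proof itself.
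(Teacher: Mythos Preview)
Your argument is correct: reducing to deterministic local responses and an arbitrary input law $q$ with $l\le q(x,y)\le h$, and then checking the four branches determined by $(a_0,a_1,b_0,b_1)$, is exactly the right way to verify an affine functional on the MDL polytope, and your casework is complete (in each branch with $a_0=b_0=0$ one of the three negative terms is forced to equal $q(x,y)\ge l$, which together with $q(0,0)\le h$ gives $lh-hl=0$). Note, however, that the paper does not supply its own proof of this claim; it is quoted from~\cite{putz2014} and used as a black box. Your extreme-point enumeration is the standard and essentially the same argument one finds in that reference, so there is nothing further to compare.
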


\section{Computational space of correlations}\label{sec:com_soc}

    In this section, we define the \emph{computational space of correlations}---a central technical object that underpins our framework.
    In the standard Bell setting, analyzing the space of conditional probability distributions reveals that the local set forms a convex polytope, with its facets corresponding to Bell inequalities.
    We define a computational analogue of this structure by considering the correlations induced by canonical verifier--prover protocols under computational constraints.
    
    By translating such interactions into a Bell scenario via a Bell mapping, we obtain distributions over tuples~$(\ia, \ib, \oa, \ob)$, where~$(\ia, \oa)$ are virtual variables extracted from the transcript and~$(\ib, \ob)$ come from the real challenge and response.
    We study how these distributions behave under classical and quantum strategies, when the prover is restricted to polynomial-time and the virtual input~$\ia$ is hidden.
    This leads to computational versions of the local and quantum sets, and sets the stage for constructing computational Bell inequalities that separate them.

    \subsection{Canonical protocol and Bell mapping}\label{sec:bell_map}

        Our work allows to use a native family of protocols, all instances of what we call the canonical protocol-- presented in Figure~\ref{def:canonical_form}. 
        The canonical protocol is a generalization of a class of protocols presented in~\cite{brakerski2023simple} and also covers the protocols from~\cite{kalai2022compiled,kahanamoku2022classically,compiledtrapdoor2024}.\footnote{We do remark that in the current form the canonical protocol does not fit, a priori at least, sampling based protocols such as~\cite{aaronson2010boson,boixo2018nisqsupremacy,aaronson2023certified,bassirian2024fouriersmapling}. We leave these type of protocol for future work.}  

        \begin{definition}[Canonical form protocol]\label{def:canonical_form}
            A verifier--prover interactive protocol is \emph{in canonical form} if and only if it follows the two–phase template shown in Figure~\ref{fig:canonical_form_protocol}.
        \end{definition}

        \begin{figure}[h!]
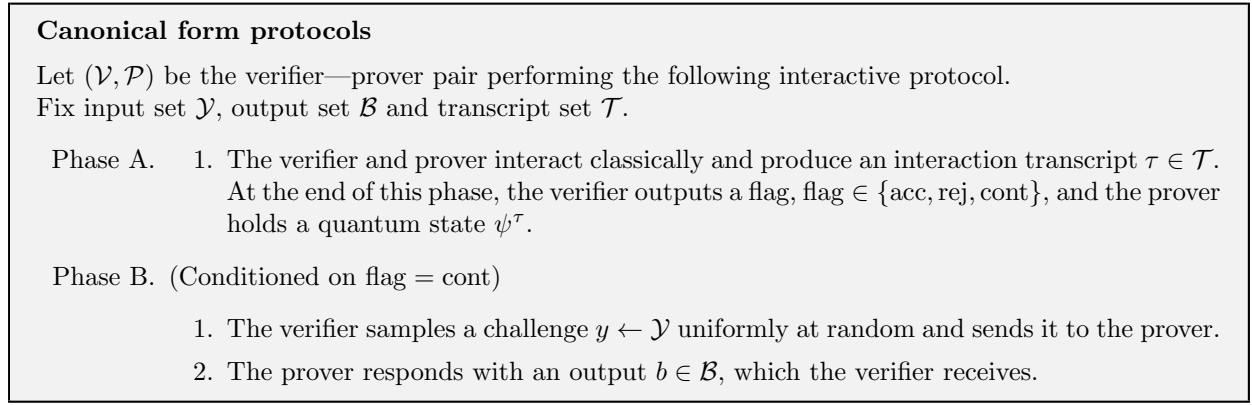

            \centering
            \begin{protocol}
                \textbf{Canonical form protocols}

                \vspace{0.5em}

                Let~$(\V,\P)$ be the verifier–-prover pair performing the following interactive protocol.

                Fix input set~$\Ib$, output set~$\Ob$ and transcript set~$\Trans$.
                \begin{enumerate}[label=Phase \Alph*., leftmargin=50pt]
                    \item
                    \begin{enumerate}[label=\arabic*.]
                        \item The verifier and prover interact classically and produce an interaction transcript~$\trans \in \Trans$.
                        At the end of this phase, the verifier outputs a flag,~$\flag \in \{\accept, \reject, \continue\}$, and the prover holds a quantum state~$\state^{\trans}$.
                    \end{enumerate}
                    \item (Conditioned on~$\flag = \continue$)
                    \begin{enumerate}[label=\arabic*.]
                        \item The verifier samples a challenge~$\ib \leftarrow \Ib$ uniformly at random and sends it to the prover.
                        \item The prover responds with an output~$\ob \in \Ob$, which the verifier receives.
                    \end{enumerate}
                \end{enumerate}
            \end{protocol}
            \vspace{-1em}
            \caption
            {
                \footnotesize
                Canonical form protocol structure.
            }
            \label{fig:canonical_form_protocol}
        \end{figure}   
            
        To make a precise link between the interactions of the verifier and the prover in the canonical protocol and a space of correlation (SoC), we define a ``Bell mapping''. Formally:
        \begin{definition}[Bell mapping]\label{def:bell_mapping}
            Let~$(\V,\P)$ be a pair of verifier-prover following  the canonical form protocol.
            Let~$(\Ib,\Ob,\Trans)$ be the inputs, outputs and transcripts sets (resp.).
            A Bell mapping of~$(\V,\P)$ to a Bell-Scenario~$\bellscenarioShort=(\bellscenario)$ is a pair of functions~$({\imap:\Trans\rightarrow \Ia},{\omap:\Trans\rightarrow \Oa})$.
        \end{definition}

        \begin{remark}
            The Bell mapping reinterprets the transcript of a canonical verifier--prover protocol as a virtual interaction in a Bell scenario.
            It extracts a synthetic input--output pair~$(\ia, \oa)$ from the transcript, which is then paired with the prover's real input--output pair~$(\ib, \ob)$ to form a quadruple~$(\ia, \ib, \oa, \ob)$.
            This allows the behavior of the protocol to be analyzed using tools from nonlocal games and Bell inequalities.
        \end{remark}




        \begin{definition}[Bell-mapped distribution]\label{def:bell_mapped_distribution}
            Let~$\security$ be a security parameter.
            Let~$(\V,\P)$ be a verifier-prover pair running a canonical-form protocol (Definition~\ref{def:canonical_form}), and let~$(\imap,\omap)$ be a Bell mapping (Definition~\ref{def:bell_mapping}).
            Let~$\mathsf{T}$ denote the (conditional) transcript distribution~$\Pr(\trans \mid \flag=\continue)$ induced by~$(\V,\P)$.
            
            The \emph{Bell-mapped distribution}~$P_\security$ is the joint distribution on~$\bellscenarioDistSet$ defined by
            \begin{equation}
              P_\security(\oa,\ob,\ia,\ib)
              \;\coloneqq\;
              \underset{\trans}{\E}\,
              \Pr\!\left(
                \omap(\trans)=\oa,
                \ob,
                \imap(\trans)=\ia,
                \ib
                \mid
                \trans
              \right).
            \end{equation}
            Here $\ib$ is the verifier’s Phase~B input sampled uniformly from~$\Ib$ and independently from~$\trans$, and $\ob$ is the prover’s Phase~B reply.
            \end{definition}

            The choice of the Bell mapping is, a priori, very flexible. However, we need to make a smart choice of the map for the analysis in the following sections. Specifically, we require a Bell mapping with a specific property. That is, the virtual input $\imap(\trans)$ defined via the map $\imap:\Trans\rightarrow \Ia$ should be computationally hidden. 
            Formally:
            

            \begin{definition}[Hidden virtual input~$\imap(\trans)$]\label{assumption:compLeakage}
                Let~$\compLeakage\in[0,1]$ be a leakage parameter and~$\security$ a security parameter.
                Let~$(\V,\P)$ be a canonical-form protocol, and let~$(\imap,\omap)$ be a Bell mapping.
                Let~$\state^\trans$ denote the prover’s post-interaction quantum state together with the transcript~$\trans$ (embedded as classical side information).
                We say the virtual input~$\ia=\imap(\trans)$ is \emph{hidden} if, for every QPT algorithm~$\A$ (possibly with polynomial advice),
                \begin{equation}
                    \underset{\trans}{\E}\!
                    \left|
                    \Pr\!\left( \A(\state^\trans)=\imap(\trans) \right)
                    -\frac{1}{|\Ia|}
                    \right|
                    \leq \compLeakage + \negl(\security)\;.
                \end{equation}
                The expectation is over the (conditional) distribution of transcripts produced by $(\V,\P)$ given $\flag=\continue$, and the probability is over the internal randomness of $\A$ (and any measurements it performs on $\state^\trans$).
            \end{definition}

            As mentioned, when analyzing a specific protocol it is important to choose a Bell mapping that will allow us to show that the virtual input~$\imap(\trans)$ is indeed hidden. 
            \emph{This is the formal connection between the original protocol's computational assumption and the nonsignaling assumption over the SoC.} 

            For compiled nonlocal games~\cite{kalai2022compiled,compiledtrapdoor2024}, the Bell mapping is straightforward, since the protocols were constructed from an underlying nonlocal game and Bell inequality.
            However, for other protocols, choosing a Bell mapping is more nuanced.
            We present Bell mappings that satisfy Definition~\ref{assumption:compLeakage} as showcases in Section~\ref{sec:showcases}.
    
            Note that while we exemplify the idea of the Bell mapping and the virtual input with known protocols, taking the other direction can be fruitful as well. 
            That is, one can try to come up with new protocols, or employ new cryptographic assumptions, by knowing which requirement they need to fulfill---having a Bell mapping that leads to a hidden virtual input. 
            This research avenue for finding new protocols can be seen as the parallel of looking for new Bell inequalities that can, e.g., certify more randomness or different entangled states. 
        
    \subsection{The computational classical set}\label{sec:comp_local}
        
        In this section, we investigate how the interaction between a classical verifier and a classical prover gives rise to a structured distribution over a Bell scenario via a Bell mapping.
        We formalize this idea by defining the \emph{computational classical set}---the set of Bell distributions that arise from classical strategies under a computational hiding constraint.

        \begin{definition}[Computational classical set]\label{def:comp_classical_set}
            Let~$\compLeakage\geq 0$.
            We say that a distribution~$C$ belongs to the \emph{computational classical set}~$\localcomp{\compLeakage}$ if there exists a classical verifier--prover pair~$(\V, \P)$ performing a canonical form protocol (Definition~\ref{def:canonical_form}), and a Bell mapping~$\bellmap$, such that:
            \begin{enumerate}
                \item The Bell mapping~$\bellmap$ satisfies the hidden input condition with leakage~$\compLeakage$ (Definition~\ref{assumption:compLeakage});
                \item Letting~$P_\security$ be the Bell-mapped distribution arising from the interaction between~$\V$ and~$\P$ at security parameter~$\security$ (Definition~\ref{def:bell_mapped_distribution}), we have
                \begin{equation}
                    \lim_{\security \to \infty} \delta(P_\security, C) = 0 \;.
                \end{equation}
            \end{enumerate}
        \end{definition}

        Ideally, we would like to interpret the resulting distributions as belonging to the standard local polytope~$\local$---that is, as a classical correlation within the SoC.
        However, we cannot directly apply the standard notion of locality, because the distribution induced by the interaction may exhibit dependence between the prover's behavior and the verifier's virtual input~$\ia$.
        This violates the usual assumption that inputs are chosen independently of any hidden variables used to generate the outputs.

        To address this, we turn to a more flexible and well-developed framework known as \emph{measurement-dependent locality (MDL)}~\cite{scarani2019mdl,putz2016mdl}, in which the input distribution~$P(\ia, \ib \mid \hp)$ is only required to lie within fixed bounds~$(l,h)$ (see Definition~\ref{def:mdl_set} in the Preliminaries).

        Our goal is to relate the interaction between the verifier and a classical prover to a distribution in the MDL set.
        However, this approach faces an obstacle: while the prover cannot predict the virtual input~$\ia$ with high accuracy on average (as enforced by the leakage bound~$\compLeakage$), there may still exist individual transcripts~$\trans$ in which~$\ia$ is fully determined.
        This rules out any pointwise guarantee of bounded dependence, and thus prevents us from directly mapping the interaction into the standard MDL set.

        To capture this more nuanced behavior, we define a model related to MDL that allows even greater flexibility in the dependence between~$\ia$ and the hidden variable, while assuming~$\ib$ remains independent.
        We call this the \emph{one-sided average measurement-dependent local set}, or AMDL for short, and denote it by~$\amdl{\kappa}$.
        
        \begin{definition}[One-sided average measurement-dependent local set~$\amdl{\kappa}$ (AMDL)]\label{def:amdl_set}
            Let~$\kappa \geq 0$.

            A distribution~$P(a,b,x,y)$ is said to belong to the \emph{one-sided average measurement-dependent local set}~$\amdl{\kappa}$ if there exist:
            \begin{itemize}
                \item a hidden variable space~$\Gamma$ with a probability distribution~$\hpDensity$ over~$\Gamma$,

                \item a family of local conditional output distributions~$\{ P_\hp(\oa \mid \ia),\; P_\hp(\ob \mid \ib) \}_{\hp \in \Gamma}$, and
                \item a conditional input distribution~$P(\ia \mid \hp)$ and a uniform~$P(\ib)$,
            \end{itemize}
            such that:
            \begin{enumerate}[label=(\roman*)]
                \item the joint distribution is given by
                \begin{equation}\label{eq:amdl_factorization}
                    P(\oa, \ob, \ia, \ib)
                    = \int d\hp~ \hpDensity(\hp) \cdot P(\ia \mid \hp) \cdot P(\ib) \cdot P_\hp(\oa \mid \ia) \cdot P_\hp(\ob \mid \ib) \;,
                \end{equation}
                    \item and the expected deviation of the most likely input from uniform is bounded by~$\kappa$:
                    \begin{equation}\label{eq:amdl_constraint}
                        \E_\hp \left[
                            \max_{\ia} P(\ia \mid \hp) - \frac{1}{|\Ia|}
                        \right] \leq \kappa \;.
                    \end{equation}
            \end{enumerate}
        \end{definition}

        In our context, the parameter~$\kappa$ corresponds to the guessing advantage that a classical prover may have on the virtual input~$\imap(\trans)=\ia$, as quantified by the leakage bound~$\compLeakage$.
        The following lemma shows that this guessing bound implies that the Bell-mapped distribution induced by any classical prover belongs to~$\amdl{\compLeakage}$.

        \begin{lemma}\label{lemma:classical_amdl_reduction}
            Let $\compLeakage\in[0,1]$.
            Then the local computational set $\localcomp{\compLeakage}$ is a subset of the closure of the one-sided average measurement-dependent local set $\amdl{\compLeakage}$.
        \end{lemma}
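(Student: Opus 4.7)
The plan is to show that for every security parameter~$\security$ the Bell-mapped distribution~$P_\security$ of a classical verifier-prover pair admits an AMDL decomposition with parameter~$\compLeakage+\negl(\security)$, and then to pass to the limit~$\security\to\infty$ to place the limit point~$C$ in the closure of~$\amdl{\compLeakage}$.

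First, I would fix~$C\in\localcomp{\compLeakage}$ and take a witness~$(\V,\P,\imap,\omap)$ realizing~$C$ in the sense of Definition~\ref{def:comp_classical_set}. For each~$\security$, I would define the hidden variable~$\hp(\trans) := (\omap(\trans),\{P_\P(\cdot\mid\trans,\ib)\}_{\ib\in\Ib})$: Alice's virtual answer paired with Bob's Phase~B response strategy, both deterministic functions of the transcript~$\trans$. The density~$\hpDensity$ is obtained by pushing forward the conditional transcript distribution (given~$\flag=\continue$). For~$\hp=(\oa^*,\ob^*)$, I would set~$P_\hp(\oa\mid\ia) := \indicator{\oa=\oa^*}$ (constant in~$\ia$), $P_\hp(\ob\mid\ib) := \ob^*(\ib)(\ob)$, and~$P(\ia\mid\hp) := \Pr(\imap(\trans)=\ia\mid\hp(\trans)=\hp)$.

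Next, I would verify the AMDL factorization (Equation~\eqref{eq:amdl_factorization}) by direct substitution: unfolding~$\hpDensity(\hp)\,P(\ia\mid\hp)$ as a sum over transcripts compatible with~$(\hp,\ia)$, then collecting the indicator for~$\oa$ and the prover's response distribution for~$\ob$, recovers~$P_\security(\oa,\ob,\ia,\ib)$ up to the uniform factor~$1/|\Ib|$. The central observation is that, conditional on~$\hp$, the virtual side (Alice's answer) and the real side (Bob's response) decouple from the transcript---Alice's answer is a constant, and Bob's response depends only on~$\ib$---which is precisely the locality structure required by AMDL.

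The core step is bounding the measurement-dependence parameter. For each~$\hp$ let~$\hat\ia(\hp):=\arg\max_\ia P(\ia\mid\hp)$. A short calculation gives~$\E_\hp[\max_\ia P(\ia\mid\hp)] = \Pr_\trans(\imap(\trans)=\hat\ia(\hp(\trans)))$, which is the success probability of the adversary~$\A$ that on input~$\state^\trans$ computes~$\hp(\trans)$ and outputs~$\hat\ia(\hp(\trans))$. Invoking Definition~\ref{assumption:compLeakage} on~$\A$---and using that the expected absolute deviation upper-bounds the one-sided deviation---yields~$\E_\hp[\max_\ia P(\ia\mid\hp)] - 1/|\Ia| \leq \compLeakage+\negl(\security)$, which is exactly the AMDL constraint (Equation~\eqref{eq:amdl_constraint}) with parameter~$\compLeakage+\negl(\security)$. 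Hence~$P_\security\in\amdl{\compLeakage+\negl(\security)}$, and taking~$\security\to\infty$ shows that~$C$ lies in the closure of~$\amdl{\compLeakage}$.

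The main obstacle is arguing that the adversary~$\A$ above is a valid QPT algorithm with polynomial advice. Computing~$\hp(\trans)$ is efficient (since~$\P$ is classical polynomial-time and~$\omap$ is assumed efficient), but the map~$\hat\ia:\hp\mapsto\arg\max_\ia P(\ia\mid\hp)$ depends on the global transcript statistics and is not evidently efficiently computable. I would handle this by encoding the relevant argmax values into polynomial advice (leveraging that the prover's strategy profile, being the output of a PPT machine, admits a succinct description), or by replacing~$\hat\ia$ with an efficiently computable proxy whose success probability is arbitrarily close to the true maximum.
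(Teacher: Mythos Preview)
Your overall architecture is right—define a transcript-derived hidden variable, verify the AMDL factorization, then bound the measurement-dependence parameter via an adversary against the hidden-input assumption—but your choice of $\hp$ creates a real gap in the adversary step that your proposed fixes do not close. You set $\hp(\trans)=(\omap(\trans),\{P_\P(\cdot\mid\trans,\ib)\}_\ib)$ and then claim the adversary efficiently computes $\hp(\trans)$. Two problems. First, you assume $\omap$ is efficient, but nothing in the framework requires this, and in the paper's own TCF showcase $\omap$ uses both preimages of the claw and hence the verifier's trapdoor; an adversary holding only $\state^\trans$ cannot compute $\omap(\trans)$ and therefore cannot index the advice. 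Second, the response-distribution component takes values in a simplex and can assume exponentially many distinct values across transcripts, so the lookup table $\hp\mapsto\hat\ia(\hp)$ is not polynomial-size; your appeal to a ``succinct description of the PPT prover's strategy'' does not convert this into an efficient map from $\state^\trans$ to $\hat\ia$, and says nothing about the $\omap$ obstruction.

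The paper resolves both issues with a single move: it takes $\hp:=(\ob_\ib)_{\ib\in\Ib}\in\Ob^{|\Ib|}$, the tuple of \emph{actual responses} obtained by rewinding the classical prover on every challenge. This $\hp$ is efficiently computable from $\state^\trans$ (run $\P$ forward $|\Ib|$ times) and lives in a set of constant size, so the advice table is constant. Since $\hp$ no longer carries $\omap(\trans)$, Player~A cannot read $\oa$ off $\hp$; instead Player~A resamples a transcript $\trans'\sim P(\cdot\mid\ia,\hp)$ and outputs $\omap(\trans')$, using the input-dependence of Alice's response that AMDL permits. A short distributional-equality check then shows the reconstructed correlation equals $P_\security$ exactly. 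The moral: coarsen $\hp$ to the response tuple and push the $\omap$-dependence into Player~A's (not the adversary's) strategy.
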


        \begin{proof}
            Let~$C \in \localcomp{\compLeakage}$.
            Then, by Definition~\ref{def:comp_classical_set}, there exists a classical verifier--prover pair~$(\V, \P)$ performing a canonical form protocol and a Bell mapping~$\bellmap$ satisfying the hidden input condition with leakage~$\compLeakage$, such that the Bell-mapped distribution~$P_\security$ converges to~$C$ in variation distance.

            We now construct a distribution~$\plrc$ that both:
            \begin{enumerate}[label=(\roman*)]
                \item reproduces the same statistics as~$P_\security$, and
                \item $\plrc\in\amdl{(\compLeakage+\negl(\security))}$.
            \end{enumerate}

            We define~$\plrc$ by constructing a standard nonlocal game.
            The construction simulates the verifier-prover interaction and specifies how the hidden variable~$\hp$ is sampled, how the inputs~$(\ia, \ib)$ are chosen, and how each player responds based on their input and the shared hidden parameter.
            The procedure is illustrated in Figure~\ref{fig:comp_mdl} and works as follows. 

            \begin{figure}[t!]
                \centering
    
                \begin{subfigure}[t]{0.47\textwidth}
                \centering
                \begin{tikzpicture}[>=Latex, node distance=1.2cm and 1.6cm]
    
                \node (T)        at (2,2.5)       {$T$};
                \node (X)        at (0,2.5)       {$X$};
                \node (Y)        at (4,2.5)       {$Y$};
                \node[draw, rounded corners=6pt, minimum size=1cm] (PA) at (0,1) {$P_{A|XT}$};
                \node[draw, rounded corners=6pt, minimum size=1cm] (PB) at (4,1) {$P_{B|YT}$};
                \node[below=of PA, anchor=south] (A) {$A$};
                \node[below=of PB, anchor=south] (B) {$B$};
    
                \draw[->] (X) -- (PA);
                \draw[->] (T) -- (PA);
                \draw[->] (T) -- (PB);
                \draw[->] (Y) -- (PB);
                \draw[->] (PA) -- (A);
                \draw[->] (PB) -- (B);
                \draw[->] (T) -- (X);
    
                \end{tikzpicture}
    
                \vspace{0.5em}
                \caption
                {
                    \footnotesize Real MDL interpretation -- The ``hidden'' parameter~$T$, representing the transcript~$\trans$, decides the states both parties are holding and the value of~$\ia=\imap(\trans)$.
                }
                \label{fig:comp_mdl_i}
                \end{subfigure}
                \hfill
                \begin{subfigure}[t]{0.47\textwidth}
                \centering
                \begin{tikzpicture}[>=Latex, node distance=1.2cm and 1.6cm]
    
                \definecolor{verylightgray}{gray}{0.7} 
                \node (T) at (2,2.5) {\textcolor{verylightgray}{$T$}};
                \node (X)        at (0,2.5)       {$X$};
                \node (Y)        at (4,2.5)       {$Y$};
                \node[draw, rounded corners=6pt, minimum size=1cm] (PA) at (0,1) {$P_{A|X\Gamma}$};
                \node[draw, rounded corners=6pt, minimum size=1cm] (PB) at (4,1) {$P_{B|Y\Gamma}$};
                \node (G)        at (2,1.6)       {$\Gamma$};
                \node[below=of PA, anchor=south] (A) {$A$};
                \node[below=of PB, anchor=south] (B) {$B$};
    
                \draw[->, verylightgray] (T) -- (G);
                \draw[->, verylightgray] (T) -- (X);
                \draw[->] (X) -- (PA);
                \draw[->] (Y) -- (PB);
                \draw[->] (G) -- (PA);
                \draw[->] (G) -- (PB);
                \draw[->] (PA) -- (A);
                \draw[->] (PB) -- (B);
    
                \end{tikzpicture}
    
                \vspace{0.5em}
                \caption
                {
                    \footnotesize ``Processed'' interpretation -- The ``hidden'' parameter is~$\Gamma$, representing instructions for Player~$\playerB$.
                }
                \label{fig:comp_mdl_ii}
                \end{subfigure}
                \vspace{-0.6em}
                \caption
                {
                    \footnotesize MDL interpretation of the protocol template -- The verifier~$\mathcal{V}$ and device~$\mathcal{D}$ receive respective inputs~$x,m$ and respectively output~$a,b$.
                }
                \label{fig:comp_mdl}
            \end{figure}

            \begin{enumerate}
                \item \textbf{Sampling the hidden variable.}  
                The referee simulates an interaction between~$\V$ and~$\P$ to generate a transcript~$\trans$.
                Then, for every challenge~$\ib \in \Ib$, the referee rewinds~$\P$ and queries it on~$\ib$ using the same transcript~$\trans$, recording the response~$\ob_{\ib}$.
                Since~$\P$ is a classical PPT device, rewinding is allowed to extract consistent answers.
                Define the hidden parameter~$\hp := (\ob_{\ib})_{\ib \in \Ib} \in \Ob^{|\Ib|}$.

                \item \textbf{Input sampling.}  
                Let~$\ia := \imap(\trans)$ and sample~$\ib$ independently and uniformly at random from~$\Ib$.

                \item \textbf{Player strategies.}
                \begin{itemize}
                    \item \textbf{Player A}, on input~$\ia$ and shared parameter~$\hp$, samples a new transcript~$\trans'$ from the conditional distribution~$P(\trans' \mid \ia, \hp)$, defined to match the distribution over transcripts conditioned on~$\imap(\trans') = \ia$ and~$\hp$.
                    Then outputs~$\oa := \omap(\trans')$.
                    \item \textbf{Player B}, on input~$\ib$ and~$\hp$, returns~$\ob := \hp_{\ib}$.
                \end{itemize}
            \end{enumerate}

            \textbf{Claim (i).~$\plrc$ reproduces the Bell-mapped distribution.}
            We show that the resulting distribution~$\plrc$ is statistically indistinguishable from the Bell-mapped distribution~$P_\security$ obtained by applying the Bell mapping to the original verifier-prover interaction.

            If Player~$\playerA$ were given the original transcript~$\trans$ directly, and simply returned~$\oa := \omap(\trans)$, the resulting distribution would trivially match~$P_\security$ by definition.
            In our construction, however, Player~$\playerA$ samples~$\trans'$ conditioned on~$\imap(\trans') = \ia$ and~$\hp$.

            Let~$T$ denote the random variable representing the original transcript, and~$T'$ the one sampled by Player~$\playerA$.
            Let~$\Hp$ denote the hidden parameter.
            We claim that
            \begin{equation}\label{eq:statistical_equality_current}
                (T', \Hp, \imap(T)) \overset{d}{=} (T, \Hp, \imap(T)) \;.
            \end{equation}
            To see this, we expand the joint distribution:
            \begin{align}
                \Pr(T' = \tau, \Hp = \hp, \imap(T) = \ia)
                &= \Pr(T' = \tau \mid \Hp = \hp, \imap(T) = \ia) \cdot \Pr(\Hp = \hp, \imap(T) = \ia) \\
                &= \Pr(T = \tau \mid \Hp = \hp, \imap(T) = \ia) \cdot \Pr(\Hp = \hp, \imap(T) = \ia) \\
                &= \Pr(T = \tau, \Hp = \hp, \imap(T) = \ia) \;,
            \end{align}
            where the second equality holds by the definition of~$T'$ as sampling from the same conditional distribution as~$T$.

            Because the joint distributions over~$(T, \Hp, \imap(T))$ and~$(T', \Hp, \imap(T))$ are equal, the full joint distribution including~$\ib$ and deterministic functions of the transcript is preserved:
            \begin{align}
                (T', \Hp, \imap(T), \ib) &\overset{d}{=} (T, \Hp, \imap(T), \ib) \;, \\
                (\omap(T'), \Hp_\ib) &\overset{d}{=} (\omap(T), \Hp_\ib) \;.
            \end{align}
            Hence, the output tuple~$(\ia, \ib, \oa, \ob)$ under~$\plrc$ is identically distributed to~$P_\security$.

            \textbf{Claim (ii).~$\plrc\in\amdl{(\compLeakage+\negl(\security))}$.}
            Assume toward contradiction that this is not the case.
            Then there exists a non-negligible function~$\mu(\cdot)$ such that
            \begin{equation}\label{eq:contradiction_assumption}
                \E_\hp \left[
                    \max_{\ia} P(\ia \mid \hp) - \frac{1}{|\Ia|}
                \right] > \compLeakage + \mu(\security) \;.
            \end{equation}

            \newcommand{\adversary}{\mathcal{W}}
            We now construct a QPT adversary~$\adversary$ that breaks the hidden input assumption (Definition~\ref{assumption:compLeakage}).
            On input~$\state^{\trans}$, which includes the transcript~$\trans$, the adversary computes~$\hp := (\ob_{\ib})_{\ib \in \Ib}$ by rewinding the prover~$\P$ on all inputs~$\ib$ (which is possible since~$\P$ is classical and polynomial-time).
            It then uses advice~$z_\hp \in \Ia$ corresponding to the most likely input~$\ia$ under~$P(\ia \mid \hp)$.
            That is,
            \begin{equation}
                z_\hp \coloneqq \arg\max_{\ia'} P(\ia' \mid \hp) \;.
            \end{equation}

            Note that the space of hidden parameters~$\hp$ is constant (bounded by~$|\Ob|^{|\Ib|}$), so the advice table~$\hp \mapsto z_\hp$ has constant size with respect to the security parameter~$\security$.
            Thus~$\adversary$ is a QPT adversary with constant advice.
            The success probability of~$\adversary$ is then
            \begin{equation}
                \Pr_{\trans}[\adversary(\state^\trans) = \imap(\trans)] = \E_\hp \left[ \max_\ia P(\ia \mid \hp) \right] \;.
            \end{equation}
            Combining this with Equation~\eqref{eq:contradiction_assumption}, we have
            \begin{equation}
                \Pr[\adversary(\state^\trans) = \imap(\trans)] > \frac{1}{|\Ia|} + \compLeakage + \mu(\security) \;,
            \end{equation}
            which contradicts the hiding assumption (Definition~\ref{assumption:compLeakage}), since~$\mu(\security)$ is non-negligible.
            
            Therefore,~$P_\security \in \amdl{\compLeakage + \negl(\security)}$.
            By the continuity of the AMDL set (Lemma~\ref{lemma:mdl_continuity}), it follows that~$C$ belongs to the closure of~$\amdl{\compLeakage}$.
        \end{proof}
    
        Having established that classical provers induce distributions in~$\amdl{\compLeakage}$, 
        we now study the structure of the set~$\amdl{\kappa}$ itself. 
        In particular, we would like to understand how a distribution in this set compares to the well-studied MDL sets. 
        The next lemma shows that any distribution in~$\amdl{\kappa}$ can be expressed as a convex combination of an MDL distribution (with slightly relaxed parameters) and an unconstrained remainder. 
        This decomposition will later allow us to translate guarantees for MDL inequalities into corresponding bounds for~$\amdl{\kappa}$.

    \subsection{From measurement dependent locality to computational Bell inequalities}\label{sec:comp_bell}

        This subsection is devoted to proving Theorem~\ref{thm:cl_to_local}.
        Operationally, the theorem furnishes an explicit \emph{computational Bell inequality} tailored to canonical-form protocols: after applying a Bell mapping, every classical (PPT) prover produces correlations that satisfy the inequality.
        Thus, we obtain a protocol-specific bound that no efficient classical strategy can surpass, while leaving room for quantum strategies to violate.

        \begin{figure}[t!]
            \centering
            \includegraphics[scale=0.3]{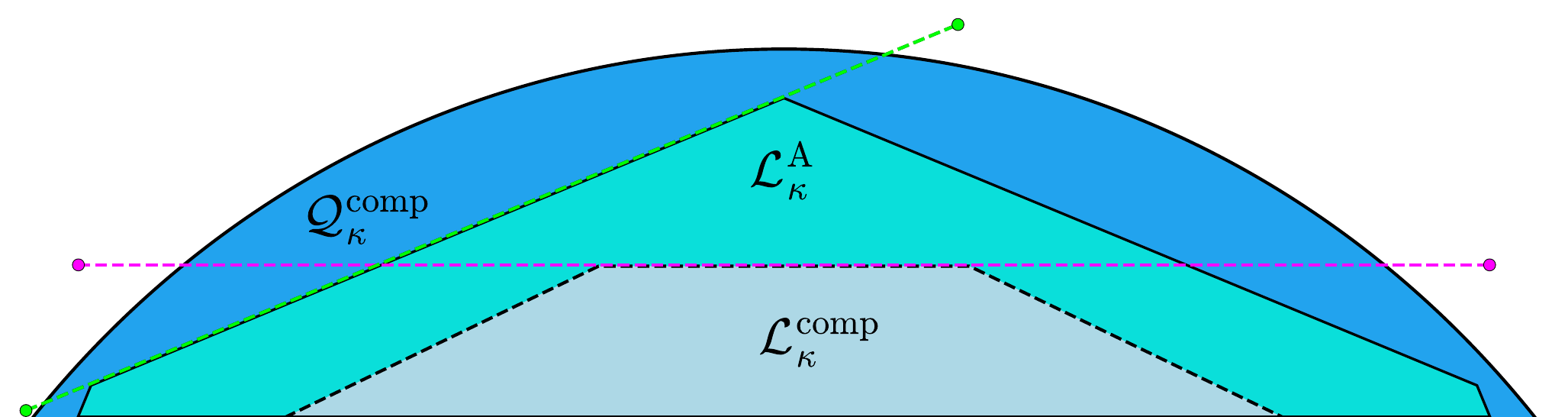}
            \caption{
                \footnotesize                
                A schematic~$2$-dimensional slice of correlation space for fixed leakage~$\compLeakage$ in the CHSH Bell scenario.
                The teal polygon is the average measurement-dependent local (AMDL) polytope~$\amdl{\compLeakage}$, which contains the computational-local set~$\localcomp{\compLeakage}$.
                The location of the computational-quantum set~$\quantumcomp{\compLeakage}$ relative to~$\amdl{\compLeakage}$ is not asserted; however, the two are disjoint, as a computational Bell inequality separates~$\quantumcomp{\compLeakage}$ from~$\amdl{\compLeakage}$ (and hence from~$\localcomp{\compLeakage}$).
                The dashed magenta line illustrates a shifted-CHSH inequality adapted to input leakage.
                The dashed lime line illustrates a facet-defining AMDL inequality, violated by some~$Q\in\quantumcomp{\compLeakage}$.
            }
            \label{fig:polytopes}
        \end{figure}

        \begin{theorem}\label{thm:cl_to_local}
            Let~$\compLeakage\in[0,1]$ and let~$\flexibility>0$.
            There exists an explicit computational Bell inequality $\I$ (with parameters depending on $\compLeakage$ and $\flexibility$) such that for any distribution~$\pcomp\in\localcomp{\compLeakage}$,
            \begin{equation}
                \I(\pcomp)\leq 0 \; .
            \end{equation}
        \end{theorem}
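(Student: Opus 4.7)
The strategy is to chain three ingredients: (i)~Lemma~\ref{lemma:classical_amdl_reduction}, which embeds $\localcomp{\compLeakage}$ into the closure of $\amdl{\compLeakage}$; (ii)~the AMDL decomposition lemma alluded to just before the theorem statement, which expresses any distribution in $\amdl{\compLeakage}$ as a convex combination of a proper MDL distribution with parameters $\mdlparams$ and an unconstrained residual; (iii)~any nontrivial affine MDL inequality for $\mdl{(\mdlparams)}$, shifted by the worst-case contribution of the residual so that the resulting functional is nonpositive on all of $\localcomp{\compLeakage}$.

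Starting with $\pcomp\in\localcomp{\compLeakage}$, Lemma~\ref{lemma:classical_amdl_reduction} gives an AMDL representation (of an arbitrarily close approximation of) $\pcomp$ with hidden variable $\hp$, response laws $P_\hp(\oa\mid\ia),\,P_\hp(\ob\mid\ib)$, uniform $P(\ib)=1/|\Ib|$, and input law satisfying $\E_\hp[\max_\ia P(\ia\mid\hp)-1/|\Ia|]\leq\compLeakage$. Applying Markov's inequality to the nonnegative variable $D(\hp):=\max_\ia P(\ia\mid\hp)-1/|\Ia|$ at threshold $\compLeakage+\flexibility$ yields
\begin{equation}
    \Pr_{\hp}\bigl[D(\hp)>\compLeakage+\flexibility\bigr]\;\leq\;\frac{\compLeakage}{\compLeakage+\flexibility} \; .
\end{equation}
Partitioning $\Gamma$ into a good set $\Gamma_g:=\{\hp:D(\hp)\leq\compLeakage+\flexibility\}$ and a bad set $\Gamma_b:=\Gamma\setminus\Gamma_g$, and splitting $\hpDensity$ accordingly, produces a convex decomposition $\pcomp=p_g P_g+p_b P_b$ with $p_b\leq\compLeakage/(\compLeakage+\flexibility)$. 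On $\Gamma_g$, the pointwise bound $P(\ia\mid\hp)\leq 1/|\Ia|+\compLeakage+\flexibility$ holds, and normalization forces $P(\ia\mid\hp)\geq 1-(|\Ia|-1)(1/|\Ia|+\compLeakage+\flexibility)$; combined with the uniform, independent $P(\ib)$, this exactly certifies $P_g\in\mdl{(\mdlparams)}$ with $\mdlparamsExplicit$.

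To conclude, I fix any nontrivial affine MDL inequality $\bellInequality_{\mathrm{MDL}}$ valid on $\mdl{(\mdlparams)}$---for example, the facet of Claim~\ref{claim:putz2014} instantiated at these $(l_\flexibility,h_\flexibility)$ in the CHSH scenario, or any tight affine constraint of the MDL polytope in general---and let $M:=\max_{P\in\uc}\bellInequality_{\mathrm{MDL}}(P)$ be its worst-case value over all conditional distributions. Define
\begin{equation}
    \I(\pcomp)\;:=\;\bellInequality_{\mathrm{MDL}}(\pcomp)\;-\;\frac{\compLeakage}{\compLeakage+\flexibility}\,M \; .
\end{equation}
By affinity, $\bellInequality_{\mathrm{MDL}}(\pcomp)=p_g\bellInequality_{\mathrm{MDL}}(P_g)+p_b\bellInequality_{\mathrm{MDL}}(P_b)\leq 0+p_b M\leq \frac{\compLeakage}{\compLeakage+\flexibility}M$, hence $\I(\pcomp)\leq 0$. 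The closure issue inherited from Lemma~\ref{lemma:classical_amdl_reduction} is absorbed by continuity of the affine functional $\I$.

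The main obstacle is calibrating the Markov step so that the good part lands in $\mdl{(\mdlparams)}$ with \emph{exactly} the parameters stated. Choosing the Markov threshold to be $\compLeakage+\flexibility$ (rather than the naive $\flexibility$) is what produces the $\compLeakage$ inside $h_\flexibility$ and simultaneously fixes the shift magnitude at $\compLeakage/(\compLeakage+\flexibility)$; the normalization inequality is what converts the pointwise upper bound on $P(\ia\mid\hp)$ into the matching lower bound $l_\flexibility$. Once this calibration is set, convexity of MDL and affinity of Bell functionals make the rest essentially a one-line computation.
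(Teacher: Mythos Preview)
Your proof is correct and follows essentially the same approach as the paper: the paper packages the Markov-threshold decomposition (your good/bad split at level $\compLeakage+\flexibility$) into Lemma~\ref{lemma:amdl_mdl_decomposition_one_sided}, the affine bound $\bellInequality_{\mathrm{MDL}}(\pcomp)\leq \frac{\compLeakage}{\compLeakage+\flexibility}M$ into Lemma~\ref{lemma:amdl_mdl_inequality_bound}, and then defines $\I$ by the identical shift, invoking continuity to pass to the closure. You have simply inlined those two lemmas, with the same threshold choice, the same normalization argument for $l_\flexibility$, and the same shifted functional.
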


        \begin{proof}
            See Appendix~\ref{ap_sec:supp_material}.
        \end{proof}

        To prove Theorem~\ref{thm:cl_to_local} we rely on two lemmas. 
        The first, Lemma~\ref{lemma:amdl_mdl_decomposition_one_sided}, shows a one-sided decomposition for behaviors in $\amdl{\compLeakage}$: any such behavior can be written as a convex combination of a distribution inside a slightly relaxed MDL set $L\in\mdl{(l_{\flexibility},h_{\flexibility})}$ and an unconstrained distribution $S\in\uc$, with the weight on the unconstrained part controlled by $\compLeakage$ and the slack parameter $\flexibility$. 
        The second, Lemma~\ref{lemma:amdl_mdl_inequality_bound}, turns this structural statement into a bound for inequalities: starting from any MDL inequality valid for $\mdl{(l_{\flexibility},h_{\flexibility})}$, affinity implies its contribution on the MDL component is nonpositive, so it suffices to control the small unconstrained fraction~$S\in\uc$. 
        Lemma~\ref{lemma:amdl_mdl_inequality_bound} formalizes this transfer and yields an explicit loss that scales like $\compLeakage/(\compLeakage+\flexibility)$ against the inequality’s maximum over $\uc$.
        
        \begin{lemma}[One-sided MDL decomposition under AMDL, with explicit $l$]\label{lemma:amdl_mdl_decomposition_one_sided}
            Let~$\kappa \geq 0$ and~$\vartheta > 0$, and let~$P \in \amdl{\kappa}$.
            Then, one can always write
            \begin{equation}
                P \;=\; \left(1-\frac{\kappa}{\kappa+\vartheta}\right)\,L \;+\; \frac{\kappa}{\kappa+\vartheta}\,S \; ,
            \end{equation}
            with~$S \in \uc$ and~$L \in \mdl{(\mdlparams)}$ \textup{(see Definition~\ref{def:mdl_set})},
            \begin{equation}
                \mdlparamsExplicit \; .
            \end{equation}
        \end{lemma}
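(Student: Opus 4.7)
The plan is to produce the decomposition explicitly by partitioning the hidden-variable space $\Gamma$ into a ``good'' and a ``bad'' event, handling each piece separately, and then rebalancing the weights so that the coefficient of the remainder matches exactly $\kappa/(\kappa+\vartheta)$.

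\textbf{Step 1 (setup).} By Definition~\ref{def:amdl_set}, since $P\in\amdl{\kappa}$ it admits the factorization
$P(\oa,\ob,\ia,\ib) = \int d\hp\,\hpDensity(\hp)\,P(\ia\mid\hp)\,P(\ib)\,P_\hp(\oa\mid\ia)\,P_\hp(\ob\mid\ib)$
with $P(\ib)=1/|\Ib|$ uniform and independent of $\hp$, and the AMDL constraint $\E_\hp[\max_{\ia}P(\ia\mid\hp)-1/|\Ia|]\leq \kappa$. Because $P(\ib\mid\hp)=1/|\Ib|$, the joint input distribution conditioned on $\hp$ is $P(\ia,\ib\mid\hp)=P(\ia\mid\hp)/|\Ib|$, so all MDL-type bounds on $P(\ia,\ib\mid\hp)$ reduce to bounds on $P(\ia\mid\hp)$.

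\textbf{Step 2 (Markov split).} Define $G=\{\hp\in\Hp:\max_\ia P(\ia\mid\hp)\leq 1/|\Ia|+\kappa+\vartheta\}$ and $B=\Hp\setminus G$. Applying Markov's inequality to the non-negative quantity $\max_\ia P(\ia\mid\hp)-1/|\Ia|$ yields
$p \;\coloneqq\; \Pr[\hp\in B] \;\leq\; \frac{\E_\hp[\max_\ia P(\ia\mid\hp)-1/|\Ia|]}{\kappa+\vartheta} \;\leq\; \frac{\kappa}{\kappa+\vartheta}.$
Let $P_G$ and $P_B$ denote the conditional mixtures obtained by restricting $\hpDensity$ to $G$ and $B$ respectively (renormalized). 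Then $P=(1-p)P_G+pP_B$. Trivially $P_G,P_B\in\uc$.

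\textbf{Step 3 (MDL membership of $P_G$).} For $\hp\in G$ the upper bound $P(\ia\mid\hp)\leq 1/|\Ia|+\kappa+\vartheta$ holds by construction; normalization $\sum_\ia P(\ia\mid\hp)=1$ forces, for each $\ia$, the matching lower bound $P(\ia\mid\hp)\geq 1-(|\Ia|-1)(1/|\Ia|+\kappa+\vartheta)$. Dividing by $|\Ib|$ converts these into the explicit interval $[l_\flexibility,h_\flexibility]$ for $P(\ia,\ib\mid\hp)$. Together with the AMDL factorization inherited from $P$, this exhibits $P_G\in\mdl{(\mdlparams)}$.

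\textbf{Step 4 (rebalancing).} Setting $q=\kappa/(\kappa+\vartheta)\geq p$, write
$P \;=\; (1-q)P_G \;+\; \bigl[(q-p)P_G + pP_B\bigr] \;=\; (1-q)L + qS,$
where $L\coloneqq P_G\in\mdl{(\mdlparams)}$ and $S\coloneqq \tfrac{q-p}{q}P_G + \tfrac{p}{q}P_B\in\uc$ by convexity of $\uc$. A direct expansion checks that the two sides agree, which is the desired decomposition.

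\textbf{Main obstacle.} There is no deep difficulty here; the only genuinely delicate point is the passage from the AMDL constraint, which is averaged and one-sided on $\ia$ only, to a pointwise two-sided MDL constraint on $(\ia,\ib)$. Markov's inequality handles the average-to-pointwise conversion on $G$, and the fact that $P(\ib)$ is uniform and independent of $\hp$ in the AMDL factorization is what lets the $\ia$-bound alone pin down both $l_\flexibility$ and $h_\flexibility$ for the joint $P(\ia,\ib\mid\hp)$; the remaining rebalancing step just spends the slack $q-p$ to match the stated coefficient exactly.
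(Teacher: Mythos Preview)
Your proof is correct and follows essentially the same approach as the paper: Markov's inequality on $\max_\ia P(\ia\mid\hp)-1/|\Ia|$ to split $\Gamma$ into good and bad events, then derive the pointwise MDL bounds on the good part from the upper bound plus normalization. Your Step~4 rebalancing is a small addition the paper leaves implicit (its proof only obtains $\alpha\le \kappa/(\kappa+\vartheta)$ and then asserts ``the claimed decomposition follows''); making the exact coefficient explicit via the convex shift $S=\tfrac{q-p}{q}P_G+\tfrac{p}{q}P_B$ is a nice touch that tightens the match with the stated lemma.
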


        \begin{lemma}[Bounding AMDL distributions by MDL inequalities]\label{lemma:amdl_mdl_inequality_bound}
            Let $\flexibility>0$.
            Define~$\local_\flexibility\coloneqq\mdl{(l_\flexibility, h_\flexibility)}$
            Where
            \begin{equation}
                \mdlparamsExplicit \;.
            \end{equation}
            Let $\bellInequality_\flexibility$ be any MDL inequality valid for $\local_\flexibility$.
            Then for any $P\in\amdl{\kappa}$,
            \begin{equation}
                \bellInequality_\flexibility(P) \;\le\; \frac{\kappa}{\kappa+\flexibility}\cdot
                \max_{S\in\uc}\;\bellInequality_\flexibility(S)
                \;=\; \O\!\left(\frac{\kappa}{\kappa+\flexibility}\right) \; .
            \end{equation}
        \end{lemma}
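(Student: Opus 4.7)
The plan is to combine Lemma~\ref{lemma:amdl_mdl_decomposition_one_sided} with the affinity of $\bellInequality_\flexibility$; this lemma is essentially a direct corollary of the decomposition. First, I would invoke Lemma~\ref{lemma:amdl_mdl_decomposition_one_sided} on $P\in\amdl{\kappa}$ to obtain
\begin{equation}
    P \;=\; \left(1-\frac{\kappa}{\kappa+\flexibility}\right) L \;+\; \frac{\kappa}{\kappa+\flexibility}\, S,
\end{equation}
with $L\in\local_\flexibility=\mdl{(l_\flexibility,h_\flexibility)}$ and $S\in\uc$. The convex weight $\kappa/(\kappa+\flexibility)$ lies in $[0,1]$ since $\kappa\ge 0$ and $\flexibility>0$, so this is a bona fide convex combination.

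Next, I would use the Remark from the preliminaries which restricts attention to affine Bell inequalities of the form $\bellInequality(P)=\bellInequality_0+\sum v_{\oa,\ob,\ia,\ib} P(\oa,\ob,\ia,\ib)$. Any such functional commutes with convex combinations, so
\begin{equation}
    \bellInequality_\flexibility(P) \;=\; \left(1-\frac{\kappa}{\kappa+\flexibility}\right)\bellInequality_\flexibility(L) \;+\; \frac{\kappa}{\kappa+\flexibility}\,\bellInequality_\flexibility(S).
\end{equation}
By hypothesis $\bellInequality_\flexibility$ is valid on $\local_\flexibility$, so $\bellInequality_\flexibility(L)\le 0$, and bounding the remaining term by its maximum over $\uc$ yields
\begin{equation}
    \bellInequality_\flexibility(P) \;\le\; \frac{\kappa}{\kappa+\flexibility}\max_{S'\in\uc}\bellInequality_\flexibility(S').
\end{equation}

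For the big-$\O$ statement, I would note that once $\flexibility$ is fixed the coefficients of $\bellInequality_\flexibility$ do not depend on $\kappa$, and $\uc$ is a compact simplex over a finite alphabet, so $\max_{S'\in\uc}\bellInequality_\flexibility(S')$ is a finite constant in $\kappa$; hence the whole right-hand side is $\O(\kappa/(\kappa+\flexibility))$. There is no substantive obstacle: all of the structural work was already carried out in Lemma~\ref{lemma:amdl_mdl_decomposition_one_sided}, and the present result follows cleanly from convexity and the affinity of Bell inequalities. The only items worth verifying are that the convex weight is in $[0,1]$ and that $\bellInequality_\flexibility$ is genuinely affine, both of which are immediate.
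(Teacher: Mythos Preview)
Your proof is correct and follows essentially the same route as the paper: invoke the decomposition of Lemma~\ref{lemma:amdl_mdl_decomposition_one_sided}, use affinity of $\bellInequality_\flexibility$ to split the value across the convex combination, drop the nonpositive $\bellInequality_\flexibility(L)$ term, and bound the remainder by $\tfrac{\kappa}{\kappa+\flexibility}\max_{S\in\uc}\bellInequality_\flexibility(S)$. The only cosmetic difference is that the paper writes the mixing weight as $\alpha\le \kappa/(\kappa+\flexibility)$ rather than equality, but this does not affect the argument.
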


        \begin{proof}[Proofs of Lemmas~\ref{lemma:amdl_mdl_decomposition_one_sided} and~\ref{lemma:amdl_mdl_inequality_bound}]
            See Appendix~\ref{ap_sec:supp_material}.
        \end{proof}

        Let us explain the importance of Theorem~\ref{thm:cl_to_local} and compare it to previous works. 
        Firstly, the theorem allows us to reduce the problem of analyzing the interaction of~$(\V, \P)$ in the case of a PPT prover to the setup of local distributions. This means that all limitations that hold for local distributions in the nonlocal setup can be directly transferred to limitations on PPT provers. For example, there is no longer a need to analyze the winning probability of a PPT prover in a test of quantumness using proofs tailored to a specific protocol, as in~\cite{kahanamoku2022classically} for example.

        Secondly---and crucially---is the ability to tailor a Bell inequality to the relevant scenario.
        In our case this corresponds to the transition to an MDL inequality.
        MDL inequalities are stronger than, e.g., the CHSH inequality, when the (virtual) inputs are not fully independent from the behavior of the devices (the strategy of the prover). 
        In the case of certification using a single device using a computational assumption, this dependency comes in two forms: (1) Both the virtual input and the strategy of the prover depend on the transcript (2) The usage of the computational assumption does not lead to a completely hidden virtual input. 

        We can examine the consequence of this dependency in terms of certification using Figure~\ref{fig:polytopes}.
        When there is no dependency at all, one can use the CHSH inequality, which separates the local set $\local$ from the quantum one $\quantum$ (recall Figure~\ref{fig:sets}).
        Now, what is typically done when some dependency between the inputs and the strategy of the devices is introduced is to simply ``shift'' the CHSH inequality~\cite{brakerski2023simple,kahanamoku2022classically}; the shifted CHSH is denoted by the magenta dashed line in Figure~\ref{fig:polytopes}.
        In this situation, it is also harder for quantum correlations to violate the shifted inequality.
        At some point, the virtual hiding becomes so large that no distribution in $\mathcal{Q}$ can violate the shifted inequality.
        The MDL inequality, denoted by the lime dashed line, is tilted in a way that will always allow for some distribution in $\mathcal{Q}$ to violate it. Thus, the MDL inequalities are better suited for studying the correlations that arise from the canonical protocols.
        
        While prior work such as~\cite{brakerski2023simple} provide analytic bounds on the achievable CHSH score by quantum provers in the presence of input leakage, it is often unclear how to realize such winning strategies in concrete protocols.
        In particular, when a quantum prover attempts to gain information about the virtual input~$\ia$, it may be forced to perform a measurement or otherwise disturb its internal state.
        This can interfere with its ability to maintain a coherent superposition, which is essential for achieving quantum advantage in the CHSH game.

        In effect, the tasks of \emph{guessing the virtual input} and \emph{winning the game} may conflict.
        As a result, even though the theoretical upper bound for quantum violation increases with leakage, it may not be achievable in practice within a given protocol.
        This tension motivates the use of MDL inequalities: unlike the shifted CHSH inequality, an MDL inequality is structurally adapted to the presence of input-strategy correlation, and can better account for this tradeoff.

        To summarize, we have shown that any classical prover, when mapped into a Bell scenario using a leakage-bounded Bell mapping, induces a distribution that lies within a one-sided measurement-dependent local set~$\amdl{\kappa}$.
        By combining this observation with an MDL inequality tailored to~$\kappa$, we obtain a bound on the degree to which any classical prover can violate the inequality.
        This provides a computationally meaningful analogue of classical locality in the space of correlations, and sets the stage for understanding which behaviors remain possible when the prover is quantum.

        \subsection{The computational quantum set}\label{sec:comp_quantum}

        In the previous section, we analyzed the local distributions arising from interactions between a verifier and a classical PPT prover in the canonical protocol.
        We showed that these distributions belong to the set~$\amdl{\compLeakage}$ and satisfy a corresponding MDL inequality, thereby establishing computational soundness for classical strategies.

        To use this framework for the certification of quantum provers, we must now extend the analysis to quantum interactions.
        That is, we need to characterize the correlations and internal states generated when the prover is an efficient quantum device (i.e., QPT).
        This requires a description of the structure of the quantum states used or generated by the prover in the protocol and how they relate to the Bell-mapped inputs and outputs.

        We begin with a definition of the quantum set.

        \begin{definition}[Computational quantum set]\label{def:comp_quantum_set}
            We say that a distribution~$Q$ belongs to the \emph{computational quantum set}~$\quantumcomp{\compLeakage}$ if there exists a verifier-prover pair~$(\V, \P)$, where~$\P$ is a QPT device, performing a canonical form protocol (Definition~\ref{def:canonical_form}), and a Bell mapping~$(\imap, \omap)$, such that:
            \begin{enumerate}
                \item The Bell mapping~$(\imap, \omap)$ satisfies the hidden input condition with leakage~$\compLeakage$ (Definition~\ref{assumption:compLeakage});
                \item Letting~$P_\security$ be the Bell-mapped distribution arising from the interaction between~$\V$ and a QPT prover~$\P$ at security parameter~$\security$ (Definition~\ref{def:bell_mapped_distribution}), we have
                \begin{equation}
                    \lim_{\security \to \infty} \delta(Q, P_\security) = 0 \;.
                \end{equation}
            \end{enumerate}
        \end{definition}

        In the quantum computational setting, a canonical protocol induces a distribution over transcripts $\trans$, and each transcript determines the prover's (possibly subnormalized) final state $\state^{\trans}$.
        By grouping these states according to the Bell-mapped input $\ia=\imap(\trans)$ and output $\oa=\omap(\trans)$, we obtain a family of virtual states $\{\state^{\oa\mid\ia}\}_{\ia,\oa}$ together with their input marginals $\{\state^{\ia}\}_{\ia}$.
        Intuitively, $\state^{\oa\mid\ia}$ is the prover's final state conditioned on the input--output pair, while $\state^{\ia}$ averages over outputs at the same input.
        This representation will be convenient for the mathematical analysis of QPT provers' capabilities within our framework.

        \begin{definition}[Input and input--output conditioned states]\label{def:virtual_state_family}
            Let~$(\V, \P)$ be a verifier--prover pair performing a canonical form protocol with transcript set~$\Trans$ and Bell mapping~$\bellmap$.
            For each~$\ia \in \Ia$ and~$\oa \in \Oa$, define the subnormalized state~$\state^{\oa \mid \ia}$ as
            \begin{equation}
                \state^{\oa \mid \ia} \coloneqq
                \sum_{\trans \in \Trans \,:\, \imap(\trans) = \ia,\; \omap(\trans) = \oa}
                \Pr(\trans \mid \imap(\trans) = \ia) \cdot \state^{\trans} \; .
            \end{equation}
        
            The (normalized) input-conditioned states~$\state^{\ia}$ are then defined as
            \begin{equation}
                \state^{\ia} \coloneqq
                \sum_{\oa \in \Oa} \state^{\oa \mid \ia} \; .
            \end{equation}
        \end{definition}

        This construction naturally leads to a quantum correlation over the Bell scenario defined by the mapping~$(\imap, \omap)$.
        Indeed, once the prover's state is conditioned on a particular virtual input~$\ia$, the canonical protocol specifies how the prover processes a challenge~$\ib \in \Ib$ by applying the POVM~$\{\measby\}_{\ob\in\Ob}$ to the state~$\state^{\ia}$.
        The outcome~$\ob \in \Ob$ completes the correlation tuple~$(\ia, \ib, \oa, \ob)$.

        More precisely, the probability of observing outcomes~$(\oa, \ob)$ given inputs~$(\ia, \ib)$ is determined by:
        \begin{equation}
            P(\oa, \ob \mid \ia, \ib) = \tr\left[ \state^{\oa\mid\ia} \measby \right] \;.
        \end{equation}
        This defines a quantum correlation over the Bell scenario~$(\Ia, \Ib, \Oa, \Ob)$, capturing both the structure of the canonical protocol via~$\state^{\oa \mid \ia}$, and the quantum prover's measurement strategy in Phase~B via~$\measby$.

        \begin{lemma}[State-based representation of $\quantumcomp{\compLeakage}$]\label{lem:state_representation}
            Fix a canonical-form protocol and Bell mapping $(\imap,\omap)$, and let
            $\pi(\ia,\ib)$ denote the verifier's joint input distribution over $\Ia\times\Ib$
            (not necessarily uniform or independent).
            Let $\P$ be a QPT prover and let $\{\state^{\oa\mid\ia}\}_{\ia,\oa}$ and
            $\{\measby\}_{\ib,\ob}$ be as in Definition~\ref{def:virtual_state_family}.
            Then, for all $\ia,\ib,\oa,\ob$,
            \begin{equation}
              P(\oa,\ob \mid \ia,\ib) \;=\; \tr\!\left[\,\state^{\oa\mid\ia}\,\measby\,\right]
              \qquad\text{and}\qquad
              P(\ia,\ib,\oa,\ob) \;=\; \pi(\ia,\ib)\,\tr\!\left[\,\state^{\oa\mid\ia}\,\measby\,\right] \; .
            \end{equation}
        \end{lemma}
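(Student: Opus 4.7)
The plan is to unfold the Bell-mapped joint distribution from Definition~\ref{def:bell_mapped_distribution}, apply Born's rule to the Phase~B measurement, and then regroup the sum over transcripts so that the definition of $\state^{\oa\mid\ia}$ from Definition~\ref{def:virtual_state_family} appears naturally inside a trace. Since everything is finite and discrete, the argument is essentially bookkeeping; the only care needed is keeping track of what $\ib$ is independent of, and remembering that all probabilities are implicitly conditioned on $\flag=\continue$.

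First I would write
\begin{equation}
    P(\ia,\ib,\oa,\ob)
    \;=\;
    \sum_{\trans\in\Trans}
    \Pr(\trans)\,\Pr(\ib)\,
    \indicator{\imap(\trans)=\ia}\,\indicator{\omap(\trans)=\oa}\,
    \tr\!\left[\state^{\trans}\,\measby\right],
\end{equation}
justifying each factor: $\Pr(\trans)$ is the Phase~A transcript probability (conditioned on $\flag=\continue$), $\Pr(\ib)=1/|\Ib|$ is the uniform Phase~B challenge, independent of $\trans$ by the canonical-form template, the indicators enforce the Bell mapping, and Born's rule on the prover's post-Phase~A state $\state^{\trans}$ under the POVM $\povm{\ib}$ yields the trace. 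This realizes the expression in Definition~\ref{def:bell_mapped_distribution} as an explicit state-and-measurement formula.

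Next I would restrict the sum to transcripts with $\imap(\trans)=\ia$ and $\omap(\trans)=\oa$, pull $\Pr(\ib)$ out, and divide by $\pi(\ia,\ib)=\Pr(\imap(T)=\ia)\cdot\Pr(\ib)$ to obtain
\begin{equation}
    P(\oa,\ob\mid\ia,\ib)
    \;=\;
    \sum_{\trans:\,\imap(\trans)=\ia,\,\omap(\trans)=\oa}
    \Pr(\trans\mid\imap(\trans)=\ia)\,\tr\!\left[\state^{\trans}\,\measby\right].
\end{equation}
Using linearity of the trace to move the convex combination inside, the bracketed operator is exactly $\state^{\oa\mid\ia}$ from Definition~\ref{def:virtual_state_family}, giving the first equality. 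Multiplying by $\pi(\ia,\ib)$ recovers the joint expression, which is the second equality.

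The only potentially delicate point is the independence of $\ib$ from the Phase~A transcript (and from the prover's post-Phase~A state). This is by construction in Definition~\ref{def:canonical_form}, where $\ib$ is sampled by the verifier only after Phase~A has concluded and only conditionally on $\flag=\continue$; I would state this explicitly to justify the factorization $\Pr(\trans,\ib)=\Pr(\trans)\Pr(\ib)$ that drives the whole derivation. Once that independence is pinned down, no further obstacle remains, and the lemma follows from the two equalities above.
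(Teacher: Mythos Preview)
Your proposal is correct and takes essentially the same approach as the paper: unfold the Bell-mapped distribution over transcripts, apply Born's rule, and regroup using linearity of the trace so that the definition of $\state^{\oa\mid\ia}$ appears. The paper's proof is terser (it starts directly from the conditional rather than the joint), but the content is the same bookkeeping argument you outline.
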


        \begin{proof}
            By definition,
            \begin{equation}
              \state^{\oa\mid\ia}
              =
              \sum_{\tau:\,\imap(\tau)=\ia,\;\omap(\tau)=\oa}
              \Pr(\tau \mid \imap(\tau)=\ia)\,\state^{\tau}\,.
            \end{equation}
            Hence, for any $\ib$ and $\ob$,
            \begin{equation}
              P(\oa,\ob \mid \ia,\ib)
              =
              \sum_{\tau:\,\imap(\tau)=\ia,\;\omap(\tau)=\oa}
              \Pr(\tau \mid \imap(\tau)=\ia)\,
              \tr\!\left[ \state^{\tau}\,\measby \right]
              =
              \tr\!\left[ \state^{\oa\mid\ia}\,\measby \right]\,.
            \end{equation}
            If $\pi(\ia,\ib)$ denotes the (arbitrary) input distribution, then
            \begin{equation}
              P(\ia,\ib,\oa,\ob)
              =
              \pi(\ia,\ib)\,\tr\!\left[ \state^{\oa\mid\ia}\,\measby \right]\,.
            \end{equation}
        \end{proof}

        \begin{lemma}[Convexity of the computational quantum set]
            \label{lemma:computational_quantum_convexity}
            Let~$P_0$ and~$P_1$ be Bell-mapped distributions induced by two QPT provers interacting with a canonical verifier, and let~$q \in [0,1]$ be an efficiently computable real number.
            Then, for each~$\security \in \N$, there exists a QPT prover~$\tilde{\P}$ such that the Bell-mapped distribution~$\tilde{P}_\security$ induced by the interaction of~$\tilde{\P}$ with the verifier satisfies
            \begin{equation}
                \delta\left( \tilde{P}_\security,\; q P_0 + (1 - q) P_1 \right) \leq \negl(\security) \;.
            \end{equation}
        \end{lemma}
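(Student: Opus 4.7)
The plan is to build $\tilde{\P}$ by \emph{external mixing}: on any invocation, $\tilde{\P}$ first samples a classical bit $c\in\{0,1\}$ with $\Pr(c=0)$ equal to an efficient rational approximation $\tilde q$ of $q$, stores $c$ on a dedicated internal register, and then faithfully simulates $\P_0$ (if $c=0$) or $\P_1$ (if $c=1$) throughout both Phase~A and Phase~B of the canonical protocol. Because the verifier treats the prover as a black box and the coin $c$ is drawn once, independently of all protocol messages, the transcript produced by $\tilde\P$ is distributed as the convex mixture $\tilde q\,\mathsf{T}_0 + (1-\tilde q)\,\mathsf{T}_1$, where $\mathsf{T}_i$ is the transcript distribution of $(\V,\P_i)$.

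First I would verify that $\tilde\P$ is QPT. Both $\P_0$ and $\P_1$ are QPT by hypothesis, and since $q$ is efficiently computable one may produce a rational $\tilde q$ with $|\tilde q-q|\le 2^{-\security}$ in time $\poly(\security)$ and then draw a Bernoulli bit of bias $\tilde q$ in polynomial time by the standard binary-expansion sampler on $\poly(\security)$ fair coin flips. Hence $\tilde\P$ runs in quantum polynomial time, possibly with advice equal to the concatenation of the advices of $\P_0$ and $\P_1$. Next, I would observe that the Bell-mapped distribution of Definition~\ref{def:bell_mapped_distribution} is an \emph{affine functional of the transcript distribution}: it is obtained by pushing $\mathsf{T}$ through $(\imap,\omap)$ and the independently sampled Phase~B variables $(\ib,\ob)$, operations that preserve convex combinations. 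Consequently,
\begin{equation}
    \tilde P_\security \;=\; \tilde q\, P_0 \;+\; (1-\tilde q)\, P_1 \; .
\end{equation}

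Finally, comparing with the target convex combination and invoking the triangle inequality together with $\delta(P_0,P_1)\le 1$,
\begin{equation}
    \delta\!\bigl(\tilde P_\security,\; q P_0+(1-q) P_1\bigr)
    \;=\; |\tilde q-q|\cdot \delta(P_0,P_1)
    \;\le\; 2^{-\security}
    \;=\; \negl(\security) \; .
\end{equation}
The argument is essentially bookkeeping; the only delicate point is the efficient preparation of a Bernoulli bit of bias close to an arbitrary real $q$, which is precisely what the efficient-computability hypothesis is tailored for. Note that no hiding or signaling argument is required here, since the lemma asserts only total-variation closeness of $\tilde P_\security$ to the convex combination, not a priori membership of the latter in $\quantumcomp{\compLeakage}$; the hiding condition would, if needed, be transferred to the mixture by a separate routine reduction (any QPT guessing attack on the mixture can be simulated against whichever of $\P_0,\P_1$ is chosen).
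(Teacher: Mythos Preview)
Your proposal is correct and follows essentially the same approach as the paper: construct $\tilde\P$ by sampling a Bernoulli bit with bias an efficient rational approximation $\tilde q$ of $q$, simulate the corresponding $\P_i$, observe that the Bell-mapped distribution is then the exact mixture $\tilde q P_0+(1-\tilde q)P_1$, and bound the total-variation distance to $qP_0+(1-q)P_1$ by $|\tilde q-q|\cdot\delta(P_0,P_1)\le \negl(\security)$. Your explicit remark that the Bell mapping is an affine functional of the transcript distribution is a nice clarification the paper leaves implicit.
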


        \begin{proof}
            Let~$\security\in\N$ and let~$P_0$ and~$P_1$ be the Bell-mapped distributions induced at security~$\security$ by two QPT provers~$\P_0$ and~$\P_1$, respectively.
            Since~$q$ is efficiently computable, there exists a deterministic polynomial-time algorithm that, on input $1^\security$, outputs a rational $q_\security\in[0,1]\cap\mathbb{Q}$ with
            \begin{equation}
              \left|q_\security - q\right| \leq \negl(\security) \; .
            \end{equation}            
            
            Define a hybrid QPT prover $\tilde{\P}$ as follows.
            On input $1^\security$, sample a bit $C\sim\mathrm{Bernoulli}(q_\security)$ using standard rational sampling, and then simulate $\P_C$ in its interaction with the verifier $\V$.
            Because $q_\security$ has only polynomially many bits, this sampling runs in polynomial time, so $\tilde{\P}$ is QPT.
            
            Let $\tilde{P}_\security$ denote the Bell-mapped distribution induced by the interaction of $\tilde{\P}$ with $\V$ at security $\security$.
            Conditioning on the internal coin $C$, we have
            \begin{equation}
              \tilde{P}_\security \;=\; q_\security\,P_0 \;+\; (1-q_\security)\,P_1 \; .
            \end{equation}
            Therefore,
            \begin{equation}
              \delta\!\left(\tilde{P}_\security,\; q P_0 + (1-q)P_1\right)
              \;=\; \frac{1}{2}\,\left\| (q_\security - q)\,(P_0 - P_1) \right\|_1
              \;\leq\; \left| q_\security - q \right| \cdot \delta(P_0,P_1)
              \;\leq\; \left| q_\security - q \right|
              \;\leq\; \negl(\security) \; ,
            \end{equation}
            where we used $\left\|P_0-P_1\right\|_1 = 2\,\delta(P_0,P_1)$ and $\delta(P_0,P_1)\leq 1$.
        \end{proof}

        \begin{remark}[On non-efficient mixing weights]
            If~$q$ is not efficiently computable, the construction may fall outside QPT because producing~$q_{\security}$ to negligible accuracy could take superpolynomial time or may not even be possible at all (e.g., Chaitin’s constant $\Omega$).
        \end{remark}

        Note that, a priori, the computational quantum set~$\quantumcomp{\compLeakage}$ could coincide with the classical computational set~$\localcomp{\compLeakage}$ (e.g., under assumptions that preclude any quantum advantage), in which case the corresponding computational Bell inequalities would be vacuous.
        Nonetheless, under appropriate cryptographic assumptions, $\quantumcomp{\compLeakage}$ is nontrivial: there exist canonical-form protocols and QPT strategies whose Bell-mapped distributions achieve a constant violation of~$\I$, while every PPT prover satisfies $\I(P_\security)\leq \negl(\security)$.
        The following, Subsection~\ref{sec:showcases}, demonstrates this.

    \subsection{Showcases}\label{sec:showcases}        

        In this subsection we illustrate the framework on concrete protocols.
        The goal is to show---at a high level---how to choose a Bell mapping $(\imap,\omap)$, argue the hidden-input property, and then evaluate the induced distribution with a computational Bell inequality.
        Importantly, the same computational Bell inequality~$\I$ applies across the examples in the $\bellscenarioShort=(2,2,2,2)$ scenario; only the leakage parameter $\kappa$ (and the slack $\vartheta$) vary.
        This highlights the modular, plug-and-play nature of the method.

        The following lemma defines the computational Bell inequality~$\I$ for the Bell scenario $\bellscenarioShort=(2,2,2,2)$.
        That is, for any PPT prover~$\P$, if the Bell mapping~$(\imap,\omap)$ satisfies the hidden-input condition with leakage~$\compLeakage$, then the induced Bell-mapped distribution~$P_\security$ achieves at most a negligible violation of~$\I(P_\security)$.
        
        \begin{lemma}\label{lemma:showcase_ineq}
            Let~$\flexibility>0$.
            let~$\local_\flexibility$ be the MDL set for the Bell scenario~$\bellscenarioShort=(2,2,2,2)$, defined in Lemma~\ref{lemma:amdl_mdl_inequality_bound} and let~$\I_\flexibility$ be the corresponding MDL inequality defined in Equation~\eqref{eq:gisin_inequality}:
            \newcommand{\dist}{P_{\MakeUppercase{\oa\ob\ia\ib}}}
               \begin{align}
                    \I_\flexibility(P) \coloneqq
                    &\tfrac{1}{2}\qty(\tfrac{1}{2} - \compLeakage - \flexibility)\dist(0000) \nonumber \\
                    &- \tfrac{1}{2}\qty(\tfrac{1}{2} + \compLeakage + \flexibility)
                    \qty(\dist(0101) + \dist(1010) + \dist(0011)) \; .
                \end{align}
            Then, for any mapped sequence of distributions~$P_\security$ induced by a classical prover~$\P$, the functional~$\I$ defined as
            \begin{equation}\label{eq:comp_ineq_chsh_scenario}
                \I(P)
                \;\coloneqq\;
                \I_\flexibility(P) - \frac{1}{4}\frac{\compLeakage}{\compLeakage+\flexibility}
                \left(
                    \tfrac{1}{2}-\compLeakage-\flexibility
                \right)
                \;,                
             \end{equation}
             satisfies
            \begin{equation}
                \I(P_\security) \leq \negl(\security) \;.
            \end{equation}
            That is, the functional~$\I$, cannot be violated by any classical prover~$\P$ by more than a negligible amount.
        \end{lemma}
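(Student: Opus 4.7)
The plan is to chain the two structural results of Section~\ref{sec:comp_bell}. First I would recognize $\I_\flexibility$ as the CHSH MDL inequality of Claim~\ref{claim:putz2014} instantiated with the parameters $l_\flexibility, h_\flexibility$ from Lemma~\ref{lemma:amdl_mdl_inequality_bound}: substituting $|\Ia|=|\Ib|=2$ into the formulas for $l_\flexibility, h_\flexibility$ yields $l_\flexibility=\tfrac{1}{2}(\tfrac{1}{2}-\compLeakage-\flexibility)$ and $h_\flexibility=\tfrac{1}{2}(\tfrac{1}{2}+\compLeakage+\flexibility)$, which are precisely the coefficients appearing in the stated $\I_\flexibility$. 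Hence $\I_\flexibility$ is a valid MDL inequality for $\mdl{(l_\flexibility,h_\flexibility)}$, and Lemma~\ref{lemma:amdl_mdl_inequality_bound} is directly applicable.

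Next, for any classical prover $\P$, I would invoke Lemma~\ref{lemma:classical_amdl_reduction} to conclude that the Bell-mapped distribution satisfies $P_\security\in\amdl{\compLeakage+\negl(\security)}$, and then apply Lemma~\ref{lemma:amdl_mdl_inequality_bound}. To complete the bound I need to evaluate $\max_{S\in\uc}\I_\flexibility(S)$: since the only positive coefficient of $\I_\flexibility$ is on $P(0000)$, the maximizer concentrates all feasible mass on $(\oa,\ob,\ia,\ib)=(0,0,0,0)$. Because the verifier samples $\ib$ uniformly, $P(\ib=0)=1/2$ in the Bell-mapped scenario, so $P(0000)\leq 1/2$, yielding $\max_{S\in\uc}\I_\flexibility(S)=\tfrac{1}{4}(\tfrac{1}{2}-\compLeakage-\flexibility)$. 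Combined with Lemma~\ref{lemma:amdl_mdl_inequality_bound} and the fact that $\tfrac{\kappa}{\kappa+\flexibility}$ is Lipschitz at $\kappa=\compLeakage$ (using $\flexibility>0$), one obtains
\begin{equation}
    \I_\flexibility(P_\security)\;\leq\;\frac{\compLeakage}{\compLeakage+\flexibility}\cdot\tfrac{1}{4}\!\left(\tfrac{1}{2}-\compLeakage-\flexibility\right)+\negl(\security) \; ,
\end{equation}
and subtracting the explicit deterministic bound as in Equation~\eqref{eq:comp_ineq_chsh_scenario} gives $\I(P_\security)\leq\negl(\security)$.

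The main obstacle is the bookkeeping around the $\negl(\security)$ slack in the leakage. Since $l_\flexibility,h_\flexibility$---and hence the coefficients of $\I_\flexibility$ itself---depend on $\kappa$, invoking Lemma~\ref{lemma:amdl_mdl_inequality_bound} at the ``inflated'' leakage $\compLeakage+\negl(\security)$ formally tests a slightly different affine functional than the one written in the statement. Resolving this cleanly amounts to observing that (i)~$\I_\flexibility$ evaluated with $\kappa$ versus $\kappa+\negl(\security)$ differs by only $\negl(\security)$ uniformly over probability distributions, since its coefficients are affine in $\kappa$ and the $P(\cdot)$ values lie in $[0,1]$, and (ii)~by monotonicity of the MDL polytopes in $\kappa$, the inequality written with the smaller parameter is valid (up to $\negl(\security)$) for the slightly larger polytope. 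Absorbing both errors into the final $\negl(\security)$ closes the argument.
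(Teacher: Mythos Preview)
Your proposal is correct and follows essentially the same route as the paper's proof: invoke Lemma~\ref{lemma:classical_amdl_reduction} to place $P_\security$ in (the closure of) $\amdl{\compLeakage}$, apply Lemma~\ref{lemma:amdl_mdl_inequality_bound}, and compute $\max_{S\in\uc}\I_\flexibility(S)=\tfrac{1}{4}(\tfrac{1}{2}-\compLeakage-\flexibility)$ via the uniform marginal $P(\ib=0)=\tfrac{1}{2}$. The paper handles the $\negl(\security)$ slack by working in the closure of $\amdl{\compLeakage}$ directly, whereas you work at the inflated parameter $\compLeakage+\negl(\security)$ and then argue continuity of the coefficients; both are equivalent, and your explicit verification that the stated $\I_\flexibility$ coincides with Claim~\ref{claim:putz2014} at $(l_\flexibility,h_\flexibility)$ is a welcome sanity check that the paper leaves implicit.
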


        \begin{proof}
            See Appendix~\ref{proof:showcase_ineq}.
        \end{proof}

        The inequality~$\I$ defined above depends only on the Bell-mapped distribution and, in particular, does not depend on the internal structure of any specific canonical protocol or prover.
        To interpret~$\I$ as a computational Bell inequality, that is, one that cannot be violated by any efficient classical prover, we restrict attention to protocols for which the Bell mapping~$(\imap,\omap)$ satisfies the hidden-input condition with leakage~$\compLeakage$ (Definition~\ref{assumption:compLeakage}).
        
        In each case we show that, under the stated cryptographic assumptions, the Bell mapping~$(\imap,\omap)$ satisfies the hidden-input condition with leakage~$\compLeakage$; we then construct the induced Bell distribution~$P$ and show that a quantum prover violates~$\I$ beyond the classical bound, thus certifying quantumness.

        \subsubsection{Protocols based on trapdoor claw-free functions}
        
        In this subsection, we instantiate our framework using the trapdoor claw-free function (TCF) based protocol from~\cite{kahanamoku2022classically}, expressed in canonical form. (Readers who are not familiar with~\cite{kahanamoku2022classically}  should consult Figure~\ref{fig:kcvy_protocol} in Appendix~\ref{ap_sec:kcvy_protocol}).
        We define a Bell mapping for this protocol, verify that it satisfies the hidden input condition (Definition~\ref{assumption:compLeakage}), and thereby show that the computational MDL inequality~$\I$ applies to the induced Bell distribution.
        This sets the stage for analyzing honest quantum strategies that violate the inequality and thus certify quantumness under cryptographic assumptions.

        \begin{definition}\label{def:tcf_bell_mapping}
            Let $\trans = (k,\image, r, d)$ be the transcript in the protocol of~\cite{kahanamoku2022classically}, where~$k$ is the TCF key,~$\image$ is a TCF image,~$r$ and $d$  are a binary strings. (See also Appendix~\ref{ap_sec:kcvy_protocol} for a definition of TCF and Figure~\ref{fig:kcvy_protocol} for an honest implementation of the protocol).
            
            The Bell mapping is defined as
            \begin{equation}\label{eq:tcf_bell_mapping}
                \imap(\trans) = \indicator{r\cdot\preimage_0 = r\cdot \preimage_1} \; ;\quad
                \omap(\trans) =
                \begin{cases}
                    r\cdot\preimage_0 & \text{if } \imap(\trans) = 0 \\
                    d\cdot(\preimage_0 \oplus \preimage_1) & \text{else}
                \end{cases}
                \;,
            \end{equation}
            where~$\qty{\preimage_0,\preimage_1}$ are preimages of~$\image$ with respect to the function~$f_k$.
        \end{definition}

        \begin{lemma}\label{lemma:trapdoor_claw_free_leakage}
            Let~$(\V, \P)$ be verifier-prover pair performing a canonical form protocol based on trapdoor claw-free functions, where~$\P$ is a classical probabilistic polynomial-time (PPT) device.

        Suppose that~$\P$ succeeds in Phase~A of the protocol with probability at least~$1 - 2\compLeakage$.
            Then the virtual input~$\imap(\trans)$, defined via the Bell mapping in Equation~\eqref{eq:tcf_bell_mapping}, is hidden according to Definition~\ref{assumption:compLeakage}.
            That is, 
            \begin{equation}
                \left|
                \underset{\trans}{\E}~\Pr \left( \P(\trans) = \imap(\trans) \right)
                - \frac{1}{|\Ia|}
                \right| \leq \compLeakage + \negl(\security) \;,
            \end{equation}
            where~$\trans$ is the interaction transcript between~$\V$ and~$\P$.
        \end{lemma}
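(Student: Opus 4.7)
Since $\imap(\trans)\in\{0,1\}$, we have $|\Ia|=2$, so (applying the claim to both $\mathcal{A}$ and $\neg\mathcal{A}$) it suffices to prove that for every QPT adversary $\mathcal{A}$,
\begin{equation*}
\E_{\trans\mid\flag=\continue}\,\Pr\!\left(\mathcal{A}(\state^{\trans}) = \imap(\trans)\right) \;\le\; \tfrac{1}{2} + \compLeakage + \negl(\security)\;.
\end{equation*}
The plan has two stages: (i)~prove an unconditional pseudorandomness statement for $\imap(\trans)$ against any QPT predictor, using the post-quantum claw-freeness of the TCF family $\{f_k\}$; and (ii)~bridge from the unconditional to the conditional bound via the hypothesis $\Pr(\flag=\continue)\ge 1-2\compLeakage$.

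For stage (i), I would observe that $\imap(\trans)=\indicator{r\cdot(\preimage_0\oplus\preimage_1)=0}$ is precisely the Goldreich--Levin hardcore bit of the secret $\preimage_0\oplus\preimage_1$ evaluated at the verifier-chosen $r$. Assume toward contradiction that some QPT $\mathcal{A}$ predicts $\imap(\trans)$ on the unconditional distribution of transcripts with advantage $\mu(\security)$ over $1/2$ that is non-negligible. Since $r$ is sampled independently and uniformly by the verifier, running $\mathcal{A}$ on freshly uniform $r$ with the prefix $(k,\image,d)$ fixed yields a QPT predictor for $r\cdot(\preimage_0\oplus\preimage_1)$ carrying the same bias; Goldreich--Levin (quantum) list decoding then produces a QPT extractor of $\preimage_0\oplus\preimage_1$ itself. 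Combining this with a preimage $\preimage_b$ recovered from $\P$'s classical state on an accepting Phase-A transcript (by standard rewinding together with the KCVY preimage-consistency test, which forces the committed $d$ to be consistent with one specific $\preimage_b$) produces both preimages, contradicting claw-freeness of $f_k$. Therefore $\Pr_{\trans}(\mathcal{A}(\state^{\trans})=\imap(\trans))\le 1/2+\negl(\security)$ unconditionally.

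For stage (ii), let $p\coloneqq \Pr(\flag=\continue)\ge 1-2\compLeakage$ and let $q_c$ denote the conditional guessing probability on $\{\flag=\continue\}$. Consider the composite QPT predictor $\mathcal{A}'$ that outputs $\mathcal{A}(\state^\trans)$ when $\flag=\continue$ (computable from $\trans$) and a fresh fair coin otherwise; its unconditional success probability equals $p\,q_c+(1-p)/2$, which by stage~(i) is bounded by $1/2+\negl(\security)$. Rearranging yields $q_c\le 1/2+\negl(\security)/p$, and using $p\ge 1-2\compLeakage$ gives $q_c-1/2\le \compLeakage+\negl(\security)$ after absorbing lower-order terms (the case $\compLeakage\ge 1/2$ is trivial since $q_c\le 1$). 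The same reasoning applied to $\neg\mathcal{A}$ controls the other side of the absolute value in Definition~\ref{assumption:compLeakage}.

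The main obstacle is making the reduction in stage~(i) fully explicit. Concretely, one must (a)~extract a specific preimage $\preimage_b$ from an accepting classical Phase-A transcript, which relies on the precise form of KCVY's checks pinning down the preimage that the prover's classical randomness commits to; and (b)~carry the Goldreich--Levin polynomial loss through the re-randomization of $r$ while preserving the joint distribution between the predictor's advice and the prover's internal state. Both are standard but require careful bookkeeping so that a non-negligible bias in $\mathcal{A}$ survives the reduction and yields a non-negligible claw-finding advantage against $f_k$.
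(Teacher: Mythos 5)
Your proposal uses the same essential ingredients as the paper's proof: identifying $\imap(\trans)$ as a Goldreich--Levin hardcore bit of $\preimage_0\oplus\preimage_1$, rewinding the classical prover to obtain both a preimage and a GL prediction oracle, and then breaking claw-freeness. Where you differ is in how the $\compLeakage$ loss is accounted for. The paper assumes $p_\imap\ge \tfrac12+\compLeakage+\noneg$ directly, extracts $\preimage_0$ with probability exactly $1-2\compLeakage$, extracts $\preimage_1$ via GL with probability $p_{\preimage_1}>2p_\imap-1-2\mu$ (citing \cite[Lemma~1]{kahanamoku2022classically}), and then applies a union bound $\Pr(E_1\cap E_2)\ge p_{\preimage_0}+p_{\preimage_1}-1=2(\noneg-\mu)$, so that the $\compLeakage$ shows up as an additive loss from the preimage-test failure probability. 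You instead first aim at an ``unconditional'' pseudorandomness bound and then condition on $\flag=\continue$. This bundles the $1-2\compLeakage$ factor into a multiplicative loss ($q_c-\tfrac12\le \negl/p$) rather than the paper's additive one, which is slightly cleaner and even a bit stronger when $p$ is bounded away from zero; since a stronger bound implies the stated one, this is fine, though your wrap-up phrase ``gives $q_c-1/2\le\compLeakage+\negl$'' understates what the calculation actually yields.

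One conceptual caution on your stage~(i): the ``unconditional'' label is somewhat misleading, because the claw-finding reduction you sketch still needs to obtain a concrete preimage $\preimage_b$, which it can only do by re-running $\P$ until Phase~A accepts. This succeeds with probability $1-2\compLeakage$, so the hypothesis on Phase-A success is silently used in stage~(i), not only in stage~(ii). You must also be careful to keep the prover's randomness (and hence $k,\image$ and the implicit claw) fixed while re-sampling $r$ to build the GL oracle, otherwise the decoded $\preimage_0\oplus\preimage_1$ is inconsistent with the extracted $\preimage_b$; the paper's presentation makes this coupling explicit via a single rewound execution, whereas your two-stage formulation leaves it as part of the ``careful bookkeeping'' you flag. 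Once that bookkeeping is done, the reduction goes through, and your approach and the paper's reach the same conclusion by closely related arguments, differing mainly in the additive-vs-multiplicative handling of the $1-2\compLeakage$ acceptance probability.
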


        The proof of the lemma structurally follows the same reduction as in~\cite[Theorem 2]{kahanamoku2022classically}, where the goal was to bound the CHSH score of a classical prover directly.
        Here, however, we apply the reasoning in order to bound the predictability of the virtual input~$\imap(\trans)$.
        As in the analysis of other protocols using our approach, this is the \emph{only} place in which the computational assumption enters the picture.
        This exemplifies the \emph{modularity} of our methods and the \emph{fundamental} understanding it provides by pinning down the relation between the computational assumption and nonsignaling.

        \begin{proof}
            Assume towards contradiction that the hidden input condition does not hold.
            I.e., there exists a non negligible function~$\noneg$ such that
            \begin{equation}
                p_\imap\coloneqq\Pr_{\trans} (\P(\trans) = \imap(\trans)) \geq \frac{1}{|\Ia|} + \compLeakage + \noneg(\security) \;.
            \end{equation}
            We construct an adversary~$\A$, based on~$\P$, that breaks the claw-free property of the trapdoor function used in the protocol.
            
            $\A$ begins by simulating a verifier-prover interaction and challenging~$\P$ for a preimage test to receive a preimage~$\preimage\coloneqq\preimage_0$ of~$\image$ with a success probability of~$p_{\preimage_0}\coloneqq 1-2\compLeakage$.
            $\A$ then rewinds~$\P$, which is possible since~$\P$ is a PPT device, extracts an interaction transcript~$\trans=\preimage,r,d$, and challenges~$\P$ for a guess of the virtual input~$\imap(\trans)$ with a success probability of~$p_\imap$.

            Condition on the event that $\A$ both obtains a valid preimage $\preimage_0$ and correctly predicts the virtual input bit $\imap(\trans)$.
            By the definition of the Bell mapping in Equation~\eqref{eq:tcf_bell_mapping}, $\imap(\trans)$ reveals whether $r\cdot\preimage_0$ equals $r\cdot\preimage_1$.
            Since $\A$ knows $r$ (from $\trans$) and $\preimage_0$, it can compute $r\cdot\preimage_0$ and hence deduce $r\cdot\preimage_1$ via
            \begin{equation}
              r\cdot\preimage_1 =
              \begin{cases}
                r\cdot\preimage_0 & \text{if } \imap(\trans)=1 \;,\\
                1 \oplus (r\cdot\preimage_0) & \text{if } \imap(\trans)=0 \; .
              \end{cases}
            \end{equation}

            Now,~$\A$ proceeds to rewind and challenge~$\P$ for more guesses of the virtual input~$\imap(\trans)$ by querying specific choices of~$r$.
            In particular,~$\A$ is a noisy oracle to the encoding of~$\preimage_1$ under the Hadamard code.
            By Goldreich-Levin~\cite{goldreich1989hardcore}, list decoding applied to such an oracle will generate a polynomial-length list of candidates for~$\preimage_1$.
            If the noise rate of the oracle is noticebly less than~$1/2$,~$\preimage_1$ will be in the list with high probability.
            $\A$ can then iterate through the list and check which candiate satisfies~$f(\preimage_0)=f(\preimage_1)$, thereby breaking the claw-free property of the trapdoor function.

            By~\cite[Lemma 1]{kahanamoku2022classically}, for a particulate iteration of the protcol, the probability that list decoding succeeds is bounded by~$p_{\preimage_1} > 2 p_\imap-1-2\mu$, for a noticeable function~$\mu$ of our choice. 

            \begin{align}
                \Pr(\text{Guessing }\preimage_0 \cap\text{Guessing }\preimage_1) & \geq 1 - (1-p_{\preimage_0}) - (1-p_{\preimage_1}) \\
                & = -2 \compLeakage + 2 p_\imap - 1 - 2\mu \\
                & \geq -2\compLeakage + 1 + 2\compLeakage + 2\noneg - 1 -2\mu \\
                & = 2(\noneg - \mu) \; .
            \end{align}

            By choosing~$\mu = \noneg/2$, we obtain a contradiction, since~$\A$ breaks the claw-free property of the trapdoor function with non-negligible probability. 
        \end{proof}

        The honest implementation (For honest implementation see Figure~\ref{fig:kcvy_protocol} in Appendix~\ref{ap_sec:kcvy_protocol}) of the TCF based protocol violates the inequality~\eqref{eq:comp_ineq_chsh_scenario} for certain parameter choices.
        Specifically, for~$\compLeakage = 0.025$ and~$\flexibility = \compLeakage^{0.45}$, the induced Bell distribution~$P$ satisfies a constant
        \begin{equation}
            \I(P) \approx  3.7 \cdot 10^{-4} > \negl(\security) \;,
        \end{equation}
        for any security parameter~$\security$, thereby certifying quantum behavior under the computational MDL framework.

        \subsubsection{Protocols based on compiled games}

        Compiled nonlocal games~\cite{kalai2022compiled} transform a standard Bell test (e.g., CHSH) into a single-prover cryptographic protocol.
        The verifier samples inputs~$\ia$ and~$\ib$ for the two players and sends the prover an encryption~$\tilde{\ia}\!\coloneqq\!\Enc(\ia)$ of~$\ia$ under a quantum homomorphic encryption (QHE) scheme\footnote{For specific works relating QHE, see~\cite{gupte2024qfhe,brakerski2018qfhe}.} (see Appendix~\ref{ap_sec:compiled} for the definition).
        Using the homomorphic encryption scheme, the prover computes an encryption~$\widetilde{\oa}$ of the answer~$\oa$ that the respective player would output on input~$\ia$.
        The prover also receives~$\ib$ unencrypted and computes the corresponding answer~$\ob$ for the respective player.
        The important thing about the homomorphic encryption scheme, is that it allows the prover to simulate both parties in the nonlocal game \emph{without} knowing either~$\ia$ nor~$\oa$.

        In this setting, the Bell mapping is natural: define
        \begin{equation}
            \imap(\trans) \coloneqq \Dec(\widetilde{\ia})
            \qquad\text{and}\qquad
            \omap(\trans) \coloneqq \Dec(\widetilde{\oa}),
        \end{equation}
        where $\widetilde{\oa}$ is the encryption produced by homomorphic evaluation of the first player's response circuit on $\Enc(\ia)$.
        By the definition of the QHE, $\Dec(\widetilde\ia)=\ia$ and $\Dec(\widetilde{\oa})=\oa$.
        While efficient decryption uses the secret key held by the verifier, for the purposes of the Bell mapping it suffices that these are well-defined functions of the transcript; they can be viewed as applied by the referee who possesses the key, or simply as mathematically defined (possibly inefficient) maps guaranteed by correctness.
        This choice aligns exactly with the semantics of the compiled game: the virtual input is the first player's question and the virtual output is that player's answer.

        An important feature of this canonical-form protocol is that it allows the immediate translation of any quantum strategy for the original nonlocal game with a respective distribution in the set~$\quantum$, into a valid strategy of a canonical form protocol with a respective distribution in the computational quantum set~$\quantumcomp{\compLeakage}$.
        In particular, quantum violations of Bell inequalities (such as CHSH) are preserved in this setting.
        We refer the reader to~\cite{kalai2022compiled} for a full description of the compilation framework, and to~\cite{brakerski2023simple} for a detailed analysis of CHSH in this context.

        In what follows, we go beyond this prior work by allowing the homomorphic encryption scheme to leak a small amount of information, i.e.\ $\compLeakage>0$.
        This lets us study protocols whose computational soundness is slightly degraded while still exhibiting strong quantum violations.
        Because the compilation preserves any quantum strategy, we can use the  quantum strategy for standard (non-compiled) nonlocal games, from~\cite[Eq.~(6)]{putz2014}, into the compiled setting.
        That strategy violates the MDL inequality stated in Claim~\ref{claim:putz2014}, and consequently gives a compiled quantum strategy that violates the computational inequality in Equation~\eqref{eq:comp_ineq_chsh_scenario}.

        Using the strategy from~\cite[Eq.~(6)]{putz2014}, for example, with~$\compLeakage=0.02$ and~$\flexibility=\compLeakage^{0.19}$, for every security parameter~$\security$, we obtain a constant violation of
        \begin{equation}
            \I(P) \approx 1.45 \cdot 10^{-6} > \negl(\security) \;.
        \end{equation}

\section{Computational-SoC hierarchy}\label{sec:comp_soc_hier}

    In Section~\ref{sec:com_soc}, we showed how a verifier-prover interaction in the canonical protocol can be reformulated in terms of a computational SoC.
    This abstraction provides a cleaner and more conceptual view of tests of quantumness based on computational assumptions.
    As illustrated by the showcases in Section~\ref{sec:showcases}, this approach allows for sharper distinctions between classical (PPT) and quantum (QPT) provers.

    We now build on this perspective to develop tools inspired by nonlocal games, adapted to the computational setting.
    In particular, we define a hierarchy of semidefinite relaxations, a computational analogue of the NPA hierarchy~\cite{navascues2008npa,pironio2010npa2}, designed to approximate the set of correlations achievable by efficient quantum provers \emph{from the outside}.
    Each level of the hierarchy defines a computationally sound outer relaxation of~$\quantumcomp{\compLeakage}$, capturing all feasible QPT strategies while potentially including additional points.
    This hierarchy will be used to prove a computational version of Tsirelson's bound and to derive entropy bounds for canonical protocols.

    There are existing works~\cite{cui2025seqnpa,klep2025seqnpa} that use hierarchy-based relaxations in a similar spirit, within the setting of compiled non-local games.
    Our framework applies to a broader class of canonical form protocols and, in addition, differs in two key ways: (i)~the mechanism by which nonsignaling is enforced, and (ii)~we explicitly relax nonsignaling by allowing bounded virtual-input leakage (i.e., limited signaling).
    
    \subsection{Computational nonsignaling}

    A central difference between the standard device-independent and computational settings lies in how nonsignaling is enforced.
    In traditional Bell scenarios, nonsignaling follows from the bipartite structure of the system: each party's measurements act on separate subsystems, so their local outputs cannot depend on the other party's input.
    In the NPA hierarchy, this is captured either through tensor products or by imposing commutation between measurements on different parties.
    In our setting, where the prover is a single device, nonsignaling cannot be enforced information-theoretically---the prover's internal state typically contains enough information to recover the virtual input.
    Instead, as we demonstrated, signaling is prevented computationally: QPT provers cannot efficiently determine~$\imap(\trans)$.
    The following lemma formalizes this approximate nonsignaling behavior.

    \begin{lemma}[Computational  Nonsignaling]\label{lemma:computational_nonsignaling}
        Assume that the virtual input~$\ia = \imap(\trans)$ is distributed uniformly over~$\Ia$.       
        Then, there exists a computational signaling parameter~$\compSignaling = \O(\compLeakage)$ such that for all~$\ia, \ia' \in \Ia$ and all binary-output QPT algorithms~$\A$ with advice, we have
        \begin{equation}
            \left|
                \Pr\left( \A(\state^{\trans}) = 1 \mid \imap(\trans) = \ia \right) -
                \Pr\left( \A(\state^{\trans}) = 1 \mid \imap(\trans) = \ia' \right)
            \right| \leq \compSignaling + \negl(\security) \; .
        \end{equation}
    \end{lemma}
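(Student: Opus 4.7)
The plan is to reduce distinguishability between the conditional distributions to the hidden-input guarantee of Definition~\ref{assumption:compLeakage} by running the supposed distinguisher as a guessing subroutine. Fix a binary-output QPT $\A$ (with polynomial advice) and two values $\ia,\ia' \in \Ia$, and define
\begin{equation}
    p \;\coloneqq\; \Pr(\A(\state^\trans)=1 \mid \imap(\trans)=\ia), \qquad
    q \;\coloneqq\; \Pr(\A(\state^\trans)=1 \mid \imap(\trans)=\ia').
\end{equation}
By exchanging the roles of $\ia$ and $\ia'$ if necessary, it suffices to bound $p-q$ when $p \geq q$.

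I then build a QPT predictor $\A'$ that takes $(\ia,\ia')$ as constant-sized advice: on input $\state^\trans$ it runs $\A$ once, outputs $\ia$ if the returned bit is $1$, and $\ia'$ otherwise. Clearly $\A'$ is still QPT with polynomial advice. The routine calculation to perform is this: because $\A'$ never outputs any label outside $\{\ia,\ia'\}$ and $\imap(\trans)$ is uniform over $\Ia$ by hypothesis, conditioning on the value of $\imap(\trans)$ and summing gives
\begin{equation}
    \E_\trans \Pr\!\left(\A'(\state^\trans)=\imap(\trans)\right)
    \;=\; \frac{p + (1-q)}{|\Ia|},
\end{equation}
so the advantage of $\A'$ over uniform guessing equals exactly $(p-q)/|\Ia|$.

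To finish, apply the hidden-input bound of Definition~\ref{assumption:compLeakage} to $\A'$, combined with Jensen's inequality $\left|\E_\trans(\cdot)\right| \leq \E_\trans\left|\cdot\right|$, to obtain $(p-q)/|\Ia| \leq \compLeakage + \negl(\security)$, hence $|p-q| \leq |\Ia|\cdot(\compLeakage + \negl(\security))$. Setting $\compSignaling \coloneqq |\Ia|\cdot\compLeakage$, which is $\O(\compLeakage)$ since the Bell scenario fixes $|\Ia|$ as a constant independent of $\security$, yields the claim. There is no substantive obstacle in the argument; the only care needed is to verify that $\A'$ genuinely belongs to the class of algorithms against which the hidden-input property is stated, namely a QPT algorithm with polynomial (in fact constant) advice, so that Definition~\ref{assumption:compLeakage} applies directly. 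The factor $|\Ia|$ in $\compSignaling$ is essentially tight for this style of reduction: each pair $(\ia,\ia')$ contributes only a $1/|\Ia|$-fraction of the transcript mass to the predictor's advantage.
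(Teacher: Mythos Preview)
Your proof is correct and follows essentially the same approach as the paper: both construct the predictor $\A'$ that runs $\A$ and outputs $\ia$ on bit $1$ and $\ia'$ on bit $0$, then compute that its guessing advantage over $1/|\Ia|$ equals $(p-q)/|\Ia|$, and invoke Definition~\ref{assumption:compLeakage} to set $\compSignaling=|\Ia|\cdot\compLeakage$. The only cosmetic difference is that the paper frames this as a proof by contradiction while you proceed directly.
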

    
    \begin{proof}
        Assume toward contradiction that the claim does not hold for a signaling parameter~$\compSignaling\coloneqq|\Ia|\cdot\compLeakage$.
        Then there exist $\ia, \ia' \in \Ia$, a QPT algorithm~$\A$ with advice, and a non-negligible function~$\noneg(\security)$ such that
        \begin{equation}\label{eq:signaling_assumption}
            \left|
            \Pr\left( \A(\state^{\trans}) = 1 \mid \imap(\trans) = \ia \right) -
            \Pr\left( \A(\state^{\trans}) = 1 \mid \imap(\trans) = \ia' \right)
            \right| \geq \compSignaling + \noneg(\security) \; ,
        \end{equation}
        for infinitely many values of~$\security$.
    
        We construct a QPT algorithm~$\A'$ that attempts to guess $\imap(\trans)$.
        On input~$\state^{\trans}$, the algorithm~$\A'$ runs~$\A(\state^{\trans})$ and:
        \begin{itemize}[leftmargin=1.5em]
            \item outputs~$\ia$ if $\A(\state^{\trans}) = 1$,
            \item outputs~$\ia'$ otherwise.
        \end{itemize}
    
        The success probability of~$\A'$ is given by
        
        \begin{align}
            \Pr\left( \A'(\state^{\trans}) = \imap(\trans) \right)
            &= \Pr\left( \A'(\state^{\trans}) = \imap(\trans) \mid \imap(\trans) \in \{\ia, \ia'\} \right) \Pr\left( \imap(\trans) \in \{\ia, \ia'\} \right) \nonumber \\
            &\quad + \Pr\left( \A'(\state^{\trans}) = \imap(\trans) \mid \imap(\trans) \notin \{\ia, \ia'\} \right) \Pr\left( \imap(\trans) \notin \{\ia, \ia'\} \right) \; .
        \end{align}

        Conditioned on $\imap(\trans) \in \{\ia, \ia'\}$, the success probability of~$\A'$ is
        \begin{align}\label{eq:conditioned_guessing_probability}
            \Pr(\A'(\state^{\trans})=\imap(\trans)\mid\imap(\trans)\in\qty{\ia,\ia'})
            & = \Pr(\imap(\trans)=\ia\mid\imap(\trans)\in\qty{\ia,\ia'})
            \Pr(\A(\state^{\trans})=\imap(\trans)\mid\imap(\trans)=\ia) \nonumber \\
            &\quad +
            \Pr(\imap(\trans)=\ia'\mid\imap(\trans)\in\qty{\ia,\ia'})
            \Pr(\A(\state^{\trans})=\imap(\trans)\mid\imap(\trans)=\ia')\;.
        \end{align}
        Assume WLOG that the virtual inputs are symmetrized through public randomness sampled by the verifier and that is part of the transcript.
        I.e.,~$\Pr(\imap(\trans)=\ia)={1}/{|\Ia|}$.
        Then, Equation~\eqref{eq:conditioned_guessing_probability} becomes
        \begin{equation}
            \Pr(\A'(\state^{\trans})=\imap(\trans)\mid\imap(\trans)\in\qty{\ia,\ia'})
            \!=\!
            \frac{1}{2} + \frac{1}{2}\!
            \left(
            \Pr\!\left( \A'(\state^{\trans}) = 1 \mid \imap(\trans) = \ia \right) -
            \Pr\!\left( \A'(\state^{\trans}) = 1 \mid \imap(\trans) = \ia' \right)
            \right) \; .
        \end{equation}
    
        Since $\imap(\trans)$ is uniformly distributed over~$\Ia$, we have
        \begin{equation}
            \Pr\left( \imap(\trans) \in \{\ia, \ia'\} \right) = \frac{2}{|\Ia|} \; .
        \end{equation}
    
        Thus, the advantage of~$\A'$ over random guessing is at least
        \begin{equation}
            \frac{1}{|\Ia|} \cdot \left(
            \Pr\left( \A(\state^{\trans}) = 1 \mid \imap(\trans) = \ia \right) -
            \Pr\left( \A(\state^{\trans}) = 1 \mid \imap(\trans) = \ia' \right)
            \right) \; .
        \end{equation}
    
        By Equation~\eqref{eq:signaling_assumption}, this is at least
        \begin{equation}
            \frac{1}{|\Ia|} \cdot \left( \compSignaling + \noneg(\security) \right) \; .
        \end{equation}
    
        Setting~$\compSignaling \coloneqq |\Ia| \cdot \compLeakage$ yields an advantage of at least $\compLeakage + \noneg(\security)/|\Ia|$, which contradicts Definition~\ref{assumption:compLeakage}.
    
        Since $\noneg(\security)$ is non-negligible and $|\Ia|$ is constant, the term $\noneg(\security)/|\Ia|$ remains non-negligible.
        Therefore, the lemma holds.
    \end{proof}
    
    This lemma justifies the approximate nonsignaling constraint in our hierarchy below, formally defined in the next subsection.
    It ensures that differences in acceptance probability under distinct inputs~$\ia$ and~$\ia'$ are bounded by~$\compSignaling$, which is itself controlled by the leakage~$\compLeakage$.

    \subsection{Defining the hierarchy}
        \newcommand{\bdset}{\Pi}
        \newcommand{\socH}{\mathsf{CSoC}}

        We begin by identifying the types of measurement sequences an efficient prover can apply to their internal state.
        In our setting, a prover may perform a sequence of measurements, one after another, where each measurement may depend on the outcome of the previous ones.
        These outcome-dependent strategies can be viewed as branching programs over a tree of sequential measurement steps.
        To model this structure, we define a restricted set of test operators that simulate such adaptive behavior.




        \begin{definition}[Valid adaptive measurement programs]\label{def:adaptive_measurement_paths}
            Fix a family of POVMs $\povms$ acting on a Hilbert space $\mathcal{H}$.
            For $\ell\in\N$, define the set $\bdset_\ell$ of test operators realizable by adaptive programs of depth at most $\ell$ inductively:
            \begin{itemize}
                \item \textbf{Base case:}
                    \begin{equation}
                        \bdset_0 \coloneqq \qty{ \id } \; .
                    \end{equation}
                \item \textbf{Inductive step:} given $\bdset_\ell$, set
                    \begin{equation}
                        \bdset_{\ell+1} \coloneqq
                        \left\{
                        \sum_{\ob\in S} \polynomialB^{(\ob)}\,\measby \;:\;
                        \ib\in\Ib,\; S\subseteq\Ob,\; \polynomialB^{(\ob)}\in\bdset_\ell \text{ for all } \ob\in S
                        \right\} \; .
                    \end{equation}
            \end{itemize}
        \end{definition}
        
        \noindent An operator $\polynomialB\in\bdset_\ell$ encodes a branching quantum measurement program of depth at most $\ell$: first measure the POVM for some input~$\ib$, keep only outcomes in~$S$, and---conditional on outcome~$\ob\in S$---continue with the depth-$\ell$ subprogram represented by~$\polynomialB^{(\ob)}$.
        By closing under this inductive rule,~$\bdset_\ell$ is exactly the family of Kraus operators obtained by composing the given POVMs with classical postselection and branching for at most $\ell$ rounds.
        
        If the device can implement each POVM~$\{\measby\}_{\ob}$ and perform classical control on the observed outcome, then any~$\polynomialB\in\bdset_\ell$ can be realized efficiently as an~$\ell$-step adaptive test: in round~1 measure the chosen~$\ib$, abort if the outcome is not in~$S$, otherwise record~$\ob$ and proceed with the round-2 subprogram prescribed by~$\polynomialB^{(\ob)}$; continue for at most~$\ell$ rounds and finally output~1 (accept).
        For any state~$\state$, the acceptance probability of this procedure equals
        \begin{equation}
          \tr\!\left[ \polynomialB \,\state\, \polynomialB^\dagger \right] \; ,
        \end{equation}
        so it is a well defined probability in~$[0,1]$.
        This implementation uses at most~$\ell$ POVM applications and classical branching, with running time polynomial in~$\ell$ and in the circuit sizes of the POVMs, and it does not rely on indirect or nonphysical operations.
        
        \begin{definition}[Level-$\ell$ computational-SoC hierarchy ($\socH_\ell{(\compSignaling)}$)]\label{def:computational_soc}
            Fix a level~$\ell \in \mathbb{N}$. The level-$\ell$ computational SoC relaxation, denoted~$\socH_\ell$, is defined as the set of distributions~$P$ over~$(\oa, \ob, \ia, \ib) \in \bellscenarioDistSet$ for which there exist:
            \begin{itemize}
                \item subnormalized quantum states~$\{ \state^{\oa \mid \ia} \}_{\ia \in \Ia,\, \oa \in \Oa}$,
                \item POVMs~$\{{\{\measby\}}_{\ob\in\Ob}\}_{\ib\in\Ib}$,
            \end{itemize}
            satisfying:
            \begin{enumerate}[label=(\roman*)]
                \item \textbf{Completeness:} for all~$\ib \in \Ib$,
                \begin{equation}\label{eq:csoc_i}
                    \sum_{\ob \in \Ob} \meas{\ib}{\ob} = \id \; .
                \end{equation}

                \item \textbf{Positivity:} for all~$\ia, \oa, \ib, \ob$,
                \begin{equation}\label{eq:csoc_ii}
                    \state^{\oa \mid \ia} \succeq 0
                    \quad \text{and} \quad
                    \meas{\ib}{\ob} \succeq 0 \; .
                \end{equation}

                \item \textbf{Reproduction of observed correlations:}
                \begin{equation}\label{eq:csoc_iii}
                    P(\oa, \ob, \ia, \ib) =
                    \frac{1}{|\Ia||\Ib|} \cdot
                    \tr\left( \meas{\ib}{\ob} \, \state^{\oa \mid \ia} \right) \; .
                \end{equation}

                \item \textbf{Approximate nonsignaling under adaptive strategies:} 
                For all operators~$\polynomialB \in \bdset_\ell$ and all~$\ia, \ia' \in \Ia$, we require:
                \begin{equation}\label{eq:csoc_iv}
                \left|
                    \tr( \polynomialB \state^{\ia} \polynomialB^\dagger )
                    -
                    \tr( \polynomialB \state^{\ia'} \polynomialB^\dagger )
                \right|
                \leq \compSignaling \; .
                \end{equation}
            \end{enumerate}
        \end{definition}

        We remark that our nonsignaling condition in Equation~\eqref{eq:csoc_iv}, formulated in terms of test operators from~$\bdset_\ell$, has a key operational advantage over prior work such as~\cite{klep2025seqnpa,kulpe2024boundquantumvaluecompiled}.
        Firstly, in prior works, the nonsignaling condition is enforced exactly (i.e., with~$\compLeakage = 0$).
        Secondly, and more importantly, the nonsignaling condition is stated in terms of the expectations of general noncommutative monomials in the prover's measurement operators,
        which need not correspond to physically realizable operations.
        To justify that a QPT prover can simulate such expectations,~\cite{klep2025seqnpa,kulpe2024boundquantumvaluecompiled} construct a block-encoding argument to show that certain expected values are accessible to the prover via indirect measurements.
        This adds a layer of technical overhead to the soundness proof.
        In contrast, our test operators~$\polynomialB \in \bdset_\ell$ correspond directly to physically realizable measurement programs.
        As a result, our soundness condition is justified by construction, and does not require any indirect access argument.

        The following lemma shows that the hierarchy~$\socH_\ell$ provides an outer approximation to the set of quantum correlations achievable by an efficient prover.
        We refer to this property as the ``soundness'' of the hierarchy.
        This is appropriate in our setting, since one is typically interested in worst-case guarantees, and outer approximations allow us to upper-bound the behavior of all efficient quantum strategies.
        Moreover, if one additionally imposes that the measurement operators commute, then the same structure also yields an outer approximation to the set of classical strategies.
        We refer the reader to Sections~\ref{sec:cTs} and~\ref{sec:ent_cert} for explicit examples where this property is used.
        
        \begin{lemma}[Soundness of the computational SoC hierarchy]\label{lemma:cnpa_soundness}
            Let~$(\mathcal{V}, \mathcal{P})$ be a verifier-prover pair performing a canonical form protocol (Definition~\ref{def:canonical_form}), and let~$\bellmap$ be a Bell mapping.
            Let~$P$ be the distribution over tuples~$(\oa,\ob,\ia,\ib) \in \bellscenarioDistSet$ induced by the interaction of~$(\mathcal{V}, \mathcal{P})$ and the mapping~$\bellmap$.

            Then for any level~$\ell \in \N$, the distribution~$P$ belongs to the level-$\ell$ computational NPA relaxation~$\cnpal$, up to additive error~$\negl(\security)$ and signaling parameter~$\compSignaling = \O(\compLeakage)$.
        \end{lemma}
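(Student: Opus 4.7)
The plan is to exhibit an explicit representation witnessing membership in~$\cnpal$, using precisely the ingredients the canonical protocol already supplies. For the states, take the input--output conditioned ensemble from Definition~\ref{def:virtual_state_family}, $\state^{\oa\mid\ia}=\sum_{\trans:\imap(\trans)=\ia,\omap(\trans)=\oa}\Pr(\trans\mid\imap(\trans)=\ia)\,\state^{\trans}$. For the measurements, take the QPT prover's own Phase-B POVMs~$\{\measby\}_{\ib,\ob}$. Completeness~\eqref{eq:csoc_i} and positivity~\eqref{eq:csoc_ii} then hold by construction (POVMs are complete and PSD; the~$\state^{\oa\mid\ia}$ are nonnegative mixtures of subnormalized states). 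The reproduction condition~\eqref{eq:csoc_iii} follows directly from Lemma~\ref{lem:state_representation} together with the protocol's input law: $\ib$ is sampled uniformly and independently of~$\trans$ in Phase~B, and the symmetrization assumption used in Lemma~\ref{lemma:computational_nonsignaling} makes~$\imap(\trans)$ uniform on~$\Ia$, matching the prefactor~$1/(|\Ia||\Ib|)$.

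The only substantive step is the approximate nonsignaling condition~\eqref{eq:csoc_iv}. The plan is to reduce it to Lemma~\ref{lemma:computational_nonsignaling}. Fix~$\ia,\ia'\in\Ia$ and~$\polynomialB\in\bdset_\ell$. By Definition~\ref{def:adaptive_measurement_paths}, $\polynomialB$ is built inductively from the very POVMs~$\{\measby\}$ the prover already implements, combined with classical branching and postselection. I will describe a QPT adversary~$\A_\polynomialB$ that, given~$\state^{\trans}$ together with the classical side information, executes this adaptive program: at each of the at most~$\ell$ rounds it measures the POVM dictated by the current subprogram, reads the classical outcome~$\ob$, aborts (outputs~$0$) if~$\ob\notin S$, and otherwise recurses into the branch~$\polynomialB^{(\ob)}$; if all rounds complete it outputs~$1$. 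By construction,
\begin{equation}
    \Pr\bigl(\A_\polynomialB(\state^{\trans})=1\,\bigm|\,\imap(\trans)=\ia\bigr)
    \;=\;
    \tr\bigl(\polynomialB\,\state^{\ia}\,\polynomialB^{\dagger}\bigr),
\end{equation}
and similarly for~$\ia'$. Invoking Lemma~\ref{lemma:computational_nonsignaling} with~$\A_\polynomialB$ yields~$\bigl|\tr(\polynomialB\state^{\ia}\polynomialB^\dagger)-\tr(\polynomialB\state^{\ia'}\polynomialB^\dagger)\bigr|\le \compSignaling+\negl(\security)$ with~$\compSignaling=|\Ia|\cdot\compLeakage=\O(\compLeakage)$, which is exactly the bound required by~\eqref{eq:csoc_iv} up to the advertised additive~$\negl(\security)$ slack.

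The main obstacle is making sure the reduction in the previous paragraph is genuinely QPT and does not smuggle in non-physical operations. Two points need care. First, the adversary~$\A_\polynomialB$ must run in polynomial time; this is ensured because~$\bdset_\ell$ is built \emph{only} from the prover's own POVMs with classical branching, rather than from abstract noncommutative monomials---precisely the design advantage of Definition~\ref{def:adaptive_measurement_paths} over sequential-NPA constructions, which would otherwise require an indirect block-encoding argument to justify accessing such expectations. Second, one must justify that the conditional probabilities produced by~$\A_\polynomialB$ really correspond to~$\tr(\polynomialB\state^{\ia}\polynomialB^\dagger)$ on the right ensemble: this holds because~$\state^{\ia}=\sum_\oa\state^{\oa\mid\ia}$ by Definition~\ref{def:virtual_state_family} is exactly the transcript-averaged state of the prover conditioned on~$\imap(\trans)=\ia$, which is the object controlled by Lemma~\ref{lemma:computational_nonsignaling}. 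Once these two points are settled, the four conditions of Definition~\ref{def:computational_soc} are verified and the lemma follows.
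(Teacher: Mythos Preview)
Your proposal is correct and follows essentially the same approach as the paper: exhibit the states~$\state^{\oa\mid\ia}$ of Definition~\ref{def:virtual_state_family} and the prover's Phase-B POVMs as witnesses, verify~\eqref{eq:csoc_i}--\eqref{eq:csoc_iii} directly (the latter via Lemma~\ref{lem:state_representation}), and obtain~\eqref{eq:csoc_iv} by realizing each~$\polynomialB\in\bdset_\ell$ as a QPT adaptive test and invoking Lemma~\ref{lemma:computational_nonsignaling}. The only cosmetic difference is that the paper packages the last step as a separate Lemma~\ref{lemma:virtual_input_indistinguishability}, whereas you inline the construction of~$\A_\polynomialB$ and the reduction in full.
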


        \begin{proof}
            Fix a canonical-form interaction $(\V,\P)$ and Bell mapping~$\bellmap$, and let $P$ be the induced Bell-mapped distribution.
            Use as witnesses the states $\{\state^{\oa\mid\ia}\}_{\ia,\oa}$ from Definition~\ref{def:virtual_state_family} and the Phase~B POVMs $\{\measby\}_{\ib,\ob}$.
            
            (i) Completeness and (ii) Positivity are immediate: $\sum_{\ob}\meas{\ib}{\ob}=\mathbbm{I}$ for each $\ib$, every $\meas{\ib}{\ob}\succeq 0$, and each $\state^{\oa\mid\ia}$ is a convex combination of positive states, hence $\state^{\oa\mid\ia}\succeq 0$.
            
            (iii) Reproduction follows from Lemma~\ref{lem:state_representation}:
            \begin{equation}
            P(\ia,\ib,\oa,\ob)\;=\;\frac{1}{|\Ia||\Ib|}\,\Tr\!\left[\measby\,\state^{\oa\mid\ia}\right]\,.
            \end{equation}
            
            (iv) For any $\polynomialB\in\bdset_\ell$, Definition~\ref{def:adaptive_measurement_paths} ensures $\polynomialB$ is a physically realizable depth-$\ell$ adaptive test built from $\{\measby\}$.
            Hence, for all $\ia,\ia'\in\Ia$,
            \begin{equation}
            \left|
            \Tr\!\left[\polynomialB\,\state^{\ia}\,\polynomialB^\dagger\right]
            -
            \Tr\!\left[\polynomialB\,\state^{\ia'}\,\polynomialB^\dagger\right]
            \right|
            \leq \compSignaling+\negl(\security)
            \end{equation}
            by Lemma~\ref{lemma:virtual_input_indistinguishability}, with $\compSignaling=\O(\compLeakage)$.
            
            Therefore, every~$P \in \quantumcomp{\compLeakage}$ lies in the closure of~${\socH_\ell(\compSignaling)}$ with~$\compSignaling=\O(\compLeakage)$.
        \end{proof}
        
        \begin{lemma}[Approximate nonsignaling of adaptive strategies]\label{lemma:virtual_input_indistinguishability}
            Let~$(\V, \P)$ be a verifier-prover pair performing a canonical form protocol, and let~$\bellmap$ be a Bell mapping.
            Assume the hidden input condition (Definition~\ref{assumption:compLeakage}) holds with leakage~$\compLeakage$.
            Let~$\povms$ denote the prover's measurement operators, and let~$\polynomialB \in \bdset_\ell$ be a valid test operator of depth at most~$\ell$.

            Then for all~$\ia, \ia' \in \Ia$, we have:
            \begin{equation}
                \left|
                    \tr( \polynomialB \state^{\ia} \polynomialB^\dagger ) -
                    \tr( \polynomialB \state^{\ia'} \polynomialB^\dagger )
                \right| \leq \compSignaling + \negl(\security) \; ,
            \end{equation}
            for some signaling parameter~$\compSignaling = \O(\compLeakage)$.
        \end{lemma}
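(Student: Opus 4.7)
The plan is to reduce this lemma to the general computational nonsignaling statement of Lemma~\ref{lemma:computational_nonsignaling} by exhibiting a QPT binary-output algorithm $\A$ whose acceptance probability on $\state^{\trans}$, conditioned on $\imap(\trans)=\ia$, coincides with $\tr(\polynomialB\,\state^{\ia}\,\polynomialB^\dagger)$. Concretely, first I would unfold the definition $\state^{\ia} = \sum_{\oa} \state^{\oa\mid\ia}$ from Definition~\ref{def:virtual_state_family} to rewrite it as the conditional average $\state^{\ia} = \E_{\trans}\!\left[ \state^{\trans} \mid \imap(\trans)=\ia\right]$, so that by linearity of trace
\begin{equation}
    \tr\!\left( \polynomialB\,\state^{\ia}\,\polynomialB^\dagger \right)
    \;=\; \E_{\trans}\!\left[\, \tr\!\left( \polynomialB\,\state^{\trans}\,\polynomialB^\dagger \right) \,\Big|\, \imap(\trans)=\ia \right].
\end{equation}
The inner quantity is precisely the acceptance probability of the depth-$\ell$ adaptive measurement program associated with $\polynomialB$, as spelled out in the discussion following Definition~\ref{def:adaptive_measurement_paths}.

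Next I would construct $\A$ inductively along the grammar of $\bdset_\ell$. Given the inductive description $\polynomialB = \sum_{\ob\in S} \polynomialB^{(\ob)}\,\meas{\ib}{\ob}$, algorithm $\A$ reads the transcript embedded in $\state^{\trans}$ as classical side information, measures the Phase-B POVM $\{\meas{\ib}{\ob}\}_{\ob}$, aborts and outputs $0$ if the outcome lies outside $S$, and otherwise recursively invokes the QPT algorithm associated with $\polynomialB^{(\ob)}$ on the post-measurement state. The base case $\bdset_0 = \{\mathbbm{I}\}$ corresponds to the trivial "always accept" algorithm. Because each POVM $\{\measby\}_{\ob}$ comes from Phase~B of a canonical-form protocol and is therefore implementable by the honest prover in polynomial time, and because the depth $\ell$ is taken to be polynomial in the security parameter, the overall $\A$ runs in QPT (with the classical transcript portion of $\state^{\trans}$ serving as advice-like input). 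By construction $\Pr(\A(\state^{\trans})=1\mid \imap(\trans)=\ia) = \tr(\polynomialB\,\state^{\ia}\,\polynomialB^\dagger)$, and analogously for $\ia'$.

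Finally, applying Lemma~\ref{lemma:computational_nonsignaling} to this $\A$ yields directly
\begin{equation}
    \left| \tr\!\left( \polynomialB\,\state^{\ia}\,\polynomialB^\dagger \right)
    - \tr\!\left( \polynomialB\,\state^{\ia'}\,\polynomialB^\dagger \right) \right|
    \;\leq\; \compSignaling + \negl(\security),
\end{equation}
with $\compSignaling = \O(\compLeakage)$ coming from the explicit $\compSignaling = |\Ia|\cdot\compLeakage$ bound established there. The main subtlety I expect is a bookkeeping one rather than a conceptual obstacle: one must be careful that the recursive composition in the inductive construction of $\A$ remains QPT, which requires $\ell$ to be polynomial in $\security$ and the POVMs to admit polynomial-size circuit descriptions; and one must verify that the conditioning on $\imap(\trans)=\ia$ in Lemma~\ref{lemma:computational_nonsignaling} is compatible with the way $\state^{\ia}$ was defined as a mixture over transcripts (which follows immediately from the definition but deserves an explicit line). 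Everything else is a direct invocation of the previously established computational nonsignaling bound.
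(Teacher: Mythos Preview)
Your proposal is correct and follows the same approach as the paper: both reduce the claim to Lemma~\ref{lemma:computational_nonsignaling} by observing that the adaptive test operator $\polynomialB\in\bdset_\ell$ can be implemented as a QPT binary-output distinguisher on $\state^{\trans}$. The paper's proof is a single sentence to this effect, while you spell out the inductive construction of the algorithm and the unfolding of $\state^{\ia}$ as a conditional average over transcripts; these are exactly the details the paper leaves implicit.
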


        \begin{proof}
            This follows directly from Lemma~\ref{lemma:computational_nonsignaling}: 
            a distinguisher between $\state^\ia$ and $\state^{\ia'}$ via any adaptive operator~$\polynomialB$ 
            would yield a QPT distinguisher between hidden inputs, contradicting the hidden-input assumption. 
        \end{proof} 

        One may wonder whether the hierarchy converges to the exact set as $\ell\rightarrow\infty$.
        In practice (as will be shown by the examples bellow), the converges and its rate do not matter as we usually get strong, even tight, results from low levels of the hierarchy.
        Nevertheless, this question is of mathematical nature and was studied in the context of nonlocal games~\cite{navascues2008convergent,pironio2010npa2}.
        In the context of protocols with computational assumptions, similar questions were addressed for protocols based on compiled games~\cite[Theorem~6.1]{kulpe2024boundquantumvaluecompiled} for the case~$\compLeakage=0$, which generalizes to our case.
        In what follows, we do not need to assume anything regarding the convergence of the hierarchy.
        We get provably tight results already in the second level of the hierarchy.

    \subsection{Application: computational Tsirelson's bound}\label{sec:cTs}

        In this subsection, we illustrate how the computational-SoC hierarchy can be used to upper-bound the CHSH score achievable by a QPT prover in a canonical protocol, under computational leakage~$\compLeakage$.
        This yields a computational analogue of Tsirelson's bound: a leakage-dependent upper limit on the quantum value, grounded in computational hardness rather than physical commutation.

        We formalize this idea as a semidefinite program (SDP) that optimizes the CHSH value over the level-$\ell$ relaxation~$\socH_\ell$ of the computational-SoC hierarchy.
        The SDP enforces a constraint on computational signaling: expectations of valid test operators (see Definition~\ref{def:adaptive_measurement_paths}) must be approximately independent of the virtual input~$\ia$.
        The allowed deviation is governed by the signaling parameter~$\compSignaling = \O(\compLeakage)$, as discussed in Section~\ref{sec:comp_soc_hier}.
        We find that the bounds produced by this SDP capture the optimal classical and quantum CHSH values under computational leakage.
        This is illustrated in Figure~\ref{fig:kcvy_tsirelson}, which compares our results against known analytical bounds.
        All SDPs were modeled and solved using the \texttt{ncpol2sdpa} library~\cite{ncpol2sdpa}.

        Figure~\ref{prog:comp_chsh} defines the SDP used to compute the computational Tsirelson bound.
        The program maximizes the CHSH winning probability over a collection of state-measurement pairs~$\{ \state^{\oa \mid \ia} \}$ and~$\{ \meas{\ib}{\ob} \}$, subject to standard positivity and normalization constraints, as well as approximate nonsignaling.
        The constraint labeled ``Approximate signaling'' enforces computational input-independence: for each valid test operator~$\polynomialB$ of degree at most~$\ell$, the expectation~$\tr(\polynomialB^\dagger \polynomialB \state^{\ia})$ must not vary significantly between different inputs~$\ia, \ia' \in \Ia$.
        This models the fact that~$\ia$ is hidden under a computational assumption, as discussed in Section~\ref{sec:comp_soc_hier}.
        The CHSH objective is written in correlator form using~$(-1)^{\ia \cdot \ib + \oa + \ob}$, consistent with the canonical game.

        \begin{figure}
            \centering
            \begin{program}
                \textbf{Program: Computational CHSH SDP (level-$\ell$)}
                
                \vspace{-1em}
                
                \begin{align*}
                \text{Variables:} \quad
                & \{ \state^{\oa \mid \ia} \}_{\ia, \oa}, \quad \{ \meas{\ib}{\ob} \}_{\ib, \ob} \\[0.5em]
                \text{Objective:} \quad
                & \text{maximize } \sum_{\ia, \ib, \oa, \ob}
                {(-1)}^{\ia \cdot \ib + \oa + \ob}
                \cdot \tr\left( \meas{\ib}{\ob} \state^{\oa \mid \ia} \right) \\[0.5em]
                \text{Subject to:} \quad
                & \state^{\oa \mid \ia} \succeq 0 \\
                & \tr\left( \sum_{\oa} \state^{\oa \mid \ia} \right) = 1 \quad \text{for all } \ia \\
                & \meas{\ib}{\ob} \succeq 0 \\
                & \sum_{\ob} \meas{\ib}{\ob} = \id \quad \text{for all } \ib \\[0.5em]
                \text{Approximate signaling:} \quad
                & \left|
                \tr\Big( \polynomialB^\dagger \polynomialB \state^{\ia} \Big) -
                \tr\Big( \polynomialB^\dagger \polynomialB \state^{\ia'} \Big)
                \right| \leq \compSignaling \\[-0.5em]
                & \text{~~for all } \ia, \ia' \in \Ia, \; \text{and al } \polynomialB \in \bdset_\ell
                \end{align*}
            \end{program}
            \caption{Level-$\ell$ computational CHSH semidefinite program.}
            \label{prog:comp_chsh}
        \end{figure}

        \begin{figure}[t]
            \centering
            \includegraphics[scale=0.5]{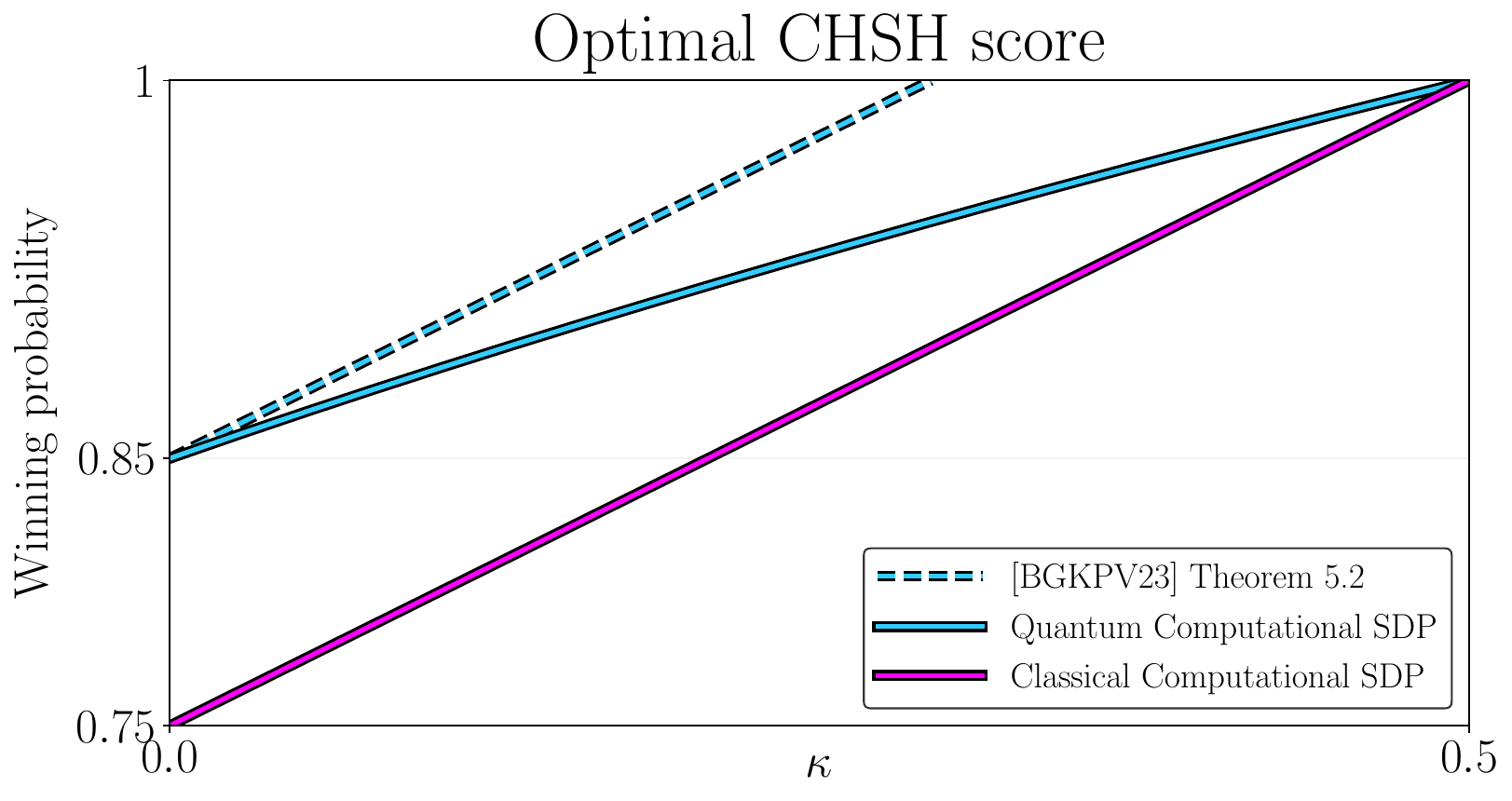}
            \caption[Computational Tsirelson bounds vs leakage.]
            {
                \footnotesize
                Computational Tsirelson bounds for the CHSH game as a function of the leakage parameter~$\compLeakage$.
                The cyan curve shows the optimal winning probability over the computational quantum set, computed using the level-2 relaxation~$\socH_{\ell=2}$.
                The dashed cyan line is the analytic upper bound from~\cite[Theorem~5.2]{brakerski2023simple}, derived specifically for the KCVY protocol.
                The magenta curve shows the optimal winning probability over the computational classical set, also computed via~$\socH_{\ell=2}$.
                It exactly matches the classical bound from~\cite[Theorem~2]{kahanamoku2022classically}.
                As~$\compLeakage \to 0$, the quantum and classical curves converge to the standard Tsirelson and classical CHSH values, $\cos^2(\pi/8) \approx 0.85$ and~$0.75$ respectively.
                As~$\compLeakage \to 0.5$, both bounds converge to~$1$.
            }
            \label{fig:kcvy_tsirelson}
        \end{figure}

        \paragraph{Optimal classical bound.}
            The magenta curve in Figure~\ref{fig:kcvy_tsirelson} represents the maximum CHSH score achievable by any classical prover in a canonical protocol, given computational leakage~$\compLeakage$.
            This bound is computed via the level-2 relaxation of the hierarchy~$\socH_2$, with the additional constraint that the measurement operators commute.
            The result exactly matches the analytic classical bound derived in~\cite[Theorem~2]{kahanamoku2022classically}, confirming that the SDP captures the classical behavior precisely.

            As expected, the winning probability increases monotonically with~$\compLeakage$:
            At~$\compLeakage = 0$, one recovers the standard local value of the CHSH game,~$0.75$.
            As~$\compLeakage \to 0.5$, classical provers can fully reconstruct the virtual input and simulate any strategy, approaching the maximum winning probability of~$1$.

        \paragraph{Optimal quantum bound.}
            The cyan curve shows the CHSH score achievable by a QPT prover in a canonical protocol, subject to leakage~$\compLeakage$.
            This is again computed via~$\socH_2$, with signaling bounded by~$\compSignaling = 2\compLeakage$ and no restriction to commutative measurements.
            The resulting values outperform the analytic bound from~\cite[Theorem~5.2]{brakerski2023simple}, derived for the specific KCVY protocol, despite our SDP being protocol-agnostic.\footnote
            {
                The original statement in~\cite[Theorem~5.2]{brakerski2023simple} claims a violation of the CHSH inequality by quantum provers that scales as~$\O(\compLeakage^{0.5})$.
                However, as noted in private correspondence with the authors, the method in that proof actually yields a linear bound~$\O(\compLeakage)$, which is stronger.
                We plot the correct (linear) rate here.
            }

            This suggests that in the $(2,2,2,2)$ Bell scenario, protocol-specific structure does not appear to yield stronger bounds.

            It suffices to identify a virtual input~$X$ hidden under leakage~$\compLeakage$, and the hierarchy captures the optimal quantum behavior.
            The SDP limit matches the standard Tsirelson bound~$\cos^2(\pi/8) \approx 0.8535$ as~$\compLeakage \to 0$.
            As leakage increases, the quantum bound also increases and asymptotically reaches~$1$.

        \paragraph{Why level 2 is necessary and sufficient.}
            The need for level-2-style constraints is motivated by an attack described in~\cite{natarajan2023bounding}, where a quantum prover performs two specific measurements in sequence and uses the results to recover the virtual input~$\ia$.
            In particular, they show that if the prover can measure both~$\provermeasurement_{0}$ and~$\provermeasurement_{1}$ adaptively, it can learn~$\ia$ with non-negligible probability, violating the computational hiding assumption.\footnote{\cite{natarajan2023bounding} discusses the compiled nonlocal game with the Bell scenario of~$\bellscenarioShort=(2,2,2,2)$ and~$\compLeakage=0$.}
            While their analysis does not use the language of our hierarchy, it shows that allowing even limited sequential access to measurements can compromise soundness.
            This provides evidence that level~2 of our computational-SoC hierarchy enforces a sufficiently strong constraint: test operators in~$\bdset_2$ model precisely this kind of adaptive access, and allow us to rule out such attacks via approximate nonsignaling constraints.

            To illustrate this concretely, we consider the following strategy:
            define the virtual-input-conditioned states~$\state^{\oa \mid \ia} = \frac{1}{2}\dyad{\oa, \ia}$ and measurements
            \begin{equation}
                \measby = \sum_{\oa, \ia} \indicator{\oa\oplus\ob = \ia \cdot \ib} \dyad{\oa, \ia} \;,
            \end{equation}
            where~$\indicator{\cdot}$ denotes the indicator function.
            This strategy achieves a perfect CHSH score, and the measurement~$\measby$ alone leaks no information about~$\ia$.
            However, if the measurement outcome~$\oa$ is revealed after the interaction, then~$\ia$ becomes fully determined.
            This form of leakage, undetectable at level~1, is naturally ruled out at level~2, due to the structure of sequential test operators~$\polynomialB \in \bdset_2$.

    \subsection{Application: entropy certification}\label{sec:ent_cert}
       
        We now demonstrate how the computational-SoC hierarchy can be used to certify entropy generated by a quantum prover in a canonical form protocol.
        Figure~\ref{fig:entropy} shows a lower bound on the conditional min-entropy of the prover's output~$B$, given fixed verifier input~$Y = 0$, fixed Bell-mapped values~$X = 0$, $A = 0$, and adversarial side information~$E$.
        The bounds were computed by optimizing over the level-$\ell=2$ relaxation~$\socH_\ell$ with~$\compLeakage=0$ constraints on leakage and signaling.
        Despite the protocol-specific nature of the Bell mapping, the resulting entropy curve matches the standard CHSH entropy bounds -- confirming that our computational framework faithfully captures quantum unpredictability.

        \begin{figure}[ht]
            \centering
            \includegraphics[scale=0.5]{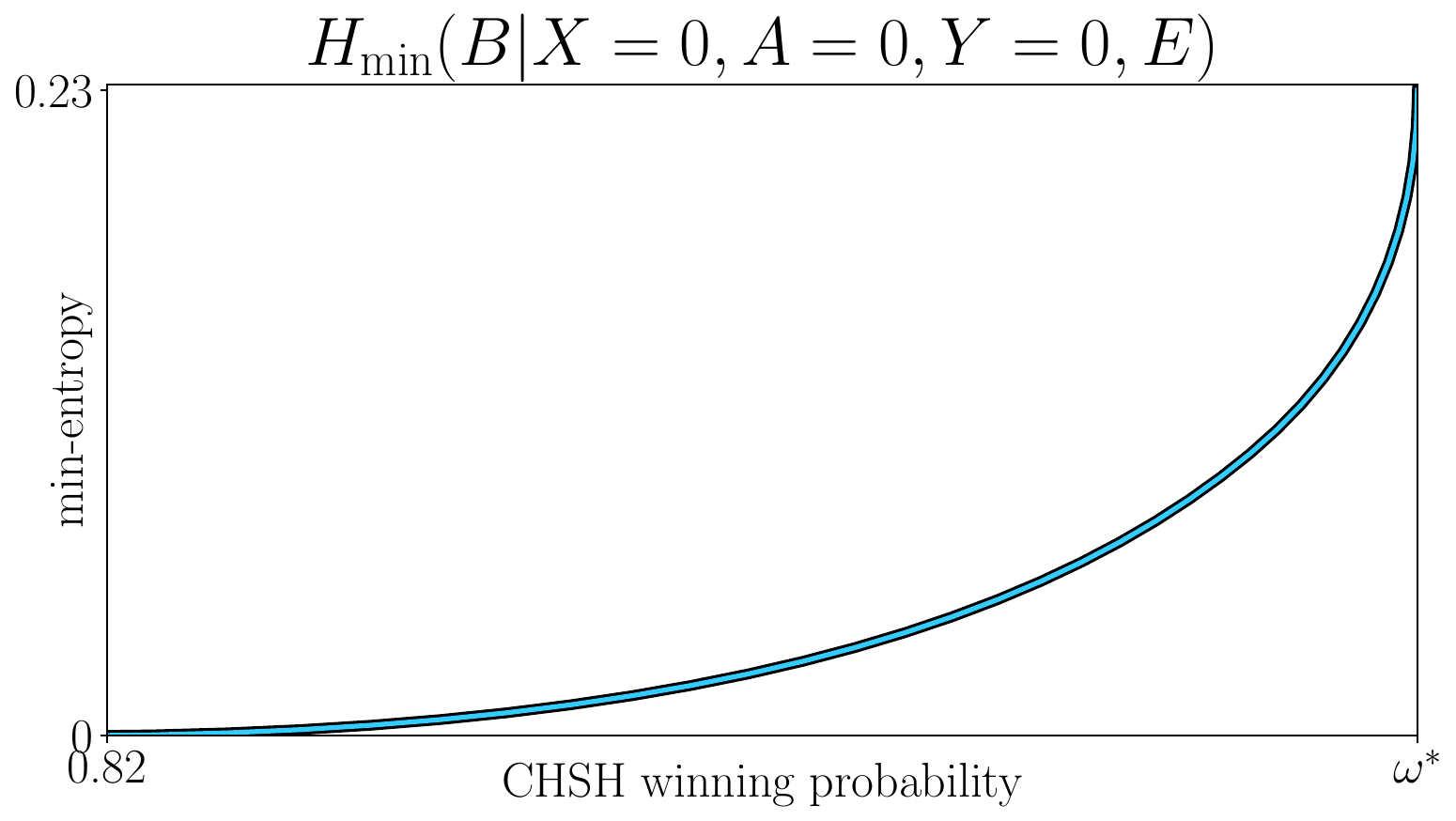}
            \caption
            {
                \footnotesize
                Conditional min-entropy $H_{\min}(B \mid X = 0, A = 0, Y = 0, E)$ as a function of the CHSH winning probability.
                The entropy quantifies the unpredictability of the prover's response~$B$ given fixed Bell-mapped values~$X = 0$, $A = 0$, verifier input~$Y = 0$, and adversarial side information~$E$.
                The curve was computed using the computational-SoC hierarchy under appropriate constraints on leakage and signaling, and matches known entropy values from the standard CHSH scenario.
            }
            \label{fig:entropy} 
        \end{figure}

        It is worth emphasizing why this specific min-entropy quantity is relevant in our setting.
        In standard (non-computational) Bell scenarios, one typically analyzes~$H_{\min}(B \mid X = 0, Y = 0)$, where~$B$ is one party's output conditioned on fixed inputs.
        However, in the canonical protocol setting, if the transcript~$\trans$ is assumed to be public, then both~$X = \imap(\trans)$ and~$A = \omap(\trans)$ become computable from~$\trans$, at least for an inefficient adversary.
        As a result, a computational adversary could in principle obtain this side information, even if not efficiently.
        To account for this, we condition not only on~$Y = 0$ but also on the Bell-mapped values~$X = 0$ and~$A = 0$.
        This leads us to analyze~$H_{\min}(B \mid X = 0, A = 0, Y = 0, E)$, which quantifies the unpredictability of the prover's response~$B$ in the presence of all information potentially exposed by the protocol\footnote{Although this particular conditional entropy is not commonly analyzed in standard device-independent settings, it has appeared in other contexts; see, for example,~\cite{fu2018localrandomness}.}.

        The optimization described above certifies a \emph{single-round} lower bound on the conditional min-entropy $H_{\min}(B \mid X{=}0, A{=}0, Y{=}0, E)$.
        Since $H_{\min}(\cdot) \leq H(\cdot)$, this also yields a lower bound on the conditional von Neumann entropy $H(B \mid X{=}0, A{=}0, Y{=}0, E)$.
        An alternative avenue is to target the von Neumann entropy directly, for example via the SDP framework of Brown et al.~\cite{brown2021randomness}.
        Adapting that objective to our computational-SoC constraints—particularly approximate nonsignaling with leakage and adaptive measurements—remains an interesting open question; we do not claim feasibility here.
        Either way, any per-round von Neumann bound is exactly what the Entropy Accumulation Theorem (EAT) requires to lift single-round guarantees to $n$-round smooth min-entropy.
        Moreover, prior work~\cite{merkulov2023entropy} established that EAT applies in the computational single-device setting we consider, so such bounds can be used directly as min-tradeoff functions for finite-size guarantees.

\section{Summary and open questions}

        We have introduced a framework that connects cryptographic assumptions to Bell inequalities via a canonical form protocol for single-prover interactive protocols and a Bell mapping that embeds them into a virtual Bell scenario.
        This allows us to define a computational analogue of the space of correlations, and to study classical and quantum behaviors under computational leakage using semidefinite programming.
        Our proposed computational-SoC hierarchy captures approximate nonsignaling constraints enforced through physically realizable measurements, and yields tight bounds on CHSH-like inequalities even in the presence of leakage.
        We demonstrate that this hierarchy subsumes known analytic bounds in the trapdoor-claw-free setting, and provides a pathway for translating device-independent tools to cryptographic protocols.

        We list several open questions.

        \paragraph{Minimal assumptions for quantum advantage.}
        The hidden-input condition is sufficient to upper bound the classical computational set~$\localcomp{\compLeakage}$.
        However, this condition alone need not imply a separation between the classical and quantum computational sets.
        A priori it could be that $\quantumcomp{\compLeakage}=\localcomp{\compLeakage}$.
        Section~\ref{sec:showcases} exhibits cryptographic assumptions under which a separation does occur.
        It is therefore natural to ask---What are the minimal assumptions that guarantee $\quantumcomp{\compLeakage}\not\subseteq\localcomp{\compLeakage}$?\footnote{A recent work~\cite{morimae2025owp} poses a related question in a different setting.}

        \paragraph{Information---disturbance trade-offs for virtual inputs.}
        Partial predictability of the virtual input charecterized by a computational leakage parameter~$\compLeakage$ does not automatically translate into the ability to realize points in~$\quantumcomp{\compLeakage}$.
        A QPT prover that tries to guess $\imap(\trans)$ may pay a coherence cost that degrades the quantum strategy it hopes to execute later.
        Can the trade-offs between virtual-input leakage and state disturbance be formalized?

        \paragraph{Randomness and entropy certification.}
        Can tools from device-independent randomness certification, such as those developed in~\cite{brown2021randomness}, be adapted to the computational setting?
        These techniques, originally designed for ideal Bell tests, will likely yield stronger entropy bounds or more noise-tolerant protocols when combined with our Bell mapping framework.

        \paragraph{Convergence of the hierarchy.}
        Can the convergence proof of the sequential NPA hierarchy in~\cite{kulpe2024boundquantumvaluecompiled} be extended to our setting?
        Unlike their model, which enforces negligible signaling on all monomials, our hierarchy imposes approximate nonsignaling only on sequentially implementable measurements and allows~$\O(\compLeakage)$ leakage.
        Determining whether convergence still holds under these relaxed constraints would clarify the long-term behavior of our hierarchy and its relationship to computational soundness.

        \paragraph{Protocol-specific analysis.}
        One can also approached the analysis from a more protocol-specific viewpoint.
        Instead of considering all canonical-form protocols that satisfy the hidden-input condition, it could be useful to fix a concrete protocol and its verifier and then study how variations in the hiding parameter~$\compLeakage$ affect its induced computational sets.
        Indeed, given a protocol with some degree of computational hiding, one can often amplify or reduce that hiding by simple transformations---for example, by repeating the protocol and extracting additional randomness to strengthen hiding, or by modifying the verifier to leak partial information about the virtual input to weaken it.
        Framing the analysis around a fixed verifier and controlled modifications of its leakage could lead to a finer understanding of how computational hiding interacts with provable separations between~$\localcomp{\compLeakage}$ and~$\quantumcomp{\compLeakage}$ for a given protocol.


        \paragraph{Acknowledgments.} 
            We thank Peter Brown for generous assistance with the \texttt{ncpol2sdpa} library and for helping resolve several technical issues, Thomas Hahn for helpful discussions and coding support, and Thomas Vidick for insightful discussions and suggestions. We thank Efrat Gerchkovitz for her careful reading of the manuscript and feedback.
            This research was generously supported by the Peter and Patricia Gruber Award, the Koshland Research Fund and the Air Force Office of Scientific Research under award number FA9550-22-1-0391.

\appendix

\section{Supplementary Material}\label{ap_sec:supp_material}

    \begin{lemma}[Continuity of the MDL set]\label{lemma:mdl_continuity}
        Let~$P\in\mdl\varepsilon\coloneqq\mdl{(l-\varepsilon,h+\varepsilon)}$.
        There exists~$Q\in\mdl{(l,h)}$ such that
        \begin{equation}
            \delta(P,Q) = O(\varepsilon) \; .
        \end{equation}
    \end{lemma}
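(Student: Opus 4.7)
The plan is to smooth the input-conditional part of an MDL decomposition of $P$ toward the uniform distribution on $\Ia\times\Ib$, while keeping the hidden-variable weight and the local response kernels untouched. Since the MDL parameters enter Definition~\ref{def:mdl_set} only through the bounds on $P(\ia,\ib\mid\hp)$, this single modification produces a candidate $Q$ whose membership in $\mdl{(l,h)}$ reduces to verifying pointwise inequalities on the modified input-conditional, and whose closeness to $P$ reduces to a standard triangle inequality over the factorization.

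First, I would fix a decomposition $\bigl(\hpDensity,\{P(\ia,\ib\mid\hp)\},\{P_\hp(\oa\mid\ia)\},\{P_\hp(\ob\mid\ib)\}\bigr)$ witnessing $P\in\mdl{(l-\varepsilon,h+\varepsilon)}$, set $u\coloneqq 1/(|\Ia||\Ib|)$, and define
\begin{equation}
Q(\ia,\ib\mid\hp)\;\coloneqq\;(1-\lambda)\,P(\ia,\ib\mid\hp)+\lambda\,u\;,
\end{equation}
for a mixing weight $\lambda\in[0,1]$ to be chosen. The candidate $Q$ is then built from the same data, with $Q(\cdot,\cdot\mid\hp)$ replacing $P(\cdot,\cdot\mid\hp)$ in the MDL factorization. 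Any nonempty $\mdl{(l,h)}$ forces $l\leq u\leq h$, so using $l-\varepsilon\leq P(\ia,\ib\mid\hp)\leq h+\varepsilon$ one checks directly that $Q(\ia,\ib\mid\hp)\leq h$ iff $\lambda\geq \varepsilon/(h-u+\varepsilon)$ and $Q(\ia,\ib\mid\hp)\geq l$ iff $\lambda\geq \varepsilon/(u-l+\varepsilon)$. In the nondegenerate regime $l<u<h$, a suitable $\lambda=O(\varepsilon)$ satisfies both constraints simultaneously for every $(\ia,\ib,\hp)$, so $Q\in\mdl{(l,h)}$.

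Second, I would bound the total variation. Because the two joint distributions differ only through the input-conditional factor, a direct triangle inequality over the factorization gives
\begin{equation}
2\,\delta(P,Q)\;\leq\;\int d\hp~\hpDensity(\hp)\sum_{\ia,\ib}\bigl|P(\ia,\ib\mid\hp)-Q(\ia,\ib\mid\hp)\bigr|\;=\;\lambda\int d\hp~\hpDensity(\hp)\sum_{\ia,\ib}\bigl|P(\ia,\ib\mid\hp)-u\bigr|\;\leq\;2\lambda\;,
\end{equation}
where the first step folds in $\sum_{\oa,\ob}P_\hp(\oa\mid\ia)P_\hp(\ob\mid\ib)=1$ and the last step uses that $P(\cdot,\cdot\mid\hp)$ and $u$ are both probability measures on $\Ia\times\Ib$, so their $\ell_1$ distance is at most $2$. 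This yields $\delta(P,Q)\leq\lambda=O(\varepsilon)$, as required.

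The main obstacle I foresee is the degenerate boundary regime $u\in\{l,h\}$, in which mixing with uniform leaves no room inside $[l,h]$. In these cases $\mdl{(l,h)}$ collapses to the input-independent distributions, and the fix is to replace the mixture by an outright projection $P(\ia,\ib\mid\hp)\mapsto u$; the same TVD calculation then still gives $\delta(P,Q)=O(\varepsilon)$, since the hypothesis $P\in\mdl{(l-\varepsilon,h+\varepsilon)}$ already forces the conditional inputs of $P$ to be $\varepsilon$-close to uniform. The only remaining quantitative subtlety is tracking how the implicit constant in $O(\varepsilon)$ depends on the gaps $h-u$ and $u-l$, which control the minimal admissible $\lambda$ in the generic case and, for the MDL parameters $(l_\flexibility,h_\flexibility)$ used in Lemma~\ref{lemma:amdl_mdl_decomposition_one_sided}, are controlled by $\compLeakage+\flexibility$.
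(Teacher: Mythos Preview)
Your proposal is correct and follows essentially the same approach as the paper: both mix the input-conditional $P(\ia,\ib\mid\hp)$ with the uniform distribution $u=1/(|\Ia||\Ib|)$ while keeping $\hpDensity$ and the local response kernels unchanged, choose the mixing weight $\lambda$ (the paper's $q$) as $\varepsilon/(\min\{h-u,u-l\}+\varepsilon)=O(\varepsilon)$, bound $\delta(P,Q)$ by $\lambda$, and handle the degenerate boundary case $u\in\{l,h\}$ by projecting outright to uniform. Your TVD step via the $\ell_1$ bound between probability measures is a slight streamlining of the paper's pointwise bound, but the argument is otherwise identical.
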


    \begin{proof}
        We denote $\eta\coloneqq1/|\Ia|\cdot|\Ib|$.
        We for now assume $l<\eta<h$ and address the other cases later.
        Let $P\in\mdl\varepsilon$. There exists a decomposition of $P$ to hidden variables $\hp$ such that
        \begin{equation}
            P(\oa, \ob, \ia, \ib) = \int dg~P(\hp)P(\ia,\ib|\hp)P(\oa|\ia,\hp)P(\ob|\ib,\hp) \; .
        \end{equation}
        Now, we proceed to define the probability distribution~$Q\in\mdl{(l,h)}$ which will resemble~$P\in\mdl\varepsilon$ with a correction to the distributions~$P(\ia,\ib|\hp)$, via the uniform distribution, that will place it in~$\mdl{(l,h)}$.
        $Q$ will have the following decomposition to parameters~$\hp$
        \begin{equation}
            Q(\oa,\ob,\ia,\ib) = \int d\hp P(\hp) Q(\ia,\ib|\hp) Q(\oa,\ob|\ia,\ib,\hp)\;,
        \end{equation}
        where~$Q(\ia,\ib|\hp)\coloneqq~(1-q)P(\ia,\ib|\hp)+q\eta$ and $Q(\oa,\ob|\ia,\ib,\hp)\coloneqq P(\oa|\ia,\hp)P(\ob|\ib,\hp)$.
        To ensure~$Q\in\mdl{(l,h)}$, we want to choose~$q$ such that
        \begin{equation}
            (1-q)(h+\varepsilon)+q\eta\ \leq h
        \end{equation}
        and
        \begin{equation}
            (1-q)(l-\varepsilon)+q \eta\ \geq l\;.
        \end{equation}
        Denoting~$\mu\coloneqq\min\qty{h-\eta,\eta-l}$, both equations are satisfied by choosing
        \begin{equation}
            q = \frac{\varepsilon}{\mu+\varepsilon}=\frac{1}{\mu}\varepsilon+O(\varepsilon^2)\;.
        \end{equation}
        We now proceed to bound the variation distance between~$P$ and~$Q$.
        For all~$\oa,\ob,\ia,\ib$,
        \begin{align}
            & \qty|Q(\oa,\ob,\ia,\ib)-P(\oa,\ob,\ia,\ib)| \\
            = & \left| \sum_\hp P(\hp)\qty(Q(\ia,\ib|\hp)-P(\ia,\ib|\hp))P(\oa,\ob|\ia,\ib,\hp) \right| \\
            \leq & \sum_\hp P(\hp)\left| Q(\ia,\ib|\hp)-P(\ia,\ib|\hp) \right| \cdot1 \\
            =  & \sum_gP(\hp)\left| (1-q)P(\ia,\ib|\hp)+q\eta-P(\ia,\ib|\hp) \right| \\
            =  & \sum_gP(\hp)\left| -P(\ia,\ib|\hp)+\eta \right| q \\
            \leq & \sum_\hp P(\hp) q \\
            = & ~q \\
            = & O(\varepsilon)\;.
        \end{align}
        
        This covers the case where~$l<\eta<h$.
        For the remaining cases, we first note that~$\mdl{(\eta-r_0,\eta)}=\mdl{(\eta,\eta)}=\mdl{(\eta,\eta+r_1)}$ for any~$r_0,r_1\geq0$.
        W.L.O.G, we show~$\mdl{(\eta,\eta+r)}=\mdl{(\eta,\eta)}$.
        Assume towards a contradiction that given a distribution~$T\in\mdl{(\eta,\eta+r)}$, there exists~$(\ia_0,\ib_0,\hp)$ such that~$T(\ia_0,\ib_0|\hp)=\eta+\kappa$ for~$\kappa>0$.
        Then (recall that~$1/\eta$ is the number of elements~$(\ia,\ib)$),
        \begin{equation}
            1=\sum_{\ia,\ib}T(\ia,\ib|\hp)=\eta+\kappa+\sum_{(\ia,\ib)\neq(\ia_0,\ib_0)}T(\ia,\ib|\hp)\geq \eta+\kappa+\qty(\frac{1}{\eta}-1)\eta=1+\kappa\;.
        \end{equation}
        Therefore, we are to only be concerned with the continuity of the set~$\mdl{(\eta,\eta)}$.
        Hence, given a distribution~$\P\in\mdl{(\eta-\varepsilon,\eta+\varepsilon)}$, we choose the same distribution~$Q$ defined previously, with~$q=1$ and repeat similar steps to find~$\delta(P,Q)\leq \varepsilon$.
    \end{proof}

    \begin{lemma}[Continuity of the AMDL set]\label{lemma:amdl_continuity}
        Fix $\kappa\geq 0$ and let $\varepsilon>0$.
        If $P\in\amdl{\kappa+\varepsilon}$, then there exists $Q\in\amdl{\kappa}$ such that
        \begin{equation}
            \delta(P,Q)\;\le\;\frac{\varepsilon}{\kappa+\varepsilon}\;.
        \end{equation}
        In particular, for fixed $\kappa>0$, $\delta(P,Q)=O(\varepsilon)$ as $\varepsilon\to 0$.
    \end{lemma}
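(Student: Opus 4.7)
The strategy mirrors the mixing argument used in Lemma~\ref{lemma:mdl_continuity}, but adapted to the one-sided average constraint of Definition~\ref{def:amdl_set}. Given $P\in\amdl{\kappa+\varepsilon}$ with a decomposition
\begin{equation}
    P(\oa,\ob,\ia,\ib)=\int d\hp\,\hpDensity(\hp)\,P(\ia\mid\hp)\,P(\ib)\,P_\hp(\oa\mid\ia)\,P_\hp(\ob\mid\ib),
\end{equation}
I would construct $Q$ by replacing only the input-conditional $P(\ia\mid\hp)$ with the mixture
\begin{equation}
    Q(\ia\mid\hp)\;\coloneqq\;(1-q)\,P(\ia\mid\hp)\;+\;\tfrac{q}{|\Ia|}\,, \qquad q\coloneqq\frac{\varepsilon}{\kappa+\varepsilon}\,,
\end{equation}
and keeping $P(\ib)$ and the local response kernels $P_\hp(\oa\mid\ia),\,P_\hp(\ob\mid\ib)$ unchanged. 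The same hidden-variable space $\Gamma$ and the same $\hpDensity$ are reused, so $Q$ admits the factorization required by Equation~\eqref{eq:amdl_factorization}.

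\textbf{Verifying membership in $\amdl{\kappa}$.} Because the added $q/|\Ia|$ term is constant in $\ia$, the maximum distributes cleanly:
\begin{equation}
    \max_\ia Q(\ia\mid\hp)-\tfrac{1}{|\Ia|}\;=\;(1-q)\!\left(\max_\ia P(\ia\mid\hp)-\tfrac{1}{|\Ia|}\right).
\end{equation}
Taking expectations over $\hp$ and using $P\in\amdl{\kappa+\varepsilon}$,
\begin{equation}
    \E_\hp\!\left[\max_\ia Q(\ia\mid\hp)-\tfrac{1}{|\Ia|}\right]\;\le\;(1-q)(\kappa+\varepsilon)\;=\;\kappa,
\end{equation}
by the choice of $q$. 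This yields $Q\in\amdl{\kappa}$.

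\textbf{Bounding the total variation distance.} The joint distributions of $P$ and $Q$ differ only in the input-conditional factor, so a pointwise bound followed by summation gives
\begin{align}
    2\,\delta(P,Q)
    &\le \int d\hp\,\hpDensity(\hp)\sum_\ia\bigl|Q(\ia\mid\hp)-P(\ia\mid\hp)\bigr| \\
    &= q\int d\hp\,\hpDensity(\hp)\sum_\ia\bigl|\tfrac{1}{|\Ia|}-P(\ia\mid\hp)\bigr| \;\le\; 2q,
\end{align}
where the outer sums over $\oa,\ob,\ib$ collapse because $P(\ib)$ and the response kernels are shared probability distributions summing to one, and the final inequality uses that $P(\cdot\mid\hp)$ and the uniform distribution on $\Ia$ are both probability measures (so their $\ell_1$-distance is at most $2$). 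Substituting the chosen $q$ yields $\delta(P,Q)\le \varepsilon/(\kappa+\varepsilon)$, which is $O(\varepsilon)$ for fixed $\kappa>0$.

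\textbf{Anticipated difficulty.} There is no substantial obstacle: the argument is a one-line mixing construction, and the only place that could misbehave is the interaction between the maximum in Equation~\eqref{eq:amdl_constraint} and the convex combination---but this works cleanly precisely because the admixed correction is uniform in $\ia$. The only mild care point is handling the edge case $\kappa=0$, where $q=1$ and $Q$ coincides with the fully uniform-input behavior; the bound $\delta(P,Q)\le 1$ is then trivially true.
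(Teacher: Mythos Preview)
Your proposal is correct and follows essentially the same mixing construction as the paper: define $Q(\ia\mid\hp)=(1-q)P(\ia\mid\hp)+q/|\Ia|$ with $q=\varepsilon/(\kappa+\varepsilon)$, verify the AMDL constraint via the shifted maximum, and bound the total variation by the $\ell_1$ distance of the input conditionals. Your handling of the total-variation step---collapsing the sums over $\oa,\ob,\ib$ before applying $\sum_\ia|1/|\Ia|-P(\ia\mid\hp)|\le 2$---is in fact cleaner than the paper's pointwise-then-sum argument.
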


    \begin{proof}
        By membership $P\in\amdl{\kappa+\varepsilon}$, there exist a hidden-variable space $\Gamma$ with density $\hpDensity$, local response families
        $\{P_\gamma(\oa\mid\ia)\}_{\gamma\in\Gamma}$ and $\{P_\gamma(\ob\mid\ib)\}_{\gamma\in\Gamma}$, and an input law $P(\ia\mid\gamma)$ and marginal $P(\ib)$ such that
        \begin{equation}
            P(\oa,\ob,\ia,\ib)\;=\;\int d\gamma\,\hpDensity(\gamma)\,P(\ia\mid\gamma)\,P(\ib)\,P_\gamma(\oa\mid\ia)\,P_\gamma(\ob\mid\ib)\;,
        \end{equation}
        and
        \begin{equation}
            \E_\gamma\!\left[\max_{\ia}P(\ia\mid\gamma)-\frac{1}{|\Ia|}\right]\;\le\;\kappa+\varepsilon\;.
        \end{equation}
        Let $U(\ia)\coloneqq 1/|\Ia|$ be the uniform distribution on $\Ia$ and set
        \begin{equation}
            \alpha\;\coloneqq\;\frac{\varepsilon}{\kappa+\varepsilon}\in(0,1] \;.
        \end{equation}
        Define a “uniformly damped” input law
        \begin{equation}
            Q(\ia\mid\gamma)\;\coloneqq\;(1-\alpha)\,P(\ia\mid\gamma)\;+\;\alpha\,U(\ia)\;,
        \end{equation}
        and keep all other components unchanged:
        \begin{equation}
            Q(\ib)\coloneqq P(\ib)\;,\qquad
            Q_\gamma(\oa\mid\ia)\coloneqq P_\gamma(\oa\mid\ia)\;,\qquad
            Q_\gamma(\ob\mid\ib)\coloneqq P_\gamma(\ob\mid\ib)\;.
        \end{equation}
        Let $Q$ be the joint distribution generated by these choices:
        \begin{equation}
            Q(\oa,\ob,\ia,\ib)\;=\;\int d\gamma\,\hpDensity(\gamma)\,Q(\ia\mid\gamma)\,Q(\ib)\,Q_\gamma(\oa\mid\ia)\,Q_\gamma(\ob\mid\ib)\;.
        \end{equation}

        \emph{(i) $Q\in\amdl{\kappa}$.}
        For each $\gamma$,
        \begin{equation}
            \max_{\ia}Q(\ia\mid\gamma)-\frac{1}{|\Ia|}
            \;=\;\max_{\ia}\left((1-\alpha)P(\ia\mid\gamma)+\alpha U(\ia)\right)-U(\ia)
            \;\le\;(1-\alpha)\left(\max_{\ia}P(\ia\mid\gamma)-U(\ia)\right).
        \end{equation}
        Taking expectation and using $\E_\gamma[\max_{\ia}P(\ia\mid\gamma)-U(\ia)]\le\kappa+\varepsilon$,
        \begin{equation}
            \E_\gamma\!\left[\max_{\ia}Q(\ia\mid\gamma)-\frac{1}{|\Ia|}\right]
            \;\le\;(1-\alpha)\,(\kappa+\varepsilon)
            \;=\;\frac{\kappa}{\kappa+\varepsilon}\,(\kappa+\varepsilon)\;=\;\kappa\;,
        \end{equation}
        so $Q\in\amdl{\kappa}$.

        \emph{(ii) Total-variation bound.}
        Since $P$ and $Q$ differ only through replacing $P(\ia\mid\gamma)$ by $Q(\ia\mid\gamma)$, for any $(\oa,\ob,\ia,\ib)$,
        \begin{align}
            \left|Q(\oa,\ob,\ia,\ib)-P(\oa,\ob,\ia,\ib)\right|
            &=\left|\int d\gamma\,\hpDensity(\gamma)\,\bigl(Q(\ia\mid\gamma)-P(\ia\mid\gamma)\bigr)\,P(\ib)\,P_\gamma(\oa\mid\ia)\,P_\gamma(\ob\mid\ib)\right| \\
            &\le\int d\gamma\,\hpDensity(\gamma)\,\left|Q(\ia\mid\gamma)-P(\ia\mid\gamma)\right| \\
            &=\int d\gamma\,\hpDensity(\gamma)\,\alpha\,\left|U(\ia)-P(\ia\mid\gamma)\right| \\
            &\leq \alpha\;,
        \end{align}
        where we used $0\le P_\gamma(\cdot\mid\cdot),P(\ib)\le 1$ and $\left|U(\ia)-P(\ia\mid\gamma)\right|\le 1$ pointwise. Summing and dividing by $2$ gives
        \begin{equation}
            \delta(P,Q)\;\le\;\alpha\;=\;\frac{\varepsilon}{\kappa+\varepsilon}\;.
        \end{equation}
        This yields the claimed bound. In particular, for fixed $\kappa>0$ the right-hand side is $O(\varepsilon)$.
    \end{proof}

    \begin{lemma}[Continuity of Bell inequalities]\label{lemma:ineq_continuity}
        Let~$\I:\uc\rightarrow\mathbb{R}$ be an MDL inequality.
        Let~$P_0,P_1\in\uc$. Then,
        \begin{equation}\label{eq:ineq_continuity}
            \I(P_0) = \I(P_1) + \O(\delta(P_0,P_1)) \; .
        \end{equation}
    \end{lemma}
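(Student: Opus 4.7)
The plan is to exploit the affine form of Bell inequalities fixed earlier in the Remark after Definition~\ref{def:bell_inequality}. Since the paper restricts attention to affine functionals of the form
\begin{equation}
    \I(P) = \I_0 + \sum_{\oa,\ob,\ia,\ib} v_{\oa,\ob,\ia,\ib}\, P(\oa,\ob,\ia,\ib)\;,
\end{equation}
the difference $\I(P_0)-\I(P_1)$ collapses to a single sum in which the constants $\I_0$ cancel:
\begin{equation}
    \I(P_0)-\I(P_1) \;=\; \sum_{\oa,\ob,\ia,\ib} v_{\oa,\ob,\ia,\ib}\,\bigl(P_0(\oa,\ob,\ia,\ib)-P_1(\oa,\ob,\ia,\ib)\bigr)\;.
\end{equation}

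From here I would apply the triangle inequality together with the elementary H\"{o}lder-type bound
\begin{equation}
    \left|\I(P_0)-\I(P_1)\right| \;\le\; \|v\|_\infty \cdot \sum_{\oa,\ob,\ia,\ib}\left|P_0(\oa,\ob,\ia,\ib)-P_1(\oa,\ob,\ia,\ib)\right|\;,
\end{equation}
where $\|v\|_\infty \coloneqq \max_{\oa,\ob,\ia,\ib}\left|v_{\oa,\ob,\ia,\ib}\right|$. The $\ell_1$ norm on the right is exactly $2\delta(P_0,P_1)$ by the standard identity relating total variation distance to the $\ell_1$ distance between probability mass functions on the finite sample space $\bellscenarioDistSet$. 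Substituting gives $\left|\I(P_0)-\I(P_1)\right|\le 2\|v\|_\infty\,\delta(P_0,P_1)$, which is precisely the $\O(\delta(P_0,P_1))$ conclusion, with the hidden constant depending only on the fixed inequality $\I$ and not on $P_0,P_1$.

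There is no real obstacle here: the statement is essentially a restatement of Lipschitz continuity of affine functions on the probability simplex in total variation distance. The only subtle points worth making explicit in the write-up are (i) that the coefficients $v_{\oa,\ob,\ia,\ib}$ are finite because the sample space is finite (so $\|v\|_\infty<\infty$), and (ii) that the qualifier ``MDL inequality'' plays no role in the argument beyond guaranteeing the affine form---the same continuity holds for any affine Bell inequality, which is consistent with how the lemma is invoked (e.g., in Lemma~\ref{lemma:classical_amdl_reduction}, where it is used to transfer guarantees under small perturbations of leakage via Lemma~\ref{lemma:amdl_continuity}).
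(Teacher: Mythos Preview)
Your proof is correct and follows essentially the same approach as the paper: both exploit the affine form of $\I$, cancel the constant, and bound the remaining sum by a constant (depending only on the coefficients $v_{\oa\ob\ia\ib}$) times the total variation distance. The only cosmetic difference is the H\"older pairing used---you bound by $\|v\|_\infty\cdot\|P_0-P_1\|_1=2\|v\|_\infty\,\delta(P_0,P_1)$, whereas the paper bounds each $|P_0(z)-P_1(z)|\le\delta(P_0,P_1)$ pointwise to get $\|v\|_1\cdot\delta(P_0,P_1)$---but both yield the claimed $\O(\delta(P_0,P_1))$.
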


    \begin{proof}
        \begin{align}
            \qty|\I(P_0)-\I(P_1)| & = \sum_{\oa,\ob,\ia,\ib} \qty|v_{\oa\ob\ia\ib}\cdot\qty(P_0-P_1)(\oa,\ob,\ia,\ib)| \\
                & \leq \sum_{\oa,\ob,\ia,\ib} \qty|v_{\oa\ob\ia\ib}|\cdot\delta(P_0,P_1) \\
                & = \O(\delta(P_0,P_1)) \; .
        \end{align}
    \end{proof}

    \begin{proof}[Proof of Theorem~\ref{thm:cl_to_local}]
        \label{proof:cl_to_local}
            Let~$\pcomp\in\localcomp{\compLeakage}$.
            By Lemma~\ref{lemma:classical_amdl_reduction}, $\localcomp{\compLeakage}$ is a subset of the closure of $\amdl{\compLeakage}$.
            Apply Lemma~\ref{lemma:amdl_mdl_inequality_bound} with~$\local_\flexibility = \mdl{(\mdlparams)}$
            where
            \begin{equation}
                \mdlparamsExplicit \; .
            \end{equation}
            Thus, we obtain an MDL inequality $\I_\flexibility$ valid for $\local_\flexibility$ such that, for any $P\in\amdl{\compLeakage}$,
            \begin{equation}
                \I_\flexibility(P)\;\leq\; \frac{\compLeakage}{\compLeakage+\flexibility}\cdot
                \max_{S\in\uc}\I_\flexibility(S)
                \;=\; \O\!\left(\frac{\compLeakage}{\compLeakage+\flexibility}\right).
            \end{equation}
            Let $c_\flexibility \coloneqq \max_{S\in\uc}\I_\flexibility(S)$ and define the shifted functional
            \begin{equation}
                \I(P)\;\coloneqq\; \I_\flexibility(P)\;-\; c_\flexibility\cdot
                \frac{\compLeakage}{\compLeakage+\flexibility}\;.
            \end{equation}
            Since $\I_\flexibility$ is valid for $\local_\flexibility$, it follows that $\I(L)\le 0$ for all $L\in\local_\flexibility$.
            Moreover, by the bound above and continuity under closure, $\I(\pcomp)\le 0$.
            Hence $\I$ is the desired computational Bell inequality for~$\localcomp{\compLeakage}$ .
        \end{proof}

    \begin{proof}[Proof of Lemma~\ref{lemma:amdl_mdl_decomposition_one_sided}]
    \label{proof:amdl_mdl_decomposition_one_sided}
            Define $M(\gamma)\coloneqq \max_{\ia}P(\ia\mid\gamma)-\frac{1}{|\mathcal{X}|}$.
            By Markov's inequality,
            \begin{equation}
                \Pr_\gamma\left[M(\gamma)>\kappa+\vartheta\right]\le \frac{\kappa}{\kappa+\vartheta} \; .
            \end{equation}
            Let $\Gamma_\vartheta\coloneqq\{\gamma:\,M(\gamma)\le \kappa+\vartheta\}$ and decompose
            \begin{equation}
                P \;=\; (1-\alpha)\,L \;+\; \alpha\,S \; , \qquad
                \alpha \;=\; \Pr(\gamma\notin\Gamma_\vartheta)\;\le\; \frac{\kappa}{\kappa+\vartheta} \; ,
            \end{equation}
            where $L$ (resp. $S$) is the conditional distribution given $\gamma\in\Gamma_\vartheta$ (resp. $\gamma\notin\Gamma_\vartheta$).

            For any $\gamma\in\Gamma_\vartheta$ we have
            \begin{equation}
                \max_{\ia} P(\ia\mid\gamma) \;\le\; \frac{1}{|\mathcal{X}|}+\kappa+\vartheta \; .
            \end{equation}
            Hence, using $P(\ib)=1/|\mathcal{Y}|$ and $P(\ia,\ib\mid\gamma)=P(\ia\mid\gamma)P(\ib)$,
            \begin{equation}
                P(\ia,\ib\mid\gamma) \;\leq\; \frac{1}{|\mathcal{Y}|}\!\left(\frac{1}{|\mathcal{X}|}+\kappa+\vartheta\right) \; .
            \end{equation}
            For the lower bound, for any distribution on $|\mathcal{X}|$ points we have
            \begin{equation}
                \min_{\ia} P(\ia\mid\gamma) \;\ge\; 1 - (|\mathcal{X}|-1)\,\max_{\ia} P(\ia\mid\gamma) \; .
            \end{equation}
            Therefore, for $\gamma\in\Gamma_\vartheta$,
            \begin{equation}
                \min_{\ia} P(\ia\mid\gamma) \;\ge\; 1 - (|\mathcal{X}|-1)\left(\frac{1}{|\mathcal{X}|}+\kappa+\vartheta\right) \; ,
            \end{equation}
            and thus
            \begin{equation}
                P(\ia,\ib\mid\gamma) \;\ge\; \frac{1}{|\mathcal{Y}|}\left(1 - (|\mathcal{X}|-1)\left(\frac{1}{|\mathcal{X}|}+\kappa+\vartheta\right)\right) \; .
            \end{equation}
            This shows that the conditional component $L$ lies in the MDL set~$\mdl{(l_\flexibility,h_\flexibility)}$.
            The claimed decomposition follows.
        \end{proof}

        \begin{proof}[Proof for Lemma~\ref{lemma:amdl_mdl_inequality_bound}]
        \label{proof:amdl_mdl_inequality_bound}
            By Lemma~\ref{lemma:amdl_mdl_decomposition_one_sided}, any $P\in\amdl{\kappa}$ can be decomposed as
            \begin{equation}
                P \;=\; \left(1-\alpha\right)L + \alpha S \; ,
            \end{equation}
            with $\alpha \leq \kappa/(\kappa+\flexibility)$, $L\in\local_\flexibility$, and $S\in\uc$.
            
            Since $\bellInequality_\flexibility$ is affine,
            \begin{equation}
                \bellInequality_\flexibility(P)
                \;=\; (1-\alpha)\,\bellInequality_\flexibility(L) + \alpha\,\bellInequality_\flexibility(S) \; .
            \end{equation}
            By validity of the inequality, $\bellInequality_\flexibility(L)\leq 0$.
            Hence
            \begin{equation}
                \bellInequality_\flexibility(P)
                \;\le\; \alpha\cdot \max_{S\in\uc}\;\bellInequality_\flexibility(S)
                \;\le\; \frac{\kappa}{\kappa+\flexibility}\cdot \max_{S\in\uc}\;\bellInequality_\flexibility(S) \; ,
            \end{equation}
            which is $O(\kappa/(\kappa+\flexibility))$ as claimed.
        \end{proof}

\section{Examples of basic concrete protocol}\label{ap_sec:protocol}

    \begin{proof}[Proof for Lemma~\ref{lemma:showcase_ineq}]
    \label{proof:showcase_ineq}
        By Lemma~\ref{lemma:classical_amdl_reduction}, any Bell-mapped distribution~$P_\security$ induced by a classical prover belongs to the closure of~$\amdl{\compLeakage}$. 
        By Lemma~\ref{lemma:amdl_mdl_inequality_bound}, for every $P \in \amdl{\compLeakage}$ we have
        \begin{equation}
            \I_\flexibility(P) \;\le\;
            \frac{\compLeakage}{\compLeakage+\flexibility}\cdot
            \max_{S\in\uc}\;\I_\flexibility(S) \;.
        \end{equation}
        The functional $\I_\flexibility$ is maximized by setting all mass on $(\oa,\ob,\ia,\ib)=(0,0,0,0)$, which yields
        \begin{align}
            \I_\flexibility (S) & \leq
            \tfrac{1}{2}\left(\tfrac{1}{2} - \compLeakage - \flexibility \right) S(A=0,B=0,X=0,Y=0) \\
            & \leq \tfrac{1}{2}\left(\tfrac{1}{2} - \compLeakage - \flexibility \right) S(Y=0) \\
            & \leq \tfrac{1}{2}\left(\tfrac{1}{2} - \compLeakage - \flexibility \right) \tfrac{1}{2} \; .
        \end{align}
        Therefore, for every $P\in\amdl{\compLeakage}$,
        \begin{equation}
            \I_\flexibility(P) \leq 
            \frac{1}{4}\frac{\compLeakage}{\compLeakage+\flexibility}
            \qty(\tfrac{1}{2}-\compLeakage-\flexibility) \;.
        \end{equation}
        Combining with the negligible error from the closure argument, we obtain for every~$\security$,
        \begin{equation}
            \I_\flexibility(P_\security) \leq
            \frac{1}{4}\frac{\compLeakage}{\compLeakage+\flexibility}
            \qty(\tfrac{1}{2}-\compLeakage-\flexibility) + \negl(\security) \;.
        \end{equation}
        Subtracting this worst-case bound from $\I_\flexibility$ as in Equation~\eqref{eq:comp_ineq_chsh_scenario}, we conclude
        \begin{equation}
            \I(P_\security) \leq \negl(\security) \;,
        \end{equation}
        which proves the claim.
    \end{proof}

    \rnote{In each subsection you should write a few words about the type of protocol, refer to the previous papers at least.}
    
    \subsection{Protocol based on trapdoor claw-free function}\label{ap_sec:kcvy_protocol}

        We present here the TCF-based protocol from~\cite{kahanamoku2022classically}. 
        This serves as our first showcase, and we adapt the protocol to the canonical form used in our framework. 
        The underlying primitive is a \emph{trapdoor claw-free function family}, formally defined in Definition~\ref{def:trapdoor_claw_free}. 
        The corresponding honest implementation, expressed in our canonical structure, is illustrated in Figure~\ref{fig:kcvy_protocol}.

        \begin{definition}[Trapdoor claw-free function family]\label{def:trapdoor_claw_free}
            A family of functions~$\mathcal{F} = \{ f_k : \mathcal{\Preimage} \to \mathcal{\Image} \}_{k \in \mathcal{K}}$ is called a \emph{trapdoor claw-free function family} if the following conditions hold:
            \begin{enumerate}
                \item \textbf{Key generation.}
                There exists a randomized polynomial-time algorithm~$\mathtt{Gen}(1^\security)$ that outputs a key-trapdoor pair~$(k, t)$, where~$k \in \mathcal{K}$ is a public key and~$t$ is a trapdoor.

                \item \textbf{Efficient evaluation.}
                There exists a deterministic polynomial-time algorithm that, given~$k \in \mathcal{K}$ and~$\preimage \in \mathcal{\Preimage}$, computes~$f_k(\preimage)$.

                \item \textbf{Efficient inversion with trapdoor.}
                There exists a deterministic polynomial-time algorithm that, given a~$t$ and~$\image \in \mathcal{\Image}$ such that~$\image = f_k(\preimage)$ for some~$\preimage \in \mathcal{\Preimage}$, recovers the full claw.
                I.e., there are exactly two preimages~$\preimage_0$ and~$\preimage_1$ such that~$f_k(\preimage_0) = f_k(\preimage_1) = \image$ and~$\preimage_0 \ne \preimage_1$.

                \item \textbf{Claw-freeness.}
                For every probabilistic polynomial-time (PPT) adversary~$\A$, the probability that~$\A(k)$ outputs a \emph{claw} is negligible in~$\security$:
                \begin{equation}
                    \Pr_{(k,t) \leftarrow \mathtt{Gen}(1^\security)} \left[
                        \begin{array}{c}
                            (\preimage_0, \preimage_1) \leftarrow \A(k) \\
                            \text{s.t. } \preimage_0 \ne \preimage_1 \text{ and } f_k(\preimage_0) = f_k(\preimage_1)
                        \end{array}
                    \right] \leq \negl(\security) \;.
                \end{equation}
                That is, it is computationally hard to find two distinct preimages~$\preimage_0, \preimage_1$ that collide under~$f_k$, even though the claw exists.
            \end{enumerate}
        \end{definition}

        \begin{figure}[ht]
            \centering
            \includegraphics[width=0.85\textwidth]{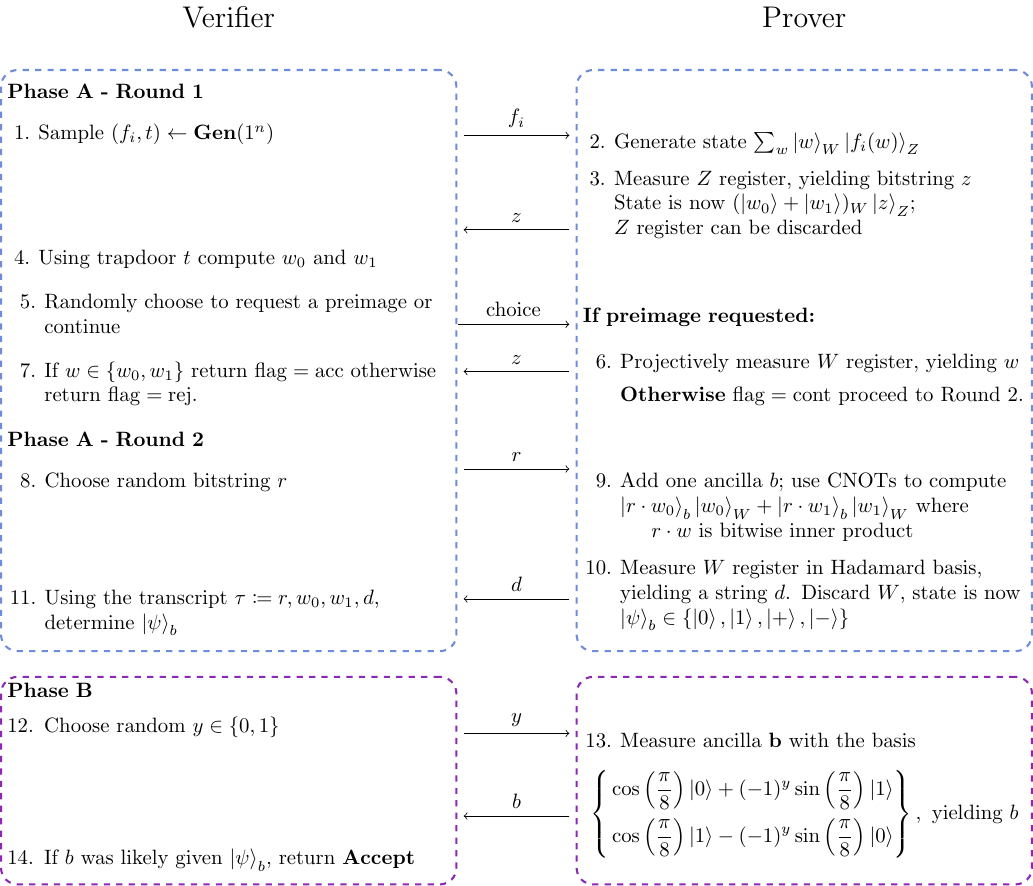}
            \caption{\footnotesize Honest implementation of the TCF based protocol canonical form.
            This figure is adapted from~\cite[Figure 1]{kahanamoku2022classically} to match the canonical protocol structure used in our framework.}
            \label{fig:kcvy_protocol}
        \end{figure}

    \subsection{Protocol based on a compiled game}\label{ap_sec:compiled}
        We present here the compiled nonlocal game protocol from~\cite{kalai2022compiled,compiledtrapdoor2024}. 
        This serves as our second showcase, and we adapt the protocol to the canonical form used in our framework.
        
        \begin{definition}[Quantum Homomorphic Encryption (QHE)]\label{def:QHE}
            A quantum homomorphic encryption scheme $\mathrm{QHE} = (\Gen, \Enc, \Eval, \Dec)$ for a class of quantum circuits~$\mathcal{C}$ is a tuple of algorithms with the following syntax:
            \begin{itemize}
                \item $\Gen$ is a PPT algorithm that takes as input the security parameter~$1^\security$ and outputs a (classical) secret key~$\mathsf{sk}$ of $\mathrm{poly}(\security)$ bits.
                \item $\Enc$ is a PPT algorithm that takes as input a secret key~$\mathsf{sk}$ and a classical input~$\ia$, and outputs a ciphertext~$\mathsf{ct}$.
                \item $\Eval$ is a QPT algorithm that takes as input a tuple $(C, \ket{\psi}, \mathsf{ct}_{\mathrm{in}})$, where
                \begin{itemize}
                    \item $C : \mathcal{H}_A \otimes (\mathbb{C}^2)^{\otimes n} \to (\mathbb{C}^2)^{\otimes m}$ is a quantum circuit,
                    \item $\ket{\psi} \in \mathcal{H}_A$ is a quantum state, and
                    \item $\mathsf{ct}_{\mathrm{in}}$ is a ciphertext corresponding to an $n$-bit plaintext.
                \end{itemize}
                $\Eval$ computes $\mathsf{ct}_{\mathrm{out}} \leftarrow \Eval_C(\ket{\psi}, \mathsf{ct}_{\mathrm{in}})$ and outputs a ciphertext~$\mathsf{ct}_{\mathrm{out}}$. If $C$ has classical output, then $\Eval_C$ must also produce classical output.
                \item $\Dec$ is a PT algorithm that takes as input the secret key~$\mathsf{sk}$ and a ciphertext~$\mathsf{ct}$, outputting a quantum state~$\ket{\varphi}$. If $\mathsf{ct}$ encodes a classical message, then $\Dec$ outputs a classical string~$\ib$.
            \end{itemize}
            The following properties are required:
            \begin{enumerate}
                \item \textbf{Correctness with Auxiliary Input.} For every~$\security \in \mathbb{N}$, every circuit $C : \mathcal{H}_A \otimes (\mathbb{C}^2)^{\otimes n} \to \{0,1\}^*$, every quantum state $\ket{\psi}_{AB} \in \mathcal{H}_A \otimes \mathcal{H}_B$, message $\ia \in \{0,1\}^n$, key $\mathsf{sk} \leftarrow \Gen(1^\security)$, and ciphertext $\mathsf{ct} \leftarrow \Enc(\mathsf{sk}, \ia)$, the outputs of the following two experiments are negligibly close in trace distance:
                \begin{description}
                    \item[Game 1.] Start with $(\ia, \ket{\psi}_{AB})$. Evaluate $C$ on $\ia$ and register~$A$, producing a classical output $\ib$, and output $(\ib, \mathrm{reg}_B)$.
                    \item[Game 2.] Start with $\mathsf{ct} \leftarrow \Enc(\mathsf{sk}, \ia)$ and $\ket{\psi}_{AB}$. Compute $\mathsf{ct}' \leftarrow \Eval_C(\ket{0}^{\otimes \mathrm{poly}(\security,n)}, \mathsf{ct})$ on register~$A$. Compute $\ib' = \Dec(\mathsf{sk}, \mathsf{ct}')$. Output $(\ib', \mathrm{reg}_B)$.
                \end{description}
        
                \item \textbf{T-Classical Security.} For any two messages $\ia_0, \ia_1$ and any classical circuit ensemble~$\mathcal{A}$ of size $\mathrm{poly}(T(\security))$,
                \begin{equation}
                    \left|\Pr\left[\mathcal{A}(\mathsf{ct}_0)=1\right] - \Pr\left[\mathcal{A}(\mathsf{ct}_1)=1\right]\right|
                    \leq \negl\left(T(\security)\right),
                \end{equation}
                where $\mathsf{sk}\leftarrow \Gen(1^\security)$, and $\mathsf{ct}_i\leftarrow\Enc(\mathsf{sk}, \ia_i)$ for $i=0,1$.
            \end{enumerate}
        \end{definition}
    
        \begin{figure}[ht]\label{fig:compiled_protocol}
            \centering
            \includegraphics[width=0.85\textwidth]{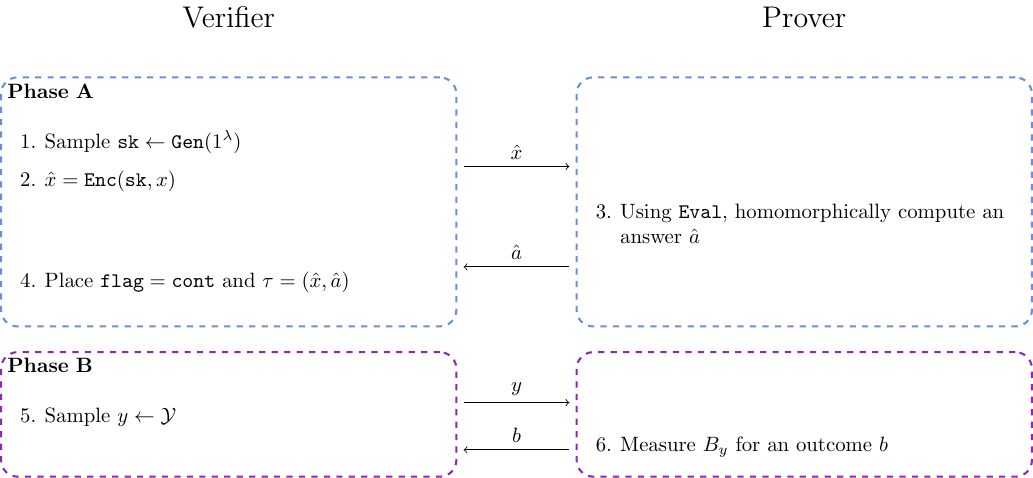}
            \caption
            {
                \footnotesize
                Honest implementation of the compiled nonlocal game from~\cite{kalai2022compiled}.
                The translation to canonical form highlights the natural decomposition of the interaction into classical preprocessing (Phase A) and measurement-based response (Phase B).
                This protocol requires a quantum homomorphic encryption scheme $\QHE = (\QHEGen, \QHEEnc, \QHEEval, \QHEDec)$.
            }
            \label{fig:kalai_protocol}
        \end{figure}

    \section{Polytopality of AMDL}

        \newcommand{\polyhedron}{\Theta}
        \newcommand{\dethp}{s}
        \newcommand{\detHp}{\mathcal{\MakeUppercase{\dethp}}}
        \begin{theorem}[Polytope structure of $\amdl{\kappa}$]\label{lemma:amdl_polytope}
            Fix a finite Bell scenario $\bellscenarioShort=(\Ia,\Ib,\Oa,\Ob)$ and $\kappa\geq 0$. 
            Then the set $\amdl{\kappa}$ of joint distributions $P(\oa,\ob,\ia,\ib)$ is a polytope.
        \end{theorem}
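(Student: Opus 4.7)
The plan is to exhibit $\amdl{\kappa}$ as the linear projection of a polyhedron cut out by finitely many affine inequalities in a finite-dimensional auxiliary space; boundedness in the probability simplex will then upgrade ``polyhedron'' to ``polytope''. First I would reduce from arbitrary local responses to deterministic ones. Let $\mathcal{D}_A$ and $\mathcal{D}_B$ be the finite sets of functions $f\colon\Ia\to\Oa$ and $g\colon\Ib\to\Ob$. Any local response family $\{P_\hp(\oa\mid\ia),P_\hp(\ob\mid\ib)\}$ decomposes as a convex combination indexed by $(f,g)\in\mathcal{D}_A\times\mathcal{D}_B$, and by refining the hidden variable $\hp$ along this decomposition we may assume, without loss of generality, that each atom deterministically encodes a pair $(f_\hp,g_\hp)$ together with an input law $q_\hp(\ia)\coloneqq P(\ia\mid\hp)$. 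Since $q_\hp$ is inherited \emph{unchanged} on each refined atom, $\E_\hp[\max_\ia q_\hp(\ia)-1/|\Ia|]$ is preserved and the AMDL constraint survives the refinement intact.

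For each deterministic pair $(f,g)$ the induced distribution is linear in $q$,
\begin{equation}
  P_{f,g,q}(\oa,\ob,\ia,\ib)
  \;=\;\tfrac{1}{|\Ib|}\,q(\ia)\,\indicator{f(\ia)=\oa}\,\indicator{g(\ib)=\ob}\; .
\end{equation}
Setting $c(f,g,q)\coloneqq\max_\ia q(\ia)-1/|\Ia|$, membership $P\in\amdl{\kappa}$ is equivalent, by finite-dimensionality of the distribution space, to the existence of finitely many triples $(f_i,g_i,q_i)$ and weights $\lambda_i\geq 0$ with $\sum_i\lambda_i=1$, $P=\sum_i\lambda_i P_{f_i,g_i,q_i}$, and $\sum_i\lambda_i c(f_i,g_i,q_i)\leq\kappa$.

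Next I would linearize the $\max$ by introducing an epigraph variable $t$. For each $(f,g)$ set
\begin{equation}
  \tilde K_{f,g}\;\coloneqq\;
  \Bigl\{(P_{f,g,q},t)\;:\;q\in\Delta(\Ia),\; t\geq q(\ia)-\tfrac{1}{|\Ia|}\text{ for every }\ia\in\Ia,\; t\leq 1-\tfrac{1}{|\Ia|}\Bigr\}\; .
\end{equation}
In the $(P,t,q)$ variables this is cut out by finitely many affine inequalities; projecting out $q$ preserves polyhedrality (Fourier--Motzkin), and the cap on $t$ together with $q\in\Delta(\Ia)$ ensures boundedness, so each $\tilde K_{f,g}$ is a polytope. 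Hence $K\coloneqq\mathrm{conv}\bigl(\bigcup_{(f,g)}\tilde K_{f,g}\bigr)$ is also a polytope, being the convex hull of finitely many polytopes.

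Finally, I would check that $\amdl{\kappa}$ equals the image of $K\cap\{t\leq\kappa\}$ under the coordinate projection $(P,t)\mapsto P$: the forward inclusion sets $t_i=c(f_i,g_i,q_i)$, and the reverse uses $c(f_i,g_i,q_i)\leq t_i$ together with $\sum_i\lambda_i t_i\leq\kappa$. Since intersection with a half-space and linear projection both preserve polyhedrality, and the image lies in the probability simplex, the result is a bounded polyhedron, i.e., a polytope. The main obstacle will be the deterministic-response reduction: verifying that refining the hidden variable to carry $(f_\hp,g_\hp)$ does not inflate $\E_\hp[\max_\ia q_\hp(\ia)]$, which holds precisely because $q_\hp$ is not altered by the refinement. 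Once this is pinned down, the remaining steps are standard polyhedral combinatorics, in the same spirit as the MDL polytope argument of~\cite{putz2014}.
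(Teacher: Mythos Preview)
Your proposal is correct and follows essentially the same approach as the paper: reduce to deterministic local responses, linearize the $\max$ in the AMDL constraint via an epigraph variable, and exhibit $\amdl{\kappa}$ as the projection of a finite-dimensional polyhedron. The paper writes down a single extended formulation with variables $w_{s,\ia}$ (encoding the joint weight $\lambda_s\,q_s(\ia)$) and epigraph variables $r_s\geq\max_\ia w_{s,\ia}$ subject to $\sum_s r_s\leq 1/|\Ia|+\kappa$, whereas you take the convex hull of per-strategy polytopes $\tilde K_{f,g}$ and then intersect with $\{t\leq\kappa\}$; these are two packagings of the same lift-and-project argument.
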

                
        \begin{proof}
            By standard arguments, we may take the local responses to be deterministic without loss of generality. 
            Let $\mathcal{F}_\playerA:=\Oa^{\Ia}$ and $\mathcal{F}_\playerB:=\Ob^{\Ib}$. 
            Let $\detHp:=\mathcal{F}_\playerA\times\mathcal{F}_\playerB$, and write $\dethp=(f_\playerA,f_\playerB)\in\detHp$.
            
            Introduce nonnegative variables $w_{\dethp,\ia}$ and $r_\dethp$ for each $\dethp\in\detHp$ and $\ia\in\Ia$. 
            Impose the linear constraints
            \begin{align}
                &\sum_{\dethp\in\detHp}\sum_{\ia\in\Ia} w_{\dethp,\ia} \;=\; 1 \;, \label{eq:norm}\\
                &\forall \dethp\in\detHp,\ \forall \ia\in\Ia:\quad r_\dethp \;\geq\; w_{\dethp,\ia} \;, \label{eq:rsgeq}\\
                &\sum_{\dethp\in\detHp} r_\dethp \;\leq\; \frac{1}{|\Ia|}+\kappa \;, \label{eq:kappabound}\\
                &\forall \oa,\ob,\ia,\ib:\quad
                P(\oa,\ob,\ia,\ib)
                \;=\; \frac{1}{|\Ib|}\sum_{\dethp=(f_\playerA,f_\playerB)\in\detHp} w_{\dethp,\ia}\ \indicator{\oa=f_\playerA(\ia)}\ \indicator{\ob=f_\playerB(\ib)}\;. \label{eq:Pfromw}
            \end{align}
            
            Let~$\polyhedron$ be the set of triples $(P,w,r)$ satisfying \eqref{eq:norm}--\eqref{eq:Pfromw}. 
            This is a polyhedron since all constraints are linear. 
            Its projection onto distributions $P$ is therefore a polyhedron.
            
            We show equality between this projection and $\amdl{\kappa}$.
            
            ($\subseteq$) 
            Given $(P,w,r)\in\polyhedron$, define a hidden variable that first samples $\dethp\in\detHp$ with probability $\lambda_\dethp:=\sum_{\ia}w_{\dethp,\ia}$. 
            Conditioned on $\dethp$, sample $\ia$ with probability $P(\ia\mid \dethp)=w_{\dethp,\ia}/\lambda_\dethp$, and sample $\ib$ uniformly. 
            Output $\oa=f_\playerA(\ia)$ and $\ob=f_\playerB(\ib)$. 
            Then \eqref{eq:Pfromw} gives exactly $P(\oa,\ob,\ia,\ib)$. 
            Moreover,
            \begin{equation}
                \E\left[\max_{\ia}P(\ia\mid \dethp)\right]
                \;=\; \sum_{\dethp}\lambda_\dethp \max_{\ia}\frac{w_{\dethp,\ia}}{\lambda_\dethp}
                \;=\; \sum_{\dethp}\max_{\ia} w_{\dethp,\ia}
                \;\leq\; \sum_{\dethp} r_\dethp
                \;\leq\; \tfrac{1}{|\Ia|}+\kappa \;,
            \end{equation}
            using \eqref{eq:rsgeq} and \eqref{eq:kappabound}. 
            Hence $P\in\amdl{\kappa}$.
            
            ($\supseteq$) 
            Conversely, take any $P\in\amdl{\kappa}$ witnessed by a distribution $\hpDensity$:
            \begin{equation}
                P \;=\; \frac{1}{|\Ib|}\int d\hp\;\hpDensity(\hp)\;P(\ia\mid\hp)\;P(\oa\mid\ia,\hp)P(\ob\mid\ib,\hp)\;.
            \end{equation}
            \cite[Theorem 2.1]{scarani2019bell} allows us to write the local distributions as a convex sum of deterministic ones.
            \begin{align}
                P & \;=\; \frac{1}{|\Ib|}\int d\hp\;\hpDensity(\hp)\;P(\ia\mid\hp)\;
                \sum_{\dethp}P(\dethp\mid\hp)\,\indicator{f_\playerA^{\dethp}(x)=a}\,\indicator{f_\playerB^{\dethp}(y)=b}\\
                 & \;=\; \frac{1}{|\Ib|} \sum_{\dethp}
                 \left(\int d\hp\;P(\dethp,\hp)\;P(\ia\mid\hp)\right)
                \indicator{f_\playerA^{\dethp}(x)=a}\,\indicator{f_\playerB^{\dethp}(y)=b}\;.
            \end{align}
            
            We denote
            \begin{equation}
                w_{\dethp,\ia}\coloneqq\int d\hp\;P(\dethp,\hp)\;P(\ia\mid\hp),\qquad
                r_\dethp\coloneqq\int d\hp\;P(\dethp,\hp)\;\max_\ia P(\ia\mid\hp)\;,
            \end{equation}
            which defines a valid tuple in the set of triples~$\polyhedron$.
            Therefore,~$P$ belongs to the projection of~$\polyhedron$ on the set of distributions.
        \end{proof}

\bibliographystyle{unsrt}
\bibliography{bib}

\end{document}